\keywords{Lambda calculus, call-by-value, B\"ohm trees, differential linear logic, Taylor expansion, program approximation}
\theoremstyle{plain} 
\newcommand{\nat}{\mathbb{N}}
\newcommand{\V}{\mathsf{v}}
\newcommand{\R}{\mathsf{r}}
\newcommand\seq\vec
\newcommand{\st}{\mid}
\newcommand{\set}[1]{\{#1\}}
\newcommand{\bnfeq}{::=}
\newcommand{\rel}[1]{\mathsf{#1}}
\newcommand{\lsv}{\lam^\sigma_v}
\newcommand{\Var}{\mathbb{V}}
\newcommand{\Val}[1][]{\mathrm{Val}^{#1}}
\newcommand{\CbN}{CbN}
\newcommand{\CbV}{CbV}
\newcommand{\comb}[1]{\mathtt{#1}}
\newcommand{\lam}{\ensuremath{\lambda}}
\newcommand{\Lam}[1][]{\Lambda^{#1}}
\newcommand{\FV}[1]{\mathrm{FV}(#1)}
\newcommand{\subst}[2]{{[}#1 := #2]}
\newcommand*{\msto}[1][\V]{\twoheadrightarrow_{#1}}
\newcommand{\BT}[2][]{\mathrm{BT}^{#1}(#2)}
\newcommand{\nf}[2][\V]{\mathrm{nf}_{#1}(#2)}
\newcommand{\hole}[1]{\llparenthesis #1 \rrparenthesis}
\newcommand{\obseq}{\equiv}
\newcommand{\App}{\mathcal{A}}
\newcommand{\Appof}[1]{\mathcal{A}(#1)}
\newcommand{\AM}{\Appof M}
\newcommand{\BTle}{\sqsubseteq}
\newcommand{\BTge}{\sqsupseteq}
\newcommand{\Lamb}{\Lambda_\bot}
\newcommand{\Sup}{\bigsqcup}
\newcommand{\hgt}[1]{\mathsf{ht}(#1)}
\newcommand\cn[1]{\mathsf{c}_{#1}}
\newcommand{\Pair}[1]{{[}#1{]}}
\newcommand\Tuple[1]{\langle #1 \rangle}
\newcommand\Eval{\mathsf{E}\,}
\newcommand{\Seq}[1]{\langle #1\rangle}
\newcommand{\code}[1]{\ulcorner#1\urcorner}
\newcommand{\Int}[1]{\llbracket #1 \rrbracket }
\newcommand{\lamr}{{\lam^\sigma_r}}
\newcommand{\lsubst}[2]{\langle #1 := [#2]\rangle}
\newcommand{\degx}[2][x]{\mathrm{deg}_{#1}(#2)}
\newcommand{\sums}[1]{\mathscr{P}_{\mathrm{f}}(#1)}
\newcommand\pow[1]{\mathscr{P}(#1)}
\newcommand{\Set}[1]{\mathcal{#1}}
\newcommand{\Te}[1]{\mathscr{T}(#1)}
\newcommand{\Ten}[1]{\mathscr{T}^\circ(#1)}
\newcommand{\bag}[1]{{[}#1{]}}
\newcommand{\NF}[2][]{\mathrm{NF}_{#1}(#2)}
\def\eg{{\em e.g.}}
\def\ie{{\em i.e.}}
\def\cf{{\em cf.}}
\begin{document}

\title[Revisiting Call-by-value {B}\"ohm trees]{Revisiting Call-by-value {B}\"ohm trees\\ in light of their {T}aylor expansion}

\author[E.~Kerinec]{Emma Kerinec}	
\address{Universit\'e de Lyon, ENS de Lyon, Universit\'e Claude Bernard Lyon 1, LIP}	
\email{emma.kerinec@ens-lyon.fr}  

\author[G.~Manzonetto]{Giulio Manzonetto}	
\address{LIPN, UMR 7030, Universit\'e Paris 13, Sorbonne Paris Cit\'e, F-93430, Villetaneuse, France}	
\email{giulio.manzonetto@univ-paris13.fr}  

\author[M.~Pagani]{Michele Pagani}	
\address{CNRS, IRIF, Universit\'e de Paris, F-75205 Paris, France}	
\email{pagani@irif.fr}  





\begin{abstract}
  \noindent The call-by-value \lam-calculus can be endowed with permutation rules, arising from linear logic proof-nets, having the advantage of unblocking some redexes that otherwise get stuck during the reduction.
We show that such an extension allows to define a satisfying notion of B\"ohm(-like) tree and a theory of program approximation in the call-by-value setting.
We prove that all \lam-terms having the same B\"ohm tree are observationally equivalent, and characterize those B\"ohm-like trees arising as actual B\"ohm trees of \lam-terms.
 
We also compare this approach with Ehrhard's theory of program approximation based on the Taylor expansion of \lam-terms, translating each \lam-term into a possibly infinite set of so-called resource terms. 
We provide sufficient and necessary conditions for a set of resource terms in order to be the Taylor expansion of a \lam-term.
Finally, we show that the normal form of the Taylor expansion of a \lam-term can be computed by performing a normalized Taylor expansion of its B\"ohm tree. 
From this it follows that two \lam-terms have the same B\"ohm tree if and only if the normal forms of their Taylor expansions coincide.
\end{abstract}

\maketitle

\begin{quote} \itshape We are honoured to dedicate this article to Corrado B\"ohm, whose brilliant pioneering work has been an inspiration to us all.
\end{quote}

\section*{Introduction}\label{S:one}

In 1968, Corrado B\"ohm published a separability theorem -- known as \emph{the B\"ohm Theorem}~-- which is nowadays universally recognized as a fundamental theorem in \lam-calculus~\cite{boehm68}.
Inspired by this result, Barendregt in 1977 proposed the definition of ``B\"ohm tree of a \lam-term''~\cite{Bare77}, a notion which played for decades a prominent role in the theory of program approximation.
The B\"ohm tree of a \lam-term $M$ represents the evaluation of $M$ as a possibly infinite labelled tree coinductively, but effectively, constructed by collecting the stable amounts of information coming out of the computation.
Equating all \lam-terms having the same B\"ohm tree is a necessary, although non sufficient, step in the quest for fully abstract models of \lam-calculus.   
  
In 2003, Ehrhard and Regnier, motivated by insights from Linear Logic, introduced the notion of ``Taylor expansion of a \lam-term'' as an alternative way of approximating \lam-terms~\cite{EhrhardR03}.
The Taylor expansion translates a \lam-term $M$ as a possibly infinite set\footnote{%
In its original definition, the Taylor expansion is a power series of multi-linear terms taking coefficients in the semiring of non-negative rational numbers.
Following~\cite{ManzonettoP11,Ehrhard12,BoudesHP13}, in this paper we abuse language and call ``Taylor expansion'' the \emph{support} (underlying set) of the actual Taylor expansion.
This is done for good reasons, as we are interested in the usual observational equivalences between \lam-terms that overlook such coefficients. 
} of multi-linear terms, each approximating a finite part of the behaviour of $M$. 
These terms populate a \emph{resource calculus}~\cite{TranquilliTh} where \lam-calculus application is replaced by the application of a term to a bag of resources that cannot be erased, or duplicated and must be consumed during the reduction.
The advantage of the Taylor expansion is that it exposes the amount of resources needed by a \lam-term
 to produce (a finite part of) a value, a quantitative information that does not appear in its B\"ohm tree.
The relationship between these two notions of program approximation has been investigated in~\cite{EhrhardR06}, where the authors show that the Taylor expansion can actually be seen as a resource sensitive version of B\"ohm trees by demonstrating that the normal form of the Taylor expansion of $M$ is  actually equal to the Taylor expansion of its B\"ohm tree.
  
The notions of B\"ohm tree and Taylor expansion have been first developed in the setting of call-by-name (\CbN) \lam-calculus~\cite{Bare}. 
However many modern functional programming languages, like OCaml, adopt a call-by-value (\CbV) reduction strategy --- a redex of shape $(\lam x.M)N$ is only contracted when $N$ is a \emph{value}, namely a variable or a \lam-abstraction.
The~call-by-value \lam-calculus $\lam_v$ has been defined by Plotkin in 1975~\cite{Plotkin75}, but its theory of program approximation is still unsatisfactory and constitutes an ongoing line of research~\cite{Ehrhard12,CarraroG14,ManzonettoPR18}.
For instance, it is unclear what should be the B\"ohm tree of a \lam-term because of the possible presence of $\beta$-redexes that get stuck (waiting for a value) in the reduction.
A paradigmatic example of this situation is the \lam-term $M = (\lam y.\Delta)(xx)\Delta$, where $\Delta = \lam z.zz$ (see~\cite{PaoliniR99,AccattoliG17}).
This term is a call-by-value normal form because the argument $xx$, which is not a value, blocks the evaluation (while one would expect $M$ to behave as the divergent term $\Omega=\Delta\Delta$).
A significant advance in reducing the number of stuck redexes has been made in~\cite{CarraroG14} where Carraro and Guerrieri, inspired by Regnier's work in the call-by-name setting~\cite{Regnier94}, introduce permutations rules~$(\sigma)$ naturally arising from the translation of \lam-terms into Linear Logic proof-nets.
Using  $\sigma$-rules, the \lam-term $M$ above rewrites in $(\lam y.\Delta\Delta)(xx)$ which in its turn rewrites to itself, thus giving rise to an infinite reduction sequence, as desired.
In~\cite{GuerrieriPR17}, Guerrieri \emph{et al.\ }show that this extended calculus $\lsv$ still enjoys nice properties like confluence and standardization, and that adding the $\sigma$-rules preserves the operational semantics of Plotkin's \CbV\ \lam-calculus as well as the observational equivalence.

In the present paper we show that $\sigma$-rules actually open the way to provide a meaningful notion of call-by-value B\"ohm trees (Definition~\ref{def:BT}).
Rather than giving a coinductive definition, which turns out to be more complicated than expected, we follow~\cite{AmadioC98} and provide an appropriate notion of \emph{approximants}, namely \lam-terms possibly containing a constant $\bot$, that are in normal form w.r.t.\ the reduction rules of $\lsv$ (\ie, the $\sigma$-rules and the restriction of $(\beta)$ to values).
In this context, $\bot$ represents the undefined value and this intuition is reflected in the definition of a preorder $\BTle$ between approximants which is generated by  $\bot\BTle V$, for all approximated values $V$.
The next step is to associate with every \lam-term $M$ the set $\Appof M$ of its approximants and verify that they enjoy the following properties: $(i)$ the ``external shape'' of an approximant of $M$ is stable under reduction (Lemma~\ref{lem:A_preserved}); $(ii)$ two interconvertible \lam-terms share the same set of approximants (\cf, Lemma~\ref{lem:AM_preserved_by_toV}); $(iii)$ the set of approximants of $M$ is directed (Lemma~\ref{prop:AMisIdeal}).
Once this preliminary work is accomplished, it is possible to define the B\"ohm tree of $M$ as the supremum of $\Appof M$, the result being a possibly infinite labelled tree $\BT{M}$, as expected.

More generally, it is possible to define the notion of (\CbV) ``B\"ohm-like'' trees as those labelled trees that can be obtained by arbitrary superpositions of (compatible) approximants.
The B\"ohm-like trees corresponding to \CbV{} B\"ohm trees of \lam-terms have specific properties, that are due to the fact that \lam-calculus constitutes a model of computation. 
Indeed, since every \lam-term $M$ is finite, $\BT{M}$ can only contain a finite number of free variables and, since $M$ represents a program, the tree $\BT{M}$ must be computable. 
In Theorem~\ref{thm:BLTiffLamDef} we demonstrate that these conditions are actually sufficient, thus providing a characterization.

To show that our notion of B\"ohm tree is actually meaningful, we prove that all \lam-terms having the same B\"ohm tree are operationally  indistinguishable (Theorem~\ref{thm:happy_ending}) and we investigate the relationship between B\"ohm trees and Taylor expansion in the call-by-value setting.
Indeed, as explained by Ehrhard in~\cite{Ehrhard12}, the \CbV{} analogues of resource calculus and of Taylor expansion are unproblematic to define, because they are driven by solid intuitions coming from  Linear Logic: rather than using the  \CbN{} translation $A\to B = \oc A\multimap B$ of intuitionistic arrow, 
it is enough to exploit Girard's so-called ``boring'' translation, which transforms $A\to B$ in $\oc(A\multimap B)$ and is suitable for \CbV.
Following~\cite{BoudesHP13}, we define a coherence relation $\coh$ between resource terms and prove that a set of such terms corresponds to the Taylor expansion of a \lam-term if and only if it is an infinite clique  having finite height.
Subsequently, we focus on the dynamic aspects of the Taylor expansion by studying its \emph{normal form}, that can always be calculated since the resource calculus enjoys strong normalization.

In~\cite{CarraroG14}, Carraro and Guerrieri propose to extend the \CbV{} resource calculus with $\sigma$-rules  to obtain a more refined normal form of the Taylor expansion $\Te{M}$ of a \lam-term $M$ --- this allows to mimic the $\sigma$-reductions occurring in $M$ at the level of its resource approximants. 
Even with this shrewdness, it turns out that the normal form of $\Te{M}$ is different from the normal form of $\Te{\BT{M}}$, the latter containing approximants that are not normal, but whose normal form is however empty (they disappear along the reduction).
Although the result from~\cite{EhrhardR06} does not hold verbatim in \CbV, we show that it is possible to define the \emph{normalized Taylor expansion $\Ten-$ of a B\"ohm tree} and prove in Theorem~\ref{thm:T_commutes_with_BT} that the normal form of $\Te M$ coincide with $\Ten{\BT{M}}$, which is the main result of the paper.
An interesting consequence, among others, is that all denotational models satisfying the Taylor expansion (\eg, the one in~\cite{CarraroG14}) equate all \lam-terms having the same B\"ohm tree.

\subsection*{Related works} 
To our knowledge, in the literature no notion of \CbV\ B\"ohm tree appears\footnote{Even Paolini's separability result in \cite{Paolini01} for \CbV{} \lam-calculus does not rely on B\"ohm trees.}.
However, there have been attempts to develop syntactic bisimulation equivalences and theories of program approximation arising from denotational models.
Lassen~\cite{Lassen05} coinductively defines a bisimulation equating all \lam-terms having (recursively) the same ``eager normal form'', but he mentions that no obvious tree representations of the equivalence classes are at hand.
In~\cite{RonchiP04}, Ronchi della Rocca and Paolini study a filter model of \CbV{} \lam-calculus and, in order to prove an Approximation Theorem, they need to define sets of \emph{upper} and \emph{lower} approximants of a \lam-term. 
By admission of the authors~\cite{RonchiPC}, these notions are not satisfactory because they correspond to an ``over'' (resp.\ ``under'') approximation of its behaviour.

We end this section by recalling that most of the results we prove in this paper are the \CbV{} analogues of  results well-known in \CbN{} and contained in~\cite[Ch.~10]{Bare} (for B\"ohm trees), in~\cite{BoudesHP13} (for Taylor expansion) and ~\cite{EhrhardR06} (for the relationship between the two notions).

\bigskip
\noindent{\bf General notations.}
We denote by $\nat$ the set of all natural numbers.
Given a set $X$ we denote  by $\pow X$ its powerset and by $\sums X$ the set of all finite subsets of $X$.

\section{Call-By-Value $\lambda$-Calculus}
The \emph{call-by-value \lam-calculus}~$\lam_v$, introduced by Plotkin in~\cite{Plotkin75}, is a \lam-calculus endowed with a reduction relation that allows the contraction of a redex $(\lam x.M)N$ only when the argument $N$ is a value, namely when $N$ is a variable or an abstraction. 
In this section we briefly review its syntax and operational semantics.
By extending its reduction with permutation rules~$\sigma$, we obtain the calculus $\lsv$ introduced in~\cite{CarraroG14}, that will be our main subject of study.

\subsection{Its syntax and operational semantics.}\label{ssec:richer} 
For the \lam-calculus we mainly use the notions and notations from~\cite{Bare}.
We consider fixed a denumerable set $\Var$ of variables.

\begin{defi}\label{def:Lambda} The set $\Lam$ of \emph{\lam-terms} and the set $\Val$ of \emph{values} are defined through the following grammars (where $x\in\Var)$:
\[
	\begin{array}{lrl}
	(\Lam)&M,N,P,Q\bnfeq& V \mid MN\\
	(\Val)&U,V\bnfeq& x\mid \lam x.M\\
	\end{array}
\]
\end{defi}

As usual, we assume that application associates to the left and has higher precedence than \lam-abstraction. 
For instance, $\lam xyz.xyz = \lam x.(\lam y.(\lam z.((xy)z)))$.
Given $x_1,\dots,x_n\in\Var$, we let $\lam\seq x.M$ stand for $\lam x_1 \dots \lam x_n.M$. Finally, we write $MN^{\sim n}$ for $ MN\cdots N$ ($n$ times).

The set $\FV{M}$ of \emph{free variables of $M$} and the \emph{$\alpha$-conversion} are defined as in~\cite[\S2.1]{Bare}. 
A \lam-term $M$ is called \emph{closed}, or \emph{a combinator}, whenever $\FV M = \emptyset$.
The set of all combinators is denoted by $\Lam[o]$.
From now on, \lam-terms are considered up to $\alpha$-conversion, whence the symbol $=$ represents syntactic equality possibly up to renaming of bound variables.

\begin{defi} Concerning specific combinators, we define:
\[
	\begin{array}{lcllcllcl}
	\comb{I}&=&\lam x.x,&\Delta&=&\lam x.xx,&\Omega&=&\Delta\Delta,\\
	\comb{B}&=&\lam fgx.f(gx),\qquad&\comb{K}&=&\lam xy.x,\qquad&\comb{F}&=&\lam xy.y,\\
	\comb{Z}&=&\multicolumn{4}{l}{\lambda f. (\lambda y. f(\lambda z.yyz))(\lambda y. f(\lambda z.yyz)),}
	&\comb{K}^*&=&\comb{ZK}.\\
	\end{array}
\]
where $\comb{I}$ is the identity, $\Omega$ is the paradigmatic looping combinator, $\comb{B}$ is the composition operator, $\comb{K}$ and $\comb{F}$ are the first and second projection (respectively), $\comb{Z}$ is Plotkin's recursion operator, and $\comb{K}^*$ is a \lam-term producing an increasing amount of external abstractions.
\end{defi}

Given $M,N\in\Lam$ and $x\in\Var$ we denote by $M\subst{x}{N}$ the \lam-term obtained by substituting $N$ for every free occurrence of $x$ in $M$, subject to the usual proviso of renaming bound variables in $M$ to avoid capture of free variables in $N$.

\begin{rem} It is easy to check that the set $\Val$ is closed under substitution of values for free variables, namely $U,V\in\Val$ and $x\in\Var$ entail $V\subst{x}U\in\Val$.
\end{rem}

A context is a \lam-term possibly containing occurrences of a distinguished algebraic variable, called \emph{hole} and denoted by $\hole-$.
In the present paper we consider -- without loss of generality for our purposes -- contexts having a single occurrence of~$\hole-$.

\begin{defi} A \emph{(single-hole) context} $C\hole-$ is generated by the simplified grammar:
\[
	\qquad C\bnfeq \hole-\mid CM\mid MC\mid \lam x.C\qquad\textrm{(for $M\in\Lam$)}
\]
A context $C\hole-$ is called a \emph{head context} if it has shape $(\lam x_1\dots x_n.\hole-)V_1\cdots V_m$ for $V_i\in\Val$.
\end{defi}

Given $M\in\Lam$, we write $C\hole M$ for the \lam-term obtained by replacing $M$ for the hole $\hole-$ in $C\hole-$, possibly with capture of free variables.

We consider a \CbV\ \lam-calculus $\lsv$ endowed with the following notions of reductions. 
The $\beta_v$-reduction is the standard one, from~\cite{Plotkin75}, while the $\sigma$-reductions have been introduced in \cite{Regnier94,CarraroG14} and are inspired by the translation of \lam-calculus into linear logic proof-nets.

\begin{defi}\label{def:notionsofred} 
The \emph{$\beta_v$-reduction} $\to_{\beta_v}$ is the contextual closure of the following rule:
\[
	(\beta_v)\qquad	(\lam x.M)V\to M\subst{x}{V}\textrm{ whenever } V\in\Val
\]
The \emph{$\sigma$-reductions} $\to_{\sigma_1}$, $\to_{\sigma_3}$ are the contextual closures of the following rules (for $V\in\Val$):
\[
	\begin{array}{lll}
	(\sigma_1)&(\lam x.M)NP\to (\lam x.MP)N&\textrm{with }x\notin\FV{P}\\
	(\sigma_3)&V((\lam x.M)N)\to (\lam x.VM)N&\textrm{with  }x\notin\FV{V}\\
	\end{array}
\]
We also set $\to_\sigma\ =\ \to_{\sigma_1}\cup\to_{\sigma_3}$ and $\to_\V\ =\ \to_{\beta_v}\cup\to_{\sigma}$.
\end{defi}

The \lam-term at the left side of the arrow in the rule $(\beta_v)$ (resp.\ $(\sigma_1)$, $(\sigma_3)$) is called \emph{$\beta_v$-} (resp.\ $\sigma_1$-, $\sigma_3$-) \emph{redex}, while the \lam-term at the right side is the corresponding \emph{contractum}.
Notice that the condition for contracting a $\sigma_1$- (resp.\ $\sigma_3$-) redex can always be satisfied by performing appropriate $\alpha$-conversions.

Each reduction relation $\to_\rel{R}$ generates the corresponding \emph{multistep relation} $\msto[\rel{R}]$ by taking its transitive and reflexive closure,
and \emph{conversion relation} $=_\rel{R}$ by taking its transitive, reflexive and symmetric closure.
Moreover, we say that a \lam-term $M$ is in \emph{$\rel{R}$-normal form} (\emph{$\rel{R}$-nf}, for short) if there is no $N\in\Lam$ such that $M\to_{\rel{R}} N$.
We say that $M$ \emph{has an $\rel{R}$-normal form} whenever $M\msto[\rel{R}] N$ for some $N$ in $\rel R$-nf, and in this case we denote $N$ by $\nf[\rel{R}]M$.

\begin{exa}\label{exa:betav-reductions}\ 
\begin{enumerate}
\item $\comb{I}x\to_{\beta_v} x$, while $\comb{I}(xy)$ is a $\V$-normal form.
\item $\Omega\to_{\beta_v}\Omega$, whence $\Omega$ is a looping combinator in the \CbV\ setting as well.
\item $\comb{I}(\Delta(xx))$ is a $\beta_v$-nf, but contains a $\sigma_3$-redex, indeed $\comb{I}(\Delta(xx))\to_{\sigma_3}( \lam z.\comb{I}(zz))(xx)$.
\item For all values $V$, we have $\comb{Z}V =_\V V(\lam x.\comb{Z}Vx)$ with $x\notin\FV{V}$. So we get:
\item $\comb{K}^* = \comb{ZK} =_\V \comb{K}(\lam y.\comb{K}^*y) =_\V \lam x_0x_1.\comb{K}^*x_1 =_\V \lam x_0x_1x_2.\comb{K}^*x_2 =_\V\cdots =_\V \lam x_0\dots x_n.\comb{K}^*x_n$.
\item\label{exa:betav-reductions6} Let $\Xi = \comb{Z}N$ for $N = \lam f.(\lam y_1.f\comb{I})(zz)$, then we have:
\[
\begin{array}{lcll}
\Xi 	&=_\V& N(\lam w.\Xi w)&\textrm{by }\comb{Z}V =_\V V(\lam w.\comb{Z}Vw)\\
	&=_\V& (\lam y_1.(\lam w.\Xi w)\comb{I})(zz)&\textrm{by }(\beta_v)\\
	&=_\V& (\lam y_1.\Xi \comb{I})(zz)&\textrm{by }(\beta_v)\\
	&=_\V& (\lam y_1.((\lam y_2.\Xi \comb{I})(zz)) \comb{I})(zz)&\textrm{by }(\beta_v)\\
	&=_\V& (\lam y_1.((\lam y_2.\Xi \comb{I}\comb{I})(zz)))(zz)&\textrm{by ($\sigma_1$)}\\
	&=_\V& (\lam y_1.((\lam y_2.((\lam y_3.\Xi \comb{I}\comb{I}\comb{I})(zz)))(zz)))(zz)&=\ \cdots\\
\end{array}
\]
\item $\comb{ZB} =_\V \comb{B}(\lam z.\comb{ZB}z) =_\V \lam gx.(\lam z.\comb{ZB}z)(gx) =_\V
\lam gx.(\lam fy.(\lam z.\comb{ZB}z)(fy))(gx)=_\V\cdots$
\end{enumerate}
\end{exa}

The next lemma was already used implicitly in~\cite{GuerrieriPR17}.

\begin{lem}\label{lem:char_V_normal form} 
A \lam-term $M$ is in $\V$-normal form if and only if $M$ is a $G$-term generated by the following grammar (for $k\ge 0$):
\[
	\begin{array}{lcll}
	G&\bnfeq& H\mid R\\
	H&\bnfeq&x\mid \lam x.G\mid xHG_1\cdots G_k\\	
	R&\bnfeq& (\lam x.G)(yHG_1\cdots G_k)\\
	\end{array}
\]
\end{lem}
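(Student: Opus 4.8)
The plan is to establish the two implications separately, each by a straightforward structural induction, the only delicate point being careful bookkeeping of which syntactic shapes can head the outermost application of a $G$-term. The useful invariant to keep in mind is the following trichotomy of $G$-terms according to their outermost application (if any): an $H$-term is either a \emph{value} ($x$ or $\lam x.G$) or a \emph{neutral} term $xHG_1\cdots G_k$ whose leftmost symbol is a variable, whereas an $R$-term is the only shape whose outermost application has an abstraction in function position, that abstraction being applied to a neutral term. I would record, as an immediate consequence of the grammar, that no $G$-term ever places an abstraction-headed application (i.e.\ an $R$-shaped term) in the \emph{function} position of an application, and that the first argument $H$ of a neutral term is never an $R$-term.

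For the right-to-left implication I would show by induction on the generation of $G$ that every $G$-term is a $\V$-normal form. Since $\to_\V$ is a contextual closure, by the induction hypothesis all proper subterms are already normal, so it suffices to check that no redex sits at the root, examining the three redex shapes in turn. A $\beta_v$-redex $(\lam x.N)V$ is excluded because the only application shapes available are the neutral $xHG_1\cdots G_k$, headed by a variable rather than an abstraction, and the $R$-shape $(\lam x.G)(yHG_1\cdots G_k)$, whose argument is neutral, hence not a value. A $\sigma_1$-redex $(\lam x.N)N'P$ is excluded because it requires an abstraction-headed application in function position, which the grammar never produces. Finally a $\sigma_3$-redex $V((\lam x.N)N')$ requires a value applied to an $R$-shaped argument; in a neutral term $xHG_1\cdots G_k$ the only application with a value in function position is the innermost one $xH$, whose argument $H$ is an $H$-term and therefore not $R$-shaped, while all outer applications have a non-value (neutral) function; and in an $R$-term the value $\lam x.G$ is applied to a neutral argument. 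This rules out all redexes.

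For the left-to-right implication I would argue by induction on $M$, assuming $M$ in $\V$-normal form. If $M=x$ or $M=\lam x.N$ the claim is immediate ($N$ being normal, hence a $G$-term by induction hypothesis). The interesting case is $M=PQ$, where $P,Q$ are normal, hence $G$-terms by induction hypothesis. First, $P$ cannot be an $R$-term, for then $M$ would be a $\sigma_1$-redex (after an $\alpha$-conversion ensuring the bound variable avoids $\FV{Q}$); so $P$ is an $H$-term. If $P=\lam x.G'$, then $Q$ is neither a value (else a $\beta_v$-redex) nor an $R$-term (else a $\sigma_3$-redex), so $Q$ is a neutral $H$-term $yHG_1\cdots G_k$ and $M$ is exactly an $R$-term. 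If instead $P$ is variable-headed, then $M=PQ$ is again variable-headed: when $P=x$ the first argument $Q$ must be an $H$-term (it cannot be an $R$-term, else $M$ is a $\sigma_3$-redex), so $M=xQ$ matches the neutral production with $k=0$; and when $P=xHG_1\cdots G_k$ the term $M=xHG_1\cdots G_kQ$ matches the neutral production with $Q$ appended as a further $G$-argument.

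The main obstacle, and the step deserving the most care, is the $\sigma_3$ analysis together with the grammar's deliberate asymmetry between the first argument (typed $H$) and the remaining arguments (typed $G$) of a neutral term. This asymmetry is exactly what the $\sigma_3$-rule forces: the first argument sits directly under a value head $x$, so it must not be an abstraction-headed redex (an $R$-term), whereas every later argument sits under a non-value (neutral) head and may therefore be an arbitrary $G$-term, including an $R$-term. Getting this correspondence right in both directions is the heart of the proof; the remaining verifications are routine once the trichotomy above is in place.
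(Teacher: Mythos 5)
Your proof is correct and follows essentially the same route as the paper's: the right-to-left direction is an induction on the grammar ruling out each redex shape (the paper checks the same two ``interesting cases'', namely that the first argument of a neutral term is never $R$-shaped and that a neutral term is never a value), and the left-to-right direction is a structural induction excluding, exactly as in the paper, $R$-terms in function position (via $\sigma_1$), values and $R$-terms as arguments of abstractions (via $\beta_v$ and $\sigma_3$), and $R$-terms as first argument under a variable head (via $\sigma_3$). The only cosmetic difference is that the paper organizes the left-to-right induction around the canonical decomposition $\lam x_1\dots x_m.M'N_1\cdots N_n$ while you analyze one application $M=PQ$ at a time; the case analysis is the same in substance.
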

\begin{proof} $(\Rightarrow)$ Assume that $M$ is in $\V$-nf and proceed by structural induction.
Recall that every \lam-term $M$ can be uniquely written as $\lam x_1\dots x_m.M'N_1\cdots N_n$ where $m,n\ge 0$ and either $M' = x$ or $M' = (\lam x.P)Q$.  
Moreover, the \lam-terms $M',N_1,\dots, N_n$ must be in $\V$-nf's since $M$ is $\V$-nf.
Now, if $m > 0$ then $M$ is of the form $\lam x.P$ with $P$ in $\V$-nf and the result follows from the induction hypothesis.
Hence, we assume $m=0$ and split into cases depending on $M'$:
\begin{itemize}
\item
	$M'= x$ for some $x\in\Var$. If $n=0$ then we are done since $x$ is an $H$-term.
	If $n>0$ then $M = xN_1\cdots N_n$ where all the $N_i$'s are $G$-terms by induction hypothesis. Moreover, $N_1$ cannot be an $R$-term for otherwise $M$ would have a $\sigma_3$-redex.
	Whence, $N_1$ must be an $H$-term and $M$ is of the form $xHG_1\cdots G_k$ for $k=n-1$. 
\item $M' = (\lam x.P)Q$ for some variable $x$ and \lam-terms $P,Q$ in $\V$-nf.
	In this case we must have $n = 0$ because $M$ cannot have a $\sigma_1$-redex.
	By induction hypothesis, $P,Q$ are $G$-terms, but $Q$ cannot be an $R$-term or a value for otherwise $M$ would have a $\sigma_3$- or a $\beta_v$-redex, respectively.
	We conclude that the only possibility for the shape of $Q$ is $yHG_1\cdots G_k$, whence $M$ must be an $R$-term.
\end{itemize}
$(\Leftarrow)$ By induction on the grammar generating $M$. The only interesting cases are the following.
\begin{itemize}
\item $M = xHG_1\cdots G_k$ could have a $\sigma_3$-redex if $H = (\lam y.P)Q$, but this is impossible by definition of an $H$-term. 
As $H,G_1,\dots, G_k$ are in $\V$-nf by induction hypothesis, so must~be~$M$.
\item $M = (\lam x.G)(yHG_1\cdots G_k)$ where $G,H,G_1,\dots,G_k$ are in $\V$-nf by induction hypothesis. In the previous item we established that $yHG_1\cdots G_k$ is in $\V$-nf.
Thus, $M$ could only have a $\beta_v$-redex if $yHG_1\cdots G_k\in\Val$, but this is not the case by definition of $\Val$.\qedhere
\end{itemize}
\end{proof}
Intuitively, in the grammar above, $G$ stands for ``general'' normal form, $R$ for ``redex-like'' normal form and $H$ for ``head'' normal form.
The following properties are well-established.

\begin{prop}[Properties of reductions~\cite{Plotkin75,CarraroG14}]\label{prop:of_red}\ 
\begin{enumerate}
\item\label{prop:of_red1} The $\sigma$-reduction is confluent and strongly normalizing. 
\item\label{prop:of_red2} The $\beta_v$- and $\V$-reductions are confluent.
\end{enumerate}
\end{prop}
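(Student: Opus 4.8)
The plan is to prove the two items separately, handling termination by a well-founded interpretation and confluence by the standard abstract-rewriting machinery (Newman, Hindley--Rosen).

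\textbf{Strong normalization of $\to_\sigma$ (item 1).} The first thing to notice is that both $\sigma_1$ and $\sigma_3$ preserve the number of symbols of a term, so plain size is useless and size-like measures tend to stay constant. Instead, I would exhibit a \emph{strictly monotone interpretation} $[\cdot]\colon\Lam\to\nat$ into the well-founded order $(\nat,>)$ that strictly decreases along every step. Concretely, set $[x]=2$, $[\lam x.M]=[M]+1$ and $[MN]=[M]\cdot[N]$; an easy induction shows $[M]\ge 2$ always, and each constructor is strictly monotone in its arguments, so a strict decrease in a subterm propagates to the whole term and the contextual closure is taken care of. A direct computation then gives, for the two redex schemes,
\[
[(\lam x.M)NP]-[(\lam x.MP)N]=[N]\,([P]-1)>0,\qquad [V((\lam x.M)N)]-[(\lam x.VM)N]=[N]\,([V]-1)>0,
\]
the strictness coming precisely from every weight being $\ge 2$ (this is why the base value is taken to be $2$ rather than $1$). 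Hence $M\to_\sigma M'$ entails $[M]>[M']$, and $\to_\sigma$ is strongly normalizing.

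\textbf{Confluence of $\to_\sigma$ (item 1).} Since $\to_\sigma$ is strongly normalizing, by Newman's Lemma it suffices to establish local confluence, which I would get from a finite critical-pair analysis. The two rules are left-linear, so only root overlaps produce genuine critical pairs; these arise exactly when the head abstraction of a $\sigma_1$-redex, or the value $V$ of a $\sigma_3$-redex, is itself an abstraction — for instance $(\lam y.P)\big((\lam x.M)N\big)Q$, which is at once a $\sigma_1$-redex and a $\sigma_3$-redex. For each of the finitely many such overlaps I would display a common $\sigma$-reduct, the two divergent contracta rejoining after one or two further $\sigma$-steps. Finitely many joinable critical pairs give local confluence, hence confluence.

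\textbf{Item 2.} Confluence of $\to_{\beta_v}$ is Plotkin's theorem, provable by the Tait--Martin-L\"of method: introduce a parallel reduction $\Rightarrow$ with $\to_{\beta_v}\subseteq{\Rightarrow}\subseteq\msto[\beta_v]$, prove the substitution lemma ($M\Rightarrow M'$ and $V\Rightarrow V'$ imply $M\subst{x}{V}\Rightarrow M'\subst{x}{V'}$, using that $\Val$ is closed under reduction, a value reducing only to a value), deduce the diamond property of $\Rightarrow$, and conclude confluence of its reflexive-transitive closure $\msto[\beta_v]$. For $\to_\V={\to_{\beta_v}}\cup{\to_\sigma}$ I would invoke the Hindley--Rosen Lemma: a union of two confluent relations is confluent as soon as they commute. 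Having confluence of $\to_{\beta_v}$ and of $\to_\sigma$ in hand, it remains to prove that $\msto[\beta_v]$ and $\msto[\sigma]$ commute, which I would reduce to a local commutation diagram for the peaks $M\to_{\beta_v}Q$ and $M\to_\sigma P$.

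\textbf{Main obstacle.} The delicate point is exactly this commutation. A single $\beta_v$-step can duplicate or erase a $\sigma$-redex through the substitution $\subst{x}{V}$, while a $\sigma$-step can relocate or even create a $\beta_v$-redex (this being the very purpose of the $\sigma$-rules); consequently one-step/one-step commutation fails, and I expect to need parallel versions of both reductions, closing each local peak by $P\msto[\beta_v]N$ and $Q\msto[\sigma]N$ and then lifting to the multistep relations via Hindley's strong-commutation lemma. Tracking the residuals of the relocated redexes, and verifying that the side conditions $x\notin\FV{P}$ and $x\notin\FV{V}$ stay satisfiable (up to $\alpha$-conversion) after the substitutions, is the part that requires genuine care.
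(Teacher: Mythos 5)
The paper does not actually prove this proposition: it is stated as imported from Plotkin~\cite{Plotkin75} and Carraro--Guerrieri~\cite{CarraroG14}, so there is no in-paper argument to compare against. Your outline essentially reconstructs the proofs from those sources, and the parts you make explicit check out. In particular the interpretation $[x]=2$, $[\lam x.M]=[M]+1$, $[MN]=[M]\cdot[N]$ does strictly decrease on both $\sigma$-rules (I recomputed the two differences $[N]([P]-1)$ and $[N]([V]-1)$, and every weight is indeed $\ge 2$, so strictness and strict monotonicity of the contexts hold), which gives strong normalization of $\to_\sigma$ cleanly. For item~2, identifying the commutation of $\msto[\beta_v]$ and $\msto[\sigma]$ as the crux is exactly right: a $\beta_v$-step can duplicate or erase $\sigma$-redexes and a $\sigma$-step can create $\beta_v$-redexes, so one must pass to parallel reductions and a strong-commutation lemma before invoking Hindley--Rosen; this is precisely how the cited works proceed.

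One small inaccuracy worth fixing: left-linearity does not imply that ``only root overlaps produce genuine critical pairs.'' The roots of the two $\sigma$-patterns can never coincide (the function part of a $\sigma_1$-redex is an application, hence not a value), and the genuine overlaps are all \emph{proper} ones, where one redex pattern sits inside a non-variable subterm of the other. Your own example $(\lam y.P)\big((\lam x.M)N\big)Q$ is of this kind: the whole term is only a $\sigma_1$-redex, while its function subterm $(\lam y.P)\big((\lam x.M)N\big)$ is the $\sigma_3$-redex. There is at least one further overlap of the same nature, $\sigma_3$ inside $\sigma_3$, as in $V\big((\lam y.A)((\lam x.M)N)\big)$. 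Both peaks close after one or two extra $\sigma$-steps, so the conclusion stands, but the enumeration should be stated correctly and carried out rather than asserted.
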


Lambda terms are classified into valuables, potentially valuable and non-potentially valuable, depending on their capability of producing a value in a suitable environment.

\begin{defi}\label{def:pot_val}
 A \lam-term $M$ is \emph{valuable} if $M\msto[\beta_v] V$ for some $V\in\Val$. 
A \lam-term $M$ is \emph{potentially valuable} if there exists a head context\footnote{%
Equivalently, $M$ is potentially valuable if there is a substitution $\vartheta : \Var\to\Val$ such that $\vartheta(M)$ is valuable.
} $C\hole- = (\lam x_1\dots x_n.\hole-)V_1\cdots V_n$, where $\FV{M} = \set{x_1,\dots, x_n}$, such that $C\hole M$ is valuable.
\end{defi}

It is easy to check that $M$ valuable entails $M$ potentially valuable and that, for $M\in\Lam[o]$, the two notions coincide.
As shown in~\cite{GuerrieriPR17}, a \lam-term $M$ is valuable (resp.\ potentially valuable) if and only if $M\msto V$ (resp.\ $C\hole M\msto V$) for some $V\in\Val$.
As a consequence, the calculus $\lsv$ can be used as a tool for studying the operational semantics of the original calculus $\lam_v$.

In~\cite{Plotkin75}, Plotkin defines an observational equivalence analogous to the following one.

\begin{defi} The \emph{observational equivalence} $\obseq$ is defined as follows (for $M,N\in\Lam$):
\[
\begin{array}{c}
	M \obseq N\\ 
	\iff\\
	\forall C\hole-\, .\, C\hole M,C\hole N\in\Lam[o] \ [\ \exists V\in\Val \,.\,C\hole M\msto[\beta_v] V \iff \exists U\in\Val\,.\,C\hole N\msto[\beta_v] U\ ]
	\end{array}
\]
\end{defi}

For example, we have $\comb{I} \obseq \lam xy.xy$ and $\Xi\equiv \Omega$ (see Example~\ref{exa:betav-reductions}\eqref{exa:betav-reductions6}), while $\Omega\not\obseq \lam x.\Omega$.

\begin{rem}\label{rem:contextlemmaholds}
It is well known that, in order to check whether $M\obseq N$ holds, it is enough to consider head contexts (cf.~\cite{LukeOng,Paolini08}). 
In other words, $M\not\obseq N$ if and only if there exists a head context $C\hole-$ such that $C\hole M$ is valuable, while $C\hole N$ is not.
\end{rem}


\section{Call-by-value B\"ohm Trees}

In the call-by-name setting there are several equivalent ways of defining B\"ohm trees.
The most famous definition is coinductive\footnote{%
See also Definition~10.1.3 of \cite{Bare}, marked by Barendregt as `informal' because at the time the coinduction principle was not as well-understood as today.
}~\cite{Lassen99}, while the formal one in Barendregt's book exploits the notion of ``effective B\"ohm-like trees'' which is not easy to handle in practice.
The definition given in Amadio and Curien's book~\cite[Def.~2.3.3]{AmadioC98} is formal, does not require coinductive techniques and, as it turns out, generalizes nicely to the \CbV{} setting.
The idea is to first define the set $\Appof M$ of approximants of a \lam-term $M$, then show that it is directed w.r.t.\ some preorder $\BTle$ and, finally, define the B\"ohm tree of $M$ as the supremum~of~$\Appof M$.

\subsection{B\"ohm trees and approximants}
Let $\Lamb$ be the set of \lam-terms possibly containing a constant $\bot$, representing the undefined \emph{value}, and let $\BTle$ be the context-closed preorder on $\Lamb$ generated by setting, for all $x\in\Var$ and $M\in\Lamb$:
\[
	\bot\BTle x,\qquad\qquad\qquad\qquad\qquad \bot \BTle \lam x.M.
\]
Notice that, by design, $\bot$ can only be used to approximate values, not \lam-terms like $\Omega$.

The reduction $\to_\V$ from Definition~\ref{def:notionsofred} generalizes to terms in $\Lamb$ in the obvious way, namely by considering a set $\Val_\bot$ of values generated by the grammar (for $M\in\Lamb$):
\[
	\begin{array}{lrl}
	(\Val_\bot)\quad& U,V\ \bnfeq& \bot\mid x \mid \lam x.M\\
	\end{array}
\]
 For example, the $\beta_v$-reduction is extended by setting for all $M,V\in\Lamb$:
\[
	(\beta_v)\qquad	(\lam x.M)V\to M\subst{x}{V}\textrm{ whenever } V\in\Val_\bot
\]
Similarly, for the $\sigma$-rules.
A \emph{$\bot$-context $C\hole-$} is a context possibly containing some occurrences of $\bot$. We use for $\bot$-contexts the same notations introduced for contexts in Section~\ref{ssec:richer}.

Given $M,N\in\Lamb$ compatible\footnote{Recall that $M,N$ are \emph{compatible} if there exists $Z$ such that $M\BTle Z$ and $N\BTle Z$.} w.r.t.\ $\BTle$, we denote their least upper bound by $M\sqcup  N$.

\begin{defi}\label{def:approx}\  
\begin{enumerate}
\item\label{def:approx1} 
The set $\App$ of \emph{approximants} contains the terms $A\in\Lamb$ generated by the grammar (for $k\ge0$):
\[
	\begin{array}{lcl}
	A &\bnfeq& B\mid C\\
	B&\bnfeq&x\mid \lam x.A\mid \bot\mid xBA_1\cdots A_k\\	
	C&\bnfeq& (\lam x.A)(yBA_1\cdots A_k) 
	\end{array}
\]
\item\label{def:approx1.5} The \emph{set of free variables} $\FV-$ is extended to approximants by setting 
$\FV{\bot} = \emptyset$.
\item\label{def:approx2} 
	Given $M\in\Lam$, \emph{the set of approximants of $M$} is defined as follows:
\[
	\AM = \set{ A\in\App\st \exists N\in\Lam, M\msto N\textrm{ and } A\BTle N}.
\]
\end{enumerate}
\end{defi}

\begin{exa}\label{ex:deuxpointdeux}
\begin{enumerate}
\item\label{ex:deuxpointdeux1} 
	$\Appof{\comb{I}} = \set{\bot,\lam x.\bot,\lam x.x}$.
\item\label{ex:deuxpointdeux2}
	$\Appof{\Omega} = \Appof{\Xi} = \emptyset$ and $\Appof{\lam x.\Omega} = \set{\bot}$.
\item\label{ex:deuxpointdeux3} 
	$\Appof{\comb{I}(\Delta(xx))} = \set{(\lam z.(\lam y.Y)(zZ))(xX) \st Y\in\set{y,\bot}\ \land\ Z\in\set{z,\bot} \ \land\ X\in\set{x,\bot}}$.\\ Notice that neither $(\lam z.\bot)(xx)$ nor $(\lam z.\bot)(x\bot)$ belong to this set, because $\bot\not\BTle \comb{I}(zz)$.
\item 
	$\Appof{\comb{Z}} = \bigcup_{n\in\nat}\set{\lam  f. f(\lam  z_0. f(\lam z_1.f\cdots(\lam  z_{n}.f\bot\, Z_{n})\cdots Z_1) Z_0) \st \forall i\,.\, Z_i\in\set{z_i,\bot}}\cup\set{\bot}$.
\item 
	$\Appof{\comb{K}^*} = \set{ \lam x_1\dots x_n.\bot \st n\ge 0}$.
\item 
	The set of approximants of $\comb{ZB}$ is particularly interesting to calculate:
	\[
	\begin{array}{r@{~}l} 
	\Appof{\comb{ZB}} =& \set{\lam f_0x_0.(\cdots(\lam f_{n-1}x_{n-1}.(\lam f_n.\bot)(f_{n-1}X_{n-1}))\cdots)(f_0X_0)\st n> 0,\forall i\,.\,X_i\in\set{x_i,\bot}}\\	
	\cup&\set{\bot,\lam f_0.\bot}.\\
	\end{array}
	\]
\end{enumerate}
\end{exa}
\begin{lem}\label{lem:Approximantsareallnormal} 
Every $M\in\App$ is in normal form with respect to the extended $\V$-reduction.
\end{lem}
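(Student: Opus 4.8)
The plan is to argue by structural induction on the grammar generating $A\in\App$, checking in each case that $A$ has no top-level $\beta_v$-, $\sigma_1$- or $\sigma_3$-redex (in the sense extended to $\Lamb$) and that its proper subterms are again approximants, hence normal by the induction hypothesis. The whole argument parallels the $(\Leftarrow)$ direction of Lemma~\ref{lem:char_V_normal form}: the grammar defining $\App$ is obtained from the one characterizing $\V$-normal forms by renaming $G,H,R$ into $A,B,C$ and adding the single clause $B\bnfeq\bot$, so the only genuinely new ingredient to control is the constant $\bot$.

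First I would dispatch the base cases $x$, $\bot$ and $\lam x.A$: none of them can be the left-hand side of a rule, since every redex has an application at its root, a $\beta_v$- or $\sigma_1$-redex needs an abstraction in head position, and a $\sigma_3$-redex needs a $\beta$-redex-like argument (for $\lam x.A$ one also invokes the induction hypothesis on the body). The two cases that carry content are those with an application at the root. For a term $xBA_1\cdots A_k$ the head is the variable $x$, which rules out $\beta_v$ and $\sigma_1$; the only candidate $\sigma_3$-redex would require the first argument to have the shape $(\lam z.P)Q$, and this is impossible because it is a $B$-term, whereas such terms are produced only by the $C$ clause. For a term $(\lam x.A)(yBA_1\cdots A_k)$ the argument $yBA_1\cdots A_k$ is a variable-headed application: it is not a value, so there is no $\beta_v$-redex; the abstraction receives a single argument, so there is no $\sigma_1$-redex; and the argument is not of shape $(\lam z.P)Q$, so there is no $\sigma_3$-redex. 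In both cases the induction hypothesis handles the proper subterms, and $yBA_1\cdots A_k$ is itself normal by the previous case.

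The step I expect to require the most care is verifying that the constant $\bot$ cannot create a redex absent in the pure $\V$-normal-form analysis. The key observation is that, although $\bot\in\Val_\bot$ is a value, the grammar never lets it head an application (there is no clause $\bot BA_1\cdots A_k$) nor appear as the sole argument of an abstraction (the argument in clause $C$ is forced to be variable-headed). Hence $\bot$ occurs only at inert value-leaf positions and can never complete a $\beta_v$- or $\sigma_3$-redex, which is precisely why adding $B\bnfeq\bot$ leaves the normal-form reasoning intact.
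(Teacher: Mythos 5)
Your proof is correct and follows exactly the route the paper takes: the paper's proof is literally the one-line remark ``by a simple case analysis analogous to the proof of Lemma~\ref{lem:char_V_normal form}'', i.e.\ the same induction on the grammar of $\App$ mirroring the $(\Leftarrow)$ direction of that lemma. Your additional check that $\bot$ never heads an application and never occurs as the sole argument of an abstraction is precisely the detail the paper leaves implicit, and you verify it correctly.
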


\begin{proof} By a simple case analysis (analogous to the proof of Lemma~\ref{lem:char_V_normal form}).
\end{proof}

The following lemmas show that the ``external shape'' of an approximant is stable under $\V$-reduction. 
For instance, if $A = (\lam x.A_0)(yBA_1\cdots A_k)\BTle M$ then all approximants $A'\in\AM$ have shape $ (\lam x.A'_0)(yB'A'_1\cdots A'_k)$ for some $B',A'_0,\dots, A'_k\in\App$.

\begin{lem}\label{lem:technical_lem} 
Let $C\hole-$ be a (single-hole) $\bot$-context and $V\in\Val$.
Then $C\hole\bot \in\App$ and $C\hole V\to_\V N$ entails that there exists a value $V'$ such that $V\to_\V V'$ and $N = C\hole{V'}$.
\end{lem}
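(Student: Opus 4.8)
The plan is to use the fact that $C\hole\bot$ is an approximant, hence (by Lemma~\ref{lem:Approximantsareallnormal}) a $\V$-normal form, to show that the reduction $C\hole V\to_\V N$ must take place strictly inside the subterm $V$ filling the hole; the conclusion then falls out immediately. Concretely, let $\Delta$ be the redex contracted in $C\hole V\to_\V N$, occurring at a position $p$, and let $q$ be the position of the hole. Since two positions in a term are always either disjoint or nested, there are exactly three cases to consider: $p$ and $q$ are disjoint; $p$ lies strictly below $q$ (the redex is inside $V$); or $q$ lies at or strictly below $p$ (the redex $\Delta$ contains $V$). As $V\in\Val$ is a value, it is not itself a redex, so $p=q$ is impossible.

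First I would dispatch the case where $p$ lies strictly below $q$, which is exactly the desired situation: here contracting $\Delta$ rewrites only a subterm of $V$, so $V\to_\V V'$, and since values are closed under $\to_\V$ (a variable is irreducible, an abstraction reduces to an abstraction, and no rule introduces $\bot$) we get $V'\in\Val$ and $N=C\hole{V'}$, as required. Next I would rule out the case where $p$ and $q$ are disjoint: then $\Delta$ does not meet the hole, so replacing $V$ by $\bot$ leaves $\Delta$ untouched, whence $\Delta$ is still a redex in $C\hole\bot$ (its shape and the free-variable side conditions of the $\sigma$-rules depend only on $\Delta$ itself). This contradicts the normality of $C\hole\bot$.

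The hard part is the remaining case, where $\Delta$ strictly contains $V$, which I would handle by a case analysis on the three redex shapes of Definition~\ref{def:notionsofred}. The guiding observation is that $\bot\in\Val_\bot$ and $\FV\bot=\emptyset$ (Definition~\ref{def:approx}\eqref{def:approx1.5}), so replacing $V$ by $\bot$ can only shrink the set of free variables and preserves the property of being a value. Consequently, whenever the hole lies inside an abstraction body, inside an argument, or in the outer-value position of a $\beta_v$-, $\sigma_1$-, or $\sigma_3$-redex, the very same redex survives in $C\hole\bot$ (the side conditions, requiring a bound variable not to occur free in a certain subterm, remain satisfied since free variables only decrease), once again contradicting normality of $C\hole\bot$.

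The one genuinely delicate sub-case is when $V$ itself occupies the applied-function position of $\Delta$, that is, when the hole sits directly in a function position: here plugging $\bot$ back in would \emph{not} recreate a $\beta_v$-redex, because $\bot$ is not an abstraction, so the normality argument alone does not apply. This, however, is precisely the configuration forbidden by the hypothesis $C\hole\bot\in\App$: inspecting the approximant grammar of Definition~\ref{def:approx}\eqref{def:approx1}, in every application form the applied function is headed by a variable or is an abstraction, so a $\bot$ can never occur as the applied function of an application; hence $C\hole\bot$ would fail to be an approximant, contradicting the hypothesis. This closes the analysis, and I expect that correctly isolating this last sub-case, rather than the routine persistence arguments, is where the real work lies.
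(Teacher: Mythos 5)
Your proof is correct and follows essentially the same route as the paper's: both hinge on the normality of approximants (Lemma~\ref{lem:Approximantsareallnormal}) together with the fact that the grammar of $\App$ never places $\bot$ in the function position of an application, so the contracted redex is forced to lie inside $V$, and values reduce to values. The paper organizes the argument as an induction on $C\hole-$ whose ``interesting case'' is the potential creation of a $\sigma_3$-redex $x(C'\hole V)$, whereas you use a positional trichotomy (disjoint, inside $V$, containing the hole); the delicate sub-case you isolate --- the hole sitting in function position, where normality alone is insufficient --- is exactly the configuration the paper's grammar argument excludes, so the two proofs coincide in substance.
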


\begin{proof}
Let $A = C\hole\bot\in\App$.
 By Lemma~\ref{lem:Approximantsareallnormal}, $A$ cannot have any $\V$-redex. 
Clearly, substituting $V$ for an occurrence of $\bot$ in $A$ does not create any new $\beta_v$-redex, so if $C\hole V\to_{\beta_v} N$ then the contracted redex must occur in $V$. 
As $V$ is a value, it can only $\V$-reduce to a value $V'$.

It is slightly trickier to check by induction on $C\hole-$ that such an operation does not introduce any $\sigma$-redex. 
The only interesting case is $C\hole\bot = (\lam x.A')(xC'\hole\bot A_1\cdots A_k)$ where $C'\hole\bot$ is a $B$-term.
Indeed, since $x\in\Val$, $xC'\hole V$ would be a $\sigma_3$-redex for $C'\hole V = (\lam y. P)Q$ but this is impossible since $C'\hole\bot$ is a $B$-term and $B$-terms cannot have this shape.

The case $C\hole\bot = xC'\hole\bot A_1\cdots A_k$ is analogous.
\end{proof}

\begin{lem}\label{lem:A_preserved} 
For $M\in\Lam$ and $A\in\App$, $A\BTle M$ and $M\to_\V N$ entails $A\BTle N$.
\end{lem}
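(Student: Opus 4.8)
The plan is to make the hypothesis $A \BTle M$ concrete. Since $M \in \Lam$ contains no $\bot$ and $\BTle$ is generated by $\bot \BTle x$, $\bot \BTle \lam x.N$ and closed under contexts, $A \BTle M$ holds exactly when $M$ is obtained from $A$ by replacing each occurrence of $\bot$ with a value. I would record this by puncturing $A$ at its $\bot$-occurrences: writing $C$ for the resulting $n$-hole context, we have $A = C\langle \bot,\dots,\bot\rangle$ and $M = C\langle V_1,\dots,V_n\rangle$ for suitable $V_1,\dots,V_n\in\Val$. The case $n=0$ is vacuous, since then $A = M$ is a $\V$-normal form by Lemma~\ref{lem:Approximantsareallnormal}, contradicting $M\to_\V N$.

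The core of the argument is the claim that the redex contracted in $M\to_\V N$ lies entirely inside one of the substituted values $V_i$; this is the many-hole generalization of Lemma~\ref{lem:technical_lem}. I would prove it by inspecting the grammar of $\App$ (Definition~\ref{def:approx}), relying on two structural facts about approximants: $\bot$ never occurs in the function (left) position of an application, and the argument of an abstraction-headed application is always of the form $yBA_1\cdots A_k$, hence variable-headed. Because filling holes leaves every variable head and every abstraction head in place, no application node of the skeleton $C$ can become a redex. Indeed, a $\beta_v$-redex $(\lam x.P)W$ at a skeleton node $D\,E$ would need $D\langle\vec V\rangle$ to be an abstraction, which (as function positions are never holes) forces $D$ to be an abstraction already in $A$, so the node is a $C$-term whose argument $E$ stays variable-headed and hence is never the value $W$; a $\sigma_1$-redex would require the abstraction-headed application $(\lam x.P)Q$ in function position, excluded since functions carry only variables or bare abstractions; and a $\sigma_3$-redex would require an abstraction-headed application in an argument position, excluded since arguments are $B$-terms or variable-headed. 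Hence every $\V$-redex of $M$ is rooted strictly below some hole, \ie, inside some $V_i$. (Alternatively, the same conclusion follows by induction on $n$, peeling off one hole at a time and appealing directly to Lemma~\ref{lem:technical_lem}.)

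Once this localization is established, the conclusion is immediate. Contracting the redex inside $V_i$ yields $V_i\to_\V V_i'$ and $N = C\langle V_1,\dots,V_{i-1},V_i',V_{i+1},\dots,V_n\rangle$. Since $V_i$ is a value, $V_i'$ is again a value (a variable admits no reduction, and an abstraction reduces only to an abstraction). As every hole of $N$ is now filled by a value and $\bot\BTle V$ holds for each value $V$, context-closure of $\BTle$ gives $A = C\langle\bot,\dots,\bot\rangle \BTle N$, as required.

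The main obstacle is the localization claim: one must ensure that replacing several $\bot$'s by values simultaneously never synthesizes a redex straddling the skeleton. This is precisely where the shape of the approximant grammar is essential — in particular that holes never sit in head position and that the argument of an abstraction is forced to be variable-headed — and it is exactly the phenomenon already isolated, for a single hole, in the proof of Lemma~\ref{lem:technical_lem}.
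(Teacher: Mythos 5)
Your proof is correct and follows essentially the same route as the paper's: both localize the contracted redex inside one of the values substituted for the $\bot$-occurrences of $A$ (the content of Lemma~\ref{lem:technical_lem}, which you re-derive in many-hole form by the same inspection of the grammar of $\App$), observe that a value $\V$-reduces to a value, and conclude $A\BTle N$ by context-closure of $\BTle$. Your treatment is merely more explicit than the paper's about handling several $\bot$-occurrences simultaneously.
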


\begin{proof} If $A\BTle M$ then $M$ can be obtained from $A$ by substituting each occurrence of $\bot$ for the appropriate subterm of $M$, and such subterm must be a value. 
Hence, the redex contracted in $M\to_\V N$ must occur in a subterm $V$ of $M$ corresponding to an occurrence\footnote{An \emph{occurrence} of a subterm $N$ in a \lam-term $M$ is a (single-hole) context $C\hole-$ such that $M = C\hole N$.} $C\hole-$ of $\bot$ in $A$.
So we have $C\hole\bot = A$ and $C\hole V\to_\V N'$ implies, by Lemma~\ref{lem:technical_lem}, that $N'= C\hole {V'}$ for a $V'$ such that $V\to_\V V'$.
So we conclude that $A = C\hole \bot \BTle N$, as desired.
\end{proof}

\begin{lem}\label{lem:AM_preserved_by_toV} 
For $M,N\in\Lam$, $M\to_\V N$ entails $\AM = \Appof N$.
\end{lem}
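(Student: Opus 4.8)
The plan is to prove the set equality $\AM = \Appof N$ by establishing the two inclusions separately, exploiting the fact that the set of approximants is defined through the \emph{multistep} relation $\msto$ while the hypothesis only gives a single step $M\to_\V N$. The inclusion $\Appof N\subseteq\AM$ should be essentially immediate: if $A\in\Appof N$, then by Definition~\ref{def:approx}\eqref{def:approx2} there is some $P\in\Lam$ with $N\msto P$ and $A\BTle P$; prefixing the given step yields $M\to_\V N\msto P$, hence $M\msto P$ by transitivity of $\msto$, so $A\in\AM$. No extra machinery is needed here beyond unfolding the definition.

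The interesting direction is $\AM\subseteq\Appof N$, and here the strategy is to use confluence to realign the witnessing reduction. Suppose $A\in\AM$, so there is $P\in\Lam$ with $M\msto P$ and $A\BTle P$. I now have two reductions out of $M$, namely the single step $M\to_\V N$ and the multistep $M\msto P$; by confluence of $\to_\V$ (Proposition~\ref{prop:of_red}\eqref{prop:of_red2}) these can be closed into a common reduct, giving some $Q\in\Lam$ with $N\msto Q$ and $P\msto Q$. It then remains to transport the approximation $A\BTle P$ along the reduction $P\msto Q$, which is exactly what the stability Lemma~\ref{lem:A_preserved} provides, except that Lemma~\ref{lem:A_preserved} is phrased for a single $\V$-step whereas $P\msto Q$ is a multistep. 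I would therefore first record the trivial iterated version: by induction on the length of a reduction $P\msto Q$, repeated application of Lemma~\ref{lem:A_preserved} gives that $A\BTle P$ and $P\msto Q$ imply $A\BTle Q$. Combining this with $N\msto Q$ yields $A\in\Appof N$, closing the inclusion.

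The only genuinely load-bearing step is the appeal to confluence to commute the arbitrary reduction $M\msto P$ past the fixed step $M\to_\V N$; everything else is a matter of carefully unfolding Definition~\ref{def:approx}\eqref{def:approx2} and lifting Lemma~\ref{lem:A_preserved} from one step to many. I do not expect any obstacle of substance: once the confluence diagram is drawn and the stability lemma is iterated, both inclusions fall out directly, and the asymmetry between $\to_\V$ and $\msto$ in the statement is harmless precisely because $\msto$ is already closed under prefixing single steps.
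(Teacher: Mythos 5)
Your proof is correct and is essentially the fleshed-out version of the paper's one-line argument, which simply declares the lemma ``straightforward from Definition~\ref{def:approx}\eqref{def:approx2} and Lemma~\ref{lem:A_preserved}''. The one ingredient you make explicit that the paper leaves tacit --- closing the diagram between $M\to_\V N$ and $M\msto P$ via confluence (Proposition~\ref{prop:of_red}\eqref{prop:of_red2}) before iterating Lemma~\ref{lem:A_preserved} --- is indeed needed for the inclusion $\AM\subseteq\Appof{N}$, so your write-up is if anything more complete than the paper's.
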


\begin{proof} Straightforward from Definition~\ref{def:approx}\eqref{def:approx2} and Lemma~\ref{lem:A_preserved}.
\end{proof}

\begin{prop}\label{prop:AMisIdeal} 
For all $M\in\Lam$, the set $\AM$ is either empty or an ideal (i.e.\ non-empty, directed and downward closed) w.r.t. $\BTle$.
\end{prop}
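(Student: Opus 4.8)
The plan is to verify the two substantive closure properties, \emph{directedness} and \emph{downward closure}, since the dichotomy in the statement already grants non-emptiness as the only alternative; throughout I read $\AM$ as a subset of the poset $(\App,\BTle)$ and recall from Definition~\ref{def:approx}\eqref{def:approx2} that $A\in\AM$ means $A\BTle N$ for some reduct $N$ with $M\msto N$.

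Downward closure is immediate. If $A\in\AM$, fix $N$ with $M\msto N$ and $A\BTle N$; then for any $A'\in\App$ with $A'\BTle A$, transitivity of $\BTle$ gives $A'\BTle N$, so $A'\in\AM$. For directedness, take $A_1,A_2\in\AM$ and pick reducts $N_1,N_2$ with $M\msto N_i$ and $A_i\BTle N_i$. By confluence of $\to_\V$ (Proposition~\ref{prop:of_red}\eqref{prop:of_red2}) there is a common reduct $P$ with $N_1\msto P$ and $N_2\msto P$. Iterating Lemma~\ref{lem:A_preserved} along each of these two reduction sequences yields $A_1\BTle P$ and $A_2\BTle P$; in particular $A_1,A_2$ are compatible (witnessed by $P$), so their least upper bound $A_3:=A_1\sqcup A_2$ exists and satisfies $A_1,A_2\BTle A_3\BTle P$, the last inequality because $A_3$ is the \emph{least} upper bound while $P$ is an upper bound. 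Since $M\msto P$, once we know $A_3\in\App$ the definition of $\AM$ gives $A_3\in\AM$, providing the required common upper bound.

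The crux is therefore to establish that $A_1\sqcup A_2\in\App$. I would isolate this as a claim: for every $P\in\Lam$, the set $\set{A\in\App\st A\BTle P}$ is closed under binary least upper bounds, and \emph{moreover} the lub of two $B$-terms below $P$ is again a $B$-term. The proof goes by induction on $P$, exploiting that $\BTle$ only ever replaces a $\bot$ by a value, so $A\BTle P$ with $A\neq\bot$ forces $A$ and $P$ to share their outermost constructor. Thus if some $A_i$ equals $\bot$ the lub is the other term; otherwise $A_1$, $A_2$ and $P$ agree at the root and the lub is computed componentwise, the immediate subterms being handled by the induction hypothesis applied to the immediate subterms of $P$.

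The one delicate point — and the main obstacle — is the grammatical bookkeeping between $B$-terms and $C$-terms. The grammar of $\App$ requires the first argument of a head-variable application $xBA_1\cdots A_k$ and the argument of a $C$-term $(\lam x.A)(yBA_1\cdots A_k)$ to be $B$-terms, so I must check that no redex-like ($C$-) shape is created by $\sqcup$ at a position where a $B$-term is mandated. This is exactly guaranteed by the strengthened induction hypothesis: when both inputs at such a position are $B$-terms (hence a variable, an abstraction, $\bot$, or a head-variable application, never a $C$-term), they share a non-redex root, and the componentwise lub preserves that root, so the result is again a $B$-term. With this invariant maintained the componentwise lub always lands back in the grammar of $\App$, completing the claim and hence the directedness of $\AM$.
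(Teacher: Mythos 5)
Your proof is correct, and its skeleton coincides with the paper's: downward closure from transitivity of $\BTle$, and directedness via confluence of $\to_\V$ together with iterated use of Lemma~\ref{lem:A_preserved}. Where you genuinely diverge is in how the directedness induction is organized. The paper inducts on the approximant $A_1$ and, at each case, re-enters the reduction machinery: it finds a common reduct $N$ of $N_1,N_2$, reads off the forced shape of $A_2$ from $A_1\BTle N$ and $A_2\BTle N$, and then invokes the induction hypothesis on the components as elements of $\Appof{N_i}$, assembling \emph{some} upper bound in $\AM$. You instead collapse everything to a single common reduct $P$ up front, observe that $A_1\sqcup A_2$ exists and sits below $P$, and reduce the whole problem to a self-contained, purely static claim: the set $\set{A\in\App\st A\BTle P}$ is closed under binary least upper bounds, proved by induction on $P$. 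This factorization buys two things. First, it decouples the order-theoretic content from the rewriting content, so the reduction relation is used exactly once. Second, it forces you to state explicitly the invariant that the lub of two $B$-terms below $P$ is again a $B$-term --- a point the paper's proof uses silently when it writes the upper bound as $xB_3A^3_1\cdots A^3_k$ with $B_3$ obtained from the induction hypothesis (there one must still observe that any approximant $\BTge$ a $B$-term is a $B$-term, since $\bot$ is only below values and non-$\bot$ terms determine their root). The cost of your route is that you rely on the existence of $A_1\sqcup A_2$ in $\Lamb$ for compatible terms; the paper assumes this in its notation for $\sqcup$, so this is an acceptable appeal, but it is worth flagging that you are using it.
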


\begin{proof} 
Assume $\AM$ is non-empty. We check the remaining two conditions:
\begin{itemize}
\item
	To show that $\AM$ is directed, we need to prove that every $A_1,A_2\in\AM$ have an upper bound $A_3\in\AM$. 
	
	We proceed by induction on $A_1$.
	In case $A_1 = \bot$ (resp.\ $A_2 = \bot$) simply take $A_3 = A_2$ (resp.\ $A_3=A_1$).
	Let us assume that $A_1,A_2 \neq \bot$.
	
	Case $A_1 = x$, then $A_3 = A_2 = x$.
	
	Case $A_1 = xB_1A^1_{1}\cdots A^1_{k}$. 
	In this case we must have $M\msto N_1$ for $N_1= xN'_0\cdots N'_{k}$ with $B_1\BTle N'_0$ and $A^1_{i}\BTle N'_i$ for all $i$ such that $1 \le i \le k$.
	As $A_2\in\AM$, there exists a \lam-term $N_2$ such that $M\msto N_2$ and $A_2\BTle N_2$. 
	By Proposition~\ref{prop:of_red}\eqref{prop:of_red2} (confluence), $N_1$ and $N_2$ have a common reduct $N$.
	Since $A_1\BTle N_1$, by Lemma~\ref{lem:A_preserved} we get $A_1\BTle N$ thus $N = xN_0\cdots N_{k}$.
	By Lemma~\ref{lem:A_preserved} again, $A_2\BTle N$ whence $A_2 = xB_2A^2_{1}\cdots A^2_{k}$ for some approximants $B_2\BTle N_0$ and $A^2_{i}\BTle N_i$ for $1\le i \le k$.
	Now, by definition, $B_1,B_2\in\Appof{N_0}$ and $A^1_{i},A^2_{i}\in \Appof{N_i}$ for $1\le i\le k$.
	By induction hypothesis, there exists $B_3\in\Appof{N_0},A^3_{i}\in\Appof{N_i}$ such that $B_1\BTle B_3\BTge B_2$ and $A^1_{i}\BTle A^3_{i}\BTge A^2_{i}$ from which it follows that the upper bound $xB_3A^3_{1}\cdots A^3_{k}$ of $A_1,A_2$ belongs to $\Appof{xN_0\cdots N_k}$.
	By Lemma~\ref{lem:AM_preserved_by_toV}, we conclude that $xB_3A^3_{1}\cdots A^3_{k}\in\AM$, as desired.

	Case $A_1 = (\lam x.A'_1)(yB_1A^1_{1}\cdots A^1_{k})$. 
	In this case we must have $M\msto N_1$ for $N_1=(\lam x.M')(yM_0\cdots M_k)$ with $A'_1\BTle M'$, $B_1\BTle M_0$ and $A^1_{i}\BTle M_i$ for all $i$ such that $1 \le i \le k$.
	Reasoning as above, $A_2\in\AM$ implies there exists a \lam-term $N_2$ such that $M\msto N_2$ and $A_2\BTle N_2$. 
	By Proposition~\ref{prop:of_red}\eqref{prop:of_red2}, $N_1$ and $N_2$ have a common reduct $N$.
	Since $A_1\BTle N_1$, by Lemma~\ref{lem:A_preserved} we get $A_1\BTle N$ thus $N = (\lam x.N')(yN_0\cdots N_k)$.
	By Lemma~\ref{lem:A_preserved} again, $A_2\BTle N$ whence $A_2 = (\lam x.A'_2)(yB_2A^2_{1}\cdots A^2_{k})$ where $A'_2\BTle N'$, $B_2\BTle N_0$ and  $A^2_{i}\BTle N_i$ for $1 \le i\le k$.
	By induction hypothesis we get 
	$A'_3\in\Appof{N'}$ such that $A'_1\BTle A'_3 \BTge A'_2$, 
	$B_3\in\Appof{N_0}$ such that $B_1\BTle B_3 \BTge B_2$ and 
	$A^3_{i}\in\Appof{N_i}$ such that $A^1_{i}\BTle A^3_{i}\BTge A^2_{i}$ for $1\le i\le k$. 
	It follows that the upper bound $(\lam x.A'_3)(yB_3A^3_{1}\cdots A^3_{k})$ of $A_1,A_2$ belongs to $\Appof{(\lam x.N')(yN_0\cdots N_k)}$.
		By Lemma~\ref{lem:AM_preserved_by_toV}, we conclude that $(\lam x.A'_3)(yB_3A^3_{1}\cdots A^3_{k})\in\AM$.
	
	All other cases follow from Lemma~\ref{lem:A_preserved}, confluence of $\to_\V$ and the induction hypothesis.
\item To prove that $\Appof M$ is downward closed, we need to show that for all $A_1,A_2\in\App$, if $A_1\BTle A_2\in\Appof M$ then $A_1\in\Appof M$, but this follows directly from its definition.\qedhere
\end{itemize}
\end{proof}

As a consequence, whenever $\AM\neq\emptyset$, we can actually define the B\"ohm tree of a \lam-term $M$ as the supremum of its approximants in $\Appof M$.

\begin{defi}\label{def:BT} 
\begin{enumerate}
\item\label{def:BT1}
Let $M\in\Lam.$
The \emph{(call-by-value) B\"ohm tree of $M$}, in symbols $\BT{M}$, is defined as follows (where we assume that $\Sup\emptyset = \emptyset$):
\[
	\BT M = \Sup \AM
\]
Therefore, the resulting structure is a possibly infinite labelled tree $T$.
\item\label{def:BT2}
 More generally, every $\Set X\subseteq \App$ directed and downward closed determines a so-called \emph{B\"ohm-like tree} $T = \Sup \Set X$. 
\item Given a B\"ohm-like tree $T$, we set $\FV T =  \FV{\Set X} = \bigcup_{A \in\Set X} \FV A$.
\end{enumerate}
\end{defi}

The difference between the B\"ohm tree of a \lam-term $M$ and a B\"ohm-like tree $T$ is that the former must be ``computable\footnote{The formal meaning of ``computable'' will be discussed in the rest of the section.}''  since it is \lam-definable, while the latter can be arbitrary. 
In particular, any B\"ohm tree $\BT{M}$ is a B\"ohm-like tree but the converse does not hold.

\begin{rem}\label{rem:Appof_iff_BT} 
\begin{enumerate}
\item 
Notice that $\AM = \Appof N$ if and only if $\BT{M} = \BT{N}$.
\item The supremum $\bigsqcup\Set X$ in Definition~\ref{def:BT}\eqref{def:BT2} (and a fortiori $\bigsqcup\Appof{M}$, in~\eqref{def:BT1}) belongs to the larger set $\mathscr{X}$ generated by taking the grammar in Definition~\ref{def:approx}\eqref{def:approx1} coinductively, whose elements are ordered by $\BTle$ extended to infinite terms. However, $\mathscr{X}$ contains terms like
\[
(\lam y_1.(\lam y_2.(\lam y_3.\cdots)(yy))(yy))(yy)\in\mathscr{X}
\]
that are not ``B\"ohm-like'' as they cannot be obtained as the supremum of a directed subset $\Set X\subseteq \App$.
\item 
$\FV{\BT{M}} \subseteq \FV{M}$ and the inclusion can be strict: $\FV{\BT{\lam x.\Omega y}} =\FV{\bot} = \emptyset$.
\end{enumerate}
\end{rem}
 The B\"ohm-like trees defined above as the supremum of a set of approximants can be represented as actual trees. 
Indeed, any B\"ohm-like tree $T$ can be depicted using the following ``building blocks''.
\begin{itemize}
\item If $T = \bot$ we actually draw a node labelled $\bot$.
\item If $T = \lam x.T'$ we use an \emph{abstraction node} labelled ``$\lam x$'':
	\begin{center}
	\begin{tikzpicture}
	\node (root) {$\lam x$};
	\node (T) at (0,-1) {$T'$};
	\draw (root) -- (T);
	\end{tikzpicture}
	\end{center}
\item If $T = xT_1 \cdots T_k$, we use an \emph{application node} labelled by ``$@$'':
	\begin{center}
	\begin{tikzpicture}
	\node (root) {$@$};
	\node (T) at (-8pt,-1) {$x\quad T_1\quad \cdots\quad T_k$};
	\draw (root) -- ($(T)+(30pt,10pt)$);
	\draw (root) -- ($(T)+(-34pt,10pt)$);	
	\draw (root) -- ($(T)+(-18pt,10pt)$);		
	\end{tikzpicture}
	\end{center}
\item If $T = (\lam x.T_0)(yT_1\cdots T_k)$ we combine the application and abstraction nodes as imagined:
	\begin{center}
	\begin{tikzpicture}
	\node (root) at (-1,1) {$@$};
	\node (lamx) at (-2.5,2pt) {$\lam x$};	
	\node (T0) at (-2.5,-1) {$T_0$};
	\node (y) at (0.5,0) {$@$};
	\node (T) at (0.2,-1) {$y\quad T_1\quad \cdots\quad T_k$};
	\draw (y) -- ($(T)+(30pt,10pt)$);
	\draw (y) -- ($(T)+(-35pt,10pt)$);	
	\draw (y) -- ($(T)+(-18pt,10pt)$);
	\draw (root) -- (y);
	\draw (root) -- (lamx);	
	\draw (lamx) -- (T0);
	\end{tikzpicture}
	\end{center}
\end{itemize}
Notice that the tree $T_1$ in the last two cases need to respect the shape of the corresponding approximant (Definition~\ref{def:approx}\eqref{def:approx1}) for otherwise $T$ would not be the supremum of an ideal.

\begin{exa} Notable examples of B\"ohm trees of \lam-terms are given in Figure~\ref{fig:BTs}.
Interestingly, the \lam-term $\Xi$ from Example~\ref{exa:betav-reductions}\eqref{exa:betav-reductions6}   satisfying
\begin{equation}\label{eq:Xi}
	\Xi =_\V (\lam y_1.((\lam y_2.(\cdots
	(\lam y_n.\Xi \comb{I}^{\sim n})(zz)
	\cdots))(zz)))(zz)
\end{equation}
is such that $\BT{\Xi} = \bot$. Indeed, substituting $\bot$ for a $\lam y_n.\Xi \comb{I}^{\sim n}$ in \eqref{eq:Xi} never gives an approximant belonging to $\App$ (\cf\ the grammar of Definition~\ref{def:approx}\eqref{def:approx1}).
\begin{figure}[t!]
\begin{tikzpicture}
\node (root) {};
\node (BTo) at (-10,0) {$\BT{\Omega}$};
\node (eq) at ($(BTo)+(0,-12pt)$) {$\shortparallel$};
\node at ($(eq)+(0,-12pt)$) {$\emptyset$};
\node (BTlxo) at ($(BTo)+(75pt,0pt)$) {$\BT{\lam x.\Omega}$};
\node (eq) at ($(BTlxo)+(0,-12pt)$) {$\shortparallel$};
\node (lx) at ($(eq)+(0,-12pt)$) {$\bot$};
\node (BTZ) at ($(BTlxo)+(160pt,0pt)$) {$\BT{\comb{Z}}$};
\node (eq) at ($(BTZ)+(0,-12pt)$) {$\shortparallel$};
\node (lf) at ($(eq)+(0,-12pt)$) {$\lam f$};
\node (at1) at ($(lf)+(0,-30pt)$) {$@$};
\draw (lf) -- (at1);
\node (f1) at ($(at1)+(-30pt,-30pt)$) {$f$};
\node (lz0) at ($(at1)+(30pt,-30pt)$) {$\lam z_0$};
\draw (at1) -- (f1);
\draw (at1) -- (lz0);
\node (at2) at ($(lz0)+(0,-30pt)$) {$@$};
\node (f2) at ($(at2)+(-40pt,-30pt)$) {$f$};
\node (lz1) at ($(at2)+(0pt,-30pt)$) {$\lam z_1$};
\node (z0) at ($(at2)+(40pt,-31pt)$) {$z_0$};
\draw (lz0) -- (at2);
\draw (at2) -- (f2);
\draw (at2) -- (lz1);
\draw (at2) -- (z0);
\node (at3) at ($(lz1)+(0,-30pt)$) {$@$};
\node (f2) at ($(at3)+(-40pt,-30pt)$) {$f$};
\node (lz2) at ($(at3)+(0pt,-30pt)$) {$\lam z_2$};
\node (z1) at ($(at3)+(40pt,-32pt)$) {$z_1$};
\draw (lz1) -- (at3);
\draw (at3) -- (f2);
\draw (at3) -- (lz2);
\draw (at3) -- (z1);
\draw[dotted] (lz2) -- ($(lz2)+(0,-20pt)$);
\node (BTBoh) at ($(BTo)+(20pt,-85pt)$) {$\BT{\comb{I}(zz)}$};
\node (eq) at ($(BTBoh)+(0,-15pt)$) {$\shortparallel$};
	\node (hi) at ($(eq)+(0,-15pt)$) {$@$};
	\node (lamx) at ($(hi)+(-30pt,-30pt)$) {$\lam x$};	
	\node (x) at ($(lamx)+(0,-30pt)$) {$x$};
	\node (y) at ($(hi)+(30pt,-30pt)$) {$@$};
	\node (T1) at ($(y)+(15pt,-30pt)$) {$z$};
	\node (T2) at ($(y)+(-15pt,-30pt)$) {$z$};	
	\draw (y) -- (T1);	
	\draw (y) -- (T2);		
	\draw (hi) -- (y);
	\draw (hi) -- (lamx);	
	\draw (lamx) -- (x);
\node (BTKst) at ($(BTZ)+(100pt,0pt)$) {$\BT{\comb{\comb{K}^*}}$};
\node (eq) at ($(BTKst)+(0,-12pt)$) {$\shortparallel$};	
\node (lx0) at ($(eq)+(0,-12pt)$) {$\lam x_0$};
\node (lx1) at ($(lx0)+(0,-30pt)$) {$\lam x_1$};
\node (lx2) at ($(lx1)+(0,-30pt)$) {$\lam x_2$};
\node (lx3) at ($(lx2)+(0,-30pt)$) {$\lam x_3$};
\draw (lx1) -- (lx0);
\draw (lx2) -- (lx1);
\draw (lx3) -- (lx2);
\draw[dotted] (lx3) -- ($(lx3) + (0,-20pt)$);
\node (BTZB) at ($(BTo)+(145pt,0pt)$) {$\BT{\comb{ZB}}$};
\node (eq) at ($(BTZB)+(0,-12pt)$) {$\shortparallel$};	
\node (lf0) at ($(eq)+(0,-12pt)$) {$\lam f_0$};
\node (lx0) at ($(lf0)+(0,-30pt)$) {$\lam x_0$};
\node (at0) at ($(lx0)+(0,-30pt)$) {$@$};
\node (at01) at ($(at0)+(30pt,-30pt)$) {$@$};
\node (f0) at ($(at01)+(-15pt,-30pt)$) {$f_0$};
\node (x0) at ($(at01)+(15pt,-31pt)$) {$x_0$};
\node (lf1) at ($(at0)+(-30pt,-30pt)$) {$\lam f_1$};
\node (lx1) at ($(lf1)+(0pt,-30pt)$) {$\lam x_1$};
\node (at1) at ($(lx1)+(0,-30pt)$) {$@$};
\node (at11) at ($(at1)+(30pt,-30pt)$) {$@$};
\node (f1) at ($(at11)+(-15pt,-30pt)$) {$f_1$};
\node (x1) at ($(at11)+(15pt,-31.5pt)$) {$x_1$};
\node (lf2) at ($(at1)+(-30pt,-30pt)$) {$\lam f_2$};
\node (lx2) at ($(lf2)+(0pt,-30pt)$) {$\lam x_2$};
\draw (lf0) -- (lx0);
\draw (at0) -- (lx0);
\draw (at0) -- (at01);
\draw (at0) -- (lf1);
\draw (lf1) -- (lx1);
\draw (f0) -- (at01);
\draw (x0) -- (at01);
\draw (at1) -- (lx1);
\draw (lf2) -- (at1);
\draw (at11) -- (at1);
\draw (lf2) -- (lx2);
\draw (at11) -- (f1);
\draw (at11) -- (x1);
\draw[dotted] (lx2) -- ($(lx2)+(0,-20pt)$);
\end{tikzpicture}
\caption{Examples of \CbV\ B\"ohm trees.}\label{fig:BTs}
\end{figure}

\end{exa}
\begin{samepage}
\begin{prop}\label{prop:M=VNimpBTM=BTN} 
For $M,N\in\Lam$, if $M =_\V N$ then $\BT{M} = \BT{N}$.
\end{prop}
\begin{proof}
By Proposition~\ref{prop:of_red}\eqref{prop:of_red2} (\ie\ confluence of $\to_\V$), $M =_\V N$ if and only if there exists a \lam-term $P$ such that $M\msto[\V] P$ and $N\msto[\V]P$.
By an iterated application of Lemma~\ref{lem:AM_preserved_by_toV} we get $\Appof{M} = \Appof{P}= \Appof{N}$, so we conclude $\BT{M} = \BT{N}$.
\end{proof}
\end{samepage}

Theorem~\ref{thm:BLTiffLamDef} below provides a characterization of those B\"ohm-like trees arising as the B\"ohm tree of some \lam-term, in the spirit of~\cite[Thm.~10.1.23]{Bare}.
To achieve this result, it will be convenient to consider a tree as a set of sequences closed under prefix.

We denote by $\nat^*$ the set of finite sequences of natural numbers.
Given $n_1,\dots,n_k\in\nat$, the corresponding sequence $\sigma\in\nat^*$ of length $k$ is represented by $\sigma=\Seq{n_1,\dots,n_k}$. 
In particular, $\Seq{}$ represents the empty sequence of length 0.
Given $\sigma\in\nat^*$ as above and $n\in\nat$, we write $n :: \sigma$ for the sequence $\Seq{n,n_1,\dots,n_k}$ and $\sigma;n$ for the sequence $\Seq{n_1,\dots,n_k,n}$.

Given a tree $T$, the sequence $i :: \sigma$ possibly determines a subtree that can be found going through the $(i+1)$-th children of $T$ (if it exists) and then following the path $\sigma$. Of course this is only the case if $i :: \sigma$ actually belongs to the domain of the tree.
The following definition formalizes this intuitive idea in the particular case of syntax trees of approximants.

\begin{defi} Let $\sigma\in\nat^*,A\in\App$. The \emph{subterm of $A$ at $\sigma$}, written $A_\sigma$, is defined by:
\[
	\begin{array}{lrcl}
	A_{\Seq{}} =A &(\lam x.A)_\sigma &=&\begin{cases}
					A_\tau&\textrm{if }\sigma = 0::\tau,\\
					\uparrow&\textrm{otherwise},
					\end{cases}\\~\\
	\bot_\sigma =\ \uparrow&
	(xA_0\cdots A_k)_\sigma &=&
					\begin{cases}
					(A_{i-1})_\tau&\textrm{if $1\le i\le k+1$ and }\sigma = i :: \tau,\\
					\uparrow&\textrm{otherwise},
					\end{cases}\\~\\
	&
	((\lam x.A')(yA_0\cdots A_k))_\sigma &=& \begin{cases}
					A'_\tau&\textrm{if }\sigma = 0::0::\tau,\\
					(A_{i-1})_\tau&\textrm{if $1\le i\le k+1$ and }\sigma = 1:: i :: \tau,\\		
					\uparrow&\textrm{otherwise.}
	\end{cases}\\
	\end{array}
\]
As a matter of notation, given an approximant $A'$, a subset $\Set X\subseteq\App$ and a sequence $\sigma\in\nat^*$, we write $\exists A_\sigma \simeq_{\Set X} A'$ whenever there exists $A\in\Set X$ such that $A_\sigma$ is defined and $A_\sigma = A'$.
\end{defi}

\begin{thm}\label{thm:BLTiffLamDef} 
Let $\Set X\subseteq \App$ be a set of approximants.
There exists $M\in\Lam$ such that $\AM = \Set X$ if and only if the following three conditions hold:
\begin{enumerate}
\item\label{thm:BLTiffLamDef1} 
	$\Set X$ is directed and downward closed w.r.t.\ $\BTle$,
\item\label{thm:BLTiffLamDef2} 
	$\Set X$ is r.e.\ (after coding),	
\item\label{thm:BLTiffLamDef3} 
	$\FV{\Set{X}}$ is finite.
\end{enumerate} 
\end{thm}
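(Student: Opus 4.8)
The plan is to prove the two implications separately: the forward direction is an essentially immediate consequence of the results already established, whereas the backward direction is a $\lambda$-definability construction that constitutes the technical core of the statement. For the forward implication, assume $\AM = \Set X$. Condition~\eqref{thm:BLTiffLamDef1} is precisely Proposition~\ref{prop:AMisIdeal} (and when $\AM = \emptyset$ all three conditions hold trivially). Condition~\eqref{thm:BLTiffLamDef3} follows from $\FV{\BT M}\subseteq\FV M$ (Remark~\ref{rem:Appof_iff_BT}) together with the finiteness of $\FV M$, since $M$ is a finite term. For condition~\eqref{thm:BLTiffLamDef2}, note that one-step $\V$-reduction is decidable, so the multistep relation $M\msto N$ is r.e.\ by enumerating finite reduction sequences; moreover, for a fixed reduct $N$ the predicate ``$A\BTle N$'' is decidable on $\App$. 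Hence $\AM = \set{A\in\App\st\exists N.\,M\msto N\ \land\ A\BTle N}$ is r.e.\ after coding.

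For the backward implication, assume $\Set X$ satisfies \eqref{thm:BLTiffLamDef1}--\eqref{thm:BLTiffLamDef3}. If $\Set X=\emptyset$ take $M=\Omega$; otherwise let $T=\Sup\Set X$ be the associated B\"ohm-like tree. Directedness guarantees that the labelling of each node of $T$ is well-defined and consistent: two approximants of $\Set X$ whose subterms at a position $\sigma\in\nat^*$ are both different from $\bot$ must carry the same top constructor and the same arity, so the stable label of $T$ at $\sigma$ is determined as soon as any approximant exposes a non-$\bot$ value there. Enumerating $\Set X$ (condition~\eqref{thm:BLTiffLamDef2}) thus yields a partial computable ``node function'' that, given $\sigma$, returns the building block occurring at $\sigma$ in $T$ --- namely $\bot$, an abstraction $\lam x$, an application head $x$ together with its arity, or a redex-like node $(\lam x.(-))(y\,(-)\cdots(-))$ --- and diverges exactly where $T$ is undefined. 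Finiteness of $\FV{\Set X}$ (condition~\eqref{thm:BLTiffLamDef3}) ensures that all variables occurring in $T$ come from a fixed finite set, so that a single finite term can mention them.

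The term $M$ is then obtained by $\lambda$-defining this node function and driving it with Plotkin's call-by-value fixed-point operator $\comb{Z}$: using a standard coding of $\nat^*$ and the $\lambda$-definability of computable functions, one feeds the node function to $\comb{Z}$ so that $M$ unfolds, via $\msto[\V]$, by reading off the label at the current position, emitting the corresponding building block, and recursively instantiating $\comb{Z}$ at each child with the appropriately extended position; the redex-like nodes are produced by guarding the recursive call in argument position under a non-value head, exactly mirroring the $C$-clause of the grammar in Definition~\ref{def:approx}\eqref{def:approx1}, so that the resulting $\sigma$-redex stays stuck. It then remains to verify $\AM=\Set X$. For $\supseteq$, each $A\in\Set X$ occupies only finitely many positions, so finitely many unfoldings of $M$ produce a reduct $N$ with $A\BTle N$, whence $A\in\AM$. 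For $\subseteq$, by Lemma~\ref{lem:A_preserved} every approximant of $M$ lies below some finite reduct of $M$, and finite reducts only reveal information already recorded in $T$; directedness and downward closure of $\Set X$ then force any such approximant into $\Set X$.

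The main obstacle is the backward construction, and specifically its two call-by-value-specific aspects that are absent from the call-by-name argument of~\cite[Thm.~10.1.23]{Bare}: faithfully reproducing the redex-like nodes $(\lam x.A)(yBA_1\cdots A_k)$ as genuinely stuck $\sigma$-redexes rather than contractible ones, and controlling the fixed-point unfolding so that it neither creates spurious approximants lying outside $\Set X$ nor fails to reach any member of $\Set X$. Reconciling these two requirements simultaneously is what makes the definability part delicate.
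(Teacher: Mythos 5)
Your overall route coincides with the paper's: the forward direction is argued exactly as there (Proposition~\ref{prop:AMisIdeal}, Remark~\ref{rem:Appof_iff_BT}, semi-decidability of $\msto$ plus decidability of $\BTle$ on $\App\times\Lam$), and the backward direction likewise tries to $\lambda$-define, from an enumeration of $\Set X$, a position-indexed oracle that is driven by $\comb{Z}$ to unfold the tree. The gap sits in the one sentence your whole backward construction rests on: the claim that enumerating $\Set X$ yields a \emph{partial computable} ``node function'' which, given $\sigma$, returns the building block of $T$ at $\sigma$ \emph{including the answer $\bot$}, and diverges exactly off the domain of $T$. No such computable function exists in general. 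The only semi-decidable events are of the form ``some $A\in\Set X$ exhibits a specific non-$\bot$ constructor at $\sigma$''; answering $\bot$ requires certifying that \emph{no} approximant in $\Set X$ ever exhibits a non-$\bot$ constructor at $\sigma$, which is a co-r.e.\ condition, not an r.e.\ one. The paper is careful on precisely this point: its term $P_{\Set X}$ only semi-decides $\exists A_\sigma \simeq_{\Set X} A'$ for a \emph{given} candidate $A'$, and its $F$ simply diverges (``not valuable'') when no non-$\bot$ witness is ever found; it never claims a procedure that recognizes $\bot$-labels.

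This is not a cosmetic defect, because $\bot$-labelled positions are exactly where call-by-value bites. If the label of $T$ at a queried position is genuinely $\bot$, then for the approximants of $\Set X$ carrying $\bot$ there to belong to $\Appof{M}$, the constructed term must at that position reduce \emph{to a value} (recall $\bot\BTle N$ forces $N$ to be a value), and to an opaque one, lest spurious approximants be created; whereas if the label is a variable $x$, the term must reduce to $x$ itself, hence must not have committed to an abstraction. Since ``the label is $\bot$'' is not semi-decidable, the unfolding cannot know when it is safe to commit to an opaque value, and your verification of $\Set X\subseteq\Appof{M}$ (``finitely many unfoldings produce a reduct $N$ with $A\BTle N$'') fails exactly for approximants containing $\bot$ at genuinely-$\bot$ positions. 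You flag this tension in your closing paragraph, but the proposal does not resolve it, and it cannot be resolved just by re-specifying the oracle. For what it is worth, the paper's own sketch does not settle it either: since its $F$ diverges at such positions, for $\Set X=\set{\bot,\lam x.\bot}$ it produces a term whose approximant set is $\set{\bot}$ rather than $\Set X$ (a correct witness would be, e.g., $\lam x.\lam y.\Omega$). So the step you glossed over is the actual crux of the theorem, and neither your argument nor a faithful reconstruction of the paper's sketch discharges it.
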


\begin{proof}[Proof sketch]
$(\Rightarrow)$ Let $M\in\Lam$ be such that $\Set X = \Appof M$, then \eqref{thm:BLTiffLamDef1} is satisfied by Proposition~\ref{prop:AMisIdeal} and \eqref{thm:BLTiffLamDef3} by Remark~\ref{rem:Appof_iff_BT}.
Concerning \eqref{thm:BLTiffLamDef2}, let us fix an effective bijective encoding $\# : \Lamb\to \nat$.
Then the set $\set{ \# A\st A \in \Set X}$ is r.e.\ because it is semi-decidable to determine if $M\msto N$ (just enumerate all $\V$-reducts of $M$ and check whether $N$ is one of them),
the set $\set{\#A\st A\in \App}$ and the relation $\BTle$ restricted to $\App\times\Lam$ are decidable.

$(\Leftarrow)$
Assume that $\Set X$ is a set of approximants satisfying the conditions (\ref{thm:BLTiffLamDef1}-\ref{thm:BLTiffLamDef3}). 

If $\Set X = \emptyset$ then we can simply take $M = \Omega$ since $\Appof{\Omega} = \emptyset$.

If $\Set X$ is non-empty then it is an ideal.
Since $\Set X$ is r.e., if $A'\in\App$ and $\sigma\in\nat^*$ are effectively given then the condition $\exists A_\sigma \simeq_{\Set X} A'$ is semi-decidable and a witness $\comb{A}$ can be computed.
Let $\code{\sigma}$ be the numeral associated with $\sigma$ under an effective encoding and $\code{A}$ be the \emph{quote of $A$} as defined by Mogensen\footnote{%
This encoding is particularly convenient because it is effective, defined on open terms by exploiting the fact that $\FV{\code{M}} = \FV{M}$ and works in the \CbV{} setting as well (easy to check).
See also~\cite[\S6.1]{Bare2} for a nice treatment.
}
 in~\cite{Mogensen92}, using a fresh variable $z_b\notin\FV{\Set X}$ to represent the $\bot$. 
(Such variable always exists because $\FV{\Set X}$ is finite.)
The \CbV{} \lam-calculus being Turing-complete, as shown by Paolini in~\cite{Paolini01}, there exists a \lam-term $P_{\Set X}$ satisfying:
\[
	P_{\Set X}\code{\sigma}\code{A'} =_\V \begin{cases}
	\code{\comb{A}}&\textrm{ if $\exists A_\sigma \simeq_{\Set X} A'$ holds},\\
	\textrm{not potentially valuable}&\textrm{ otherwise.}\\
	\end{cases}
\]
for some witness $\comb{A}$.
Recall that there exists an \emph{evaluator} $\comb{E}\in\Lambda^o$ such that $\comb{E}\code{M} =_\V M$ for all \lam-terms $M$.
Using the \lam-terms $\comb{E},P_{\Set X}$ so-defined and the recursion operator $\comb{Z}$, it is possible to define a \lam-term $F$ (also depending on $\Set X$) satisfying the following recursive equations:
\[
F\,\code{\sigma} =_\V\!\! \begin{cases}
				  x &\textrm{if } \exists A_\sigma \simeq_{\Set X} x,\\
				  \lam x.F\,\code{\sigma; 0}&\textrm{if }\exists A_\sigma \simeq_{\Set X} \lam x.A_1,\\
				  x(F\,\code{\sigma;1})\cdots (F\,\code{\sigma;k+1})&\textrm{if }\exists A_\sigma \simeq_{\Set X} xA_0\cdots A_k,\\
				  \big(\lam x.F\code{\sigma;0;0}\big)\big(y(F\code{\sigma;1;1})\cdots(F\code{\sigma;1;k+1})\big)&\textrm{if }\exists A_\sigma \simeq_{\Set X} (\lam x.A)(yA_0\cdots A_k),\\
				  \textrm{not valuable}&\textrm{otherwise.}\\
				  \end{cases}
\]
The fact that $\Set X$ is directed guarantees that, for a given sequence $\sigma$, exactly one of the cases above is applicable.
It is now easy to check that $\Appof{F\,\code{\Seq{}}} = \Set X$.
\end{proof}


\section{Call-By-Value Taylor Expansion}\label{Sec:TaylorExp}
The \emph{(call-by-name) resource calculus} $\lam_r$ has been introduced by Tranquilli in his thesis~\cite{TranquilliTh}, and its promotion-free fragment is the target language of Ehrhard and Regnier's Taylor expansion~\cite{EhrhardR06}.
Both the resource calculus and the notion of Taylor expansion have been adapted to the \CbV\ setting by Ehrhard~\cite{Ehrhard12}, using Girard's second translation of intuitionistic arrow in linear logic. 
Carraro and Guerrieri added to \CbV\ $\lam_r$ the analogous of the $\sigma$-rules and studied the denotational and operational properties of the resulting language $\lamr$ in~\cite{CarraroG14}.

\subsection{Its syntax and operational semantics.} 
We briefly recall here the definition of the \emph{call-by-value resource calculus} $\lamr$ from~\cite{CarraroG14}, and introduce some notations.

\begin{defi} The sets $\Val[r]$ of \emph{resource values}, $\Lam[s]$ of \emph{simple terms} and $\Lam[r]$ of \emph{resource terms} are generated by the following grammars (for $k\ge 0$):
\[
	\begin{array}{llcll}
	(\Val[r])&u,v&\bnfeq&x\mid\lam x.t&\textrm{resource values}\\
	(\Lam[s])&s,t&\bnfeq&st\mid \bag{v_1,\dots,v_k}&\textrm{simple terms}\\
	(\Lam[r])&e&\bnfeq& v\mid s&\textrm{resource terms}\\
	\end{array}
\]
The notions of \emph{$\alpha$-conversion} and \emph{free variable} are inherited from $\lsv$.
In particular, given $e\in\Lam[r]$, $\FV{e}$ denotes the set of free variables of $e$.
The \emph{size} of a resource term $e$ is defined in the obvious way, while the \emph{height $\hgt e$ of $e$} is the height of its syntax tree: 
\[
	\begin{array}{lcl}
	\hgt x &=& 0,\\
	\hgt {\lam x.t} &=& \hgt t + 1,\\
	\hgt {st} &=& \max\set{\hgt s, \hgt t} + 1,\\
	\hgt{\bag{v_1,\dots,v_k}} &=& \max\set {\hgt{v_i}\st i \le k} + 1.\\
	\end{array}
\]
\end{defi}

Resource values are analogous to the values of $\lsv$, namely variables and \lam-abstractions.
Simple terms of shape $\bag{v_1,\dots,v_n}$ are called \emph{bags} and represent finite multisets of linear resources --- this means that every $v_i$ must be used exactly once along the reduction.
Indeed, when a singleton bag $\bag{\lam x.t}$ is applied to a bag $\bag{v_1,\dots,v_n}$ of resource values, each $v_i$ is substituted for exactly one free occurrence of $x$ in $t$. 
Such an occurrence is chosen non-deterministically, and all possibilities are taken into account --- this is expressed by a set-theoretical union of resource terms (see Example~\ref{ex:reductions} below).
In case there is a mismatch between the cardinality of the bag and the number of occurrences of $x$ in $t$, the reduction relation ``raises an exception'' and the result of the computation is the empty set $\emptyset$.

Whence, we need to introduce some notations concerning sets of resource terms.

\begin{nota}\label{notation:sumsofresterm}
Sets of resource values, simple terms and resource terms are denoted by: 
\[
	\Set{U},\Set{V}\in\pow{\Val[r]},\quad \Set{S},\Set{T}\in\pow{\Lam[s]},\quad \Set{E}\in\pow{\Lam[r]},
\]
To simplify the subsequent definitions, given $\Set{S,T}\in\pow{\Lam[s]}$ and $\Set{V}_1,\dots,\Set{V}_k\in\pow{\Val[r]}$ we fix the following notations (as a syntactic sugar, not as actual syntax):
\[
	\begin{array}{ccl}
	\lam x.\Set{T} &=& \set{ \lam x.t \st t\in\Set{T}}\in\pow{\Val[r]},\\
	\Set{S\,T} &=&\{ st \st s\in\Set{S}, t\in\Set{T}\}\in\pow{\Lam[s]},\\
	\bag{\Set{V}_1,\dots,\Set{V}_k} &=& \set{\bag{v_1,\dots,v_k}\st v_1\in\Set{V}_1,\dots,v_k\in\Set{V}_k}\in\pow{\Lam[s]}.\\
	\end{array}
\]
Indeed all constructors of $\lamr$ are multi-linear, so we get $\lam x.\emptyset = \emptyset\Set{T}= \Set{S}\emptyset = [\emptyset,\Set{V}_1,\dots,\Set{V}_k] = \emptyset$.
\end{nota}

These notations are used in a crucial way, \eg, in Definition~\ref{def:Lamr:reds}\eqref{def:Lamr:reds2}.

\begin{defi} Let $e\in\Lam[r]$ and $x\in\Var$.
\begin{enumerate}
\item 
	Define the \emph{degree of $x$ in $e$}, written $\degx e$, as the number of  free occurrences of the variable $x$ in the resource term $e$. 
\item 
	Let $e\in\Lam[r]$, $v_1,\dots,v_n\in\Val[r]$ and $x\in\Var$.
	The \emph{linear substitution of $v_1,\dots,v_n$ for $x$ in $e$}, denoted by $e\lsubst{x}{v_1,\dots,v_n}\in\sums{\Lam[r]}$, is defined as follows:
\[
	e\lsubst{x}{v_1,\dots,v_n} = 
	\begin{cases}
	\big\{ e[x_1:=v_{\sigma(1)},\dots,x_n:=v_{\sigma(n)}] \st \sigma\in\mathfrak{S}_n\big\},&\textrm{if }\degx e = n,\\
	\emptyset,&\textrm{otherwise.}
	\end{cases}
\]
where $\mathfrak{S}_n$ is the group of permutations over $\{1,\dots,n\}$ and $x_1,\dots,x_n$ is an enumeration of the free occurrences of $x$ in $e$, so that $e\subst{x_i}{v_{\sigma(i)}}$ denotes the resource term obtained from $e$ by replacing the $i$-th free occurrence of $x$ in $e$ with the resource value $v_{\sigma(i)}$.
\end{enumerate}
\end{defi}

The definitions above open the way to introduce the following notions of reduction for~$\lamr$,
mimicking the corresponding reductions of $\lsv$ (\cf\ Definition~\ref{def:notionsofred}).

\begin{defi}\label{def:Lamr:reds}
\begin{enumerate}
\item\label{def:Lamr:reds1} 
	The \emph{$\beta_r$-reduction} is a relation $\to_{\beta_r}\,\subseteq \Lam[r]\times\sums{\Lam[r]}$ defined by the following rule (for $v_1,\dots,v_n\in\Val[r]$):
\[
	(\beta_r)\quad [\lam x.t][v_1,\dots,v_k] \to t\lsubst{x}{v_1,\dots,v_n}.
\]
Similarly, the \emph{$0$-reduction} $\to_0\,\subseteq \Lam[r]\times\sums{\Lam[r]}$ is defined by the rule:
\[
	(0)\quad [v_1,\dots,v_n]\,t \to \emptyset,\quad\textrm{ when }n\neq 1.\qquad
\]
The \emph{$\sigma$-reductions} $\to_{\sigma_1},\to_{\sigma_3}\,\subseteq \Lam[r]\times\Lam[r]$ are defined by the rules:
\[
	\begin{array}{lll}
	(\sigma_1)&[\lam x.t]s_1s_2\to [\lam x.ts_2]s_1,&\textrm{if }x\notin\FV{s_1},\\
	(\sigma_3)&[v]([\lam x.t]s)\to [\lam x.[v]t]s,&\textrm{if }x\notin\FV{v}\textrm{ and }v\in\Val[r].\\
	\end{array}
\]
\item\label{def:Lamr:reds2} The relation $\to_\R\ \subseteq\sums{\Lam[r]}\times\sums{\Lam[r]}$ is the contextual closure of the rules above, \ie\ $\to_\R$ is the smallest relation including $(\beta_r), (0), (\sigma_1), (\sigma_3)$ and satisfying the rules in Figure~\ref{fig:CTXTR}.
\begin{figure}[tt]
\[
\begin{array}{c}
\hspace{10pt}\infer{\lam x.t\to_\R\lam x.\Set{T}}{t\to_\R\Set{T}}\qquad
\infer{s\,t \to_\R\Set{S}\,t}{s\to_\R\Set{S}}\qquad
\infer{s\,t\to_\R s\,\Set{T}}{t\to_\R\Set{T}}\qquad
\infer{\bag{v_0,v_1,\dots,v_k}\to_\R \bag{\Set V_0,v_1,\dots,v_k}}{v_0\to_\R \Set V_0}\\[2ex]
\infer{\set{e} \cup\Set E_2\to_\R\Set E_1 \cup \Set E_2}{e\to_\R\Set E_1&e\notin\Set E_2}\\
\end{array}
\]
\caption{Contextual rules for $\to_\R\ \subseteq\sums{\Lam[r]}\times\sums{\Lam[r]}$.}\label{fig:CTXTR}
\end{figure}
\item\label{def:Lamr:reds3} The transitive and reflexive closure of $\to_\R$ is denoted by $\msto[\R]$, as usual. 
\end{enumerate}
\end{defi}

\begin{exa}\label{ex:reductions} We provide some examples of reductions:
\begin{enumerate}
\item $\bag{\lam x.\bag{x}\bag{x}}\bag{\lam y.\bag{y},z}\to_{\beta_r} 
	\set{ \bag{\lam y.\bag y}\bag{z}, \bag{z}\bag{\lam y.\bag y} }\to_{\beta_r} 
	\set{ \bag{z}, \bag{z}\bag{\lam y.\bag y}}$.
\item $\bag{\lam x.\bag{x,x}}\bag{\lam y.\bag{y},z}\to_{\beta_r} 
	\set{\bag{\lam y.\bag{y},z},\bag{z,\lam y.\bag{y}}} = \set{\bag{\lam y.\bag{y},z}$}.
\item $\bag{\lam y.\bag{\lam x.\bag{x,x}\bag{y}}}(\bag{z}\bag{w})\bag{\comb{I},w}\to_{\sigma_1}
	\bag{\lam y.\bag{\lam x.\bag{x,x}\bag{y}}\bag{\comb{I},w}}(\bag{z}\bag{w})
	\to_{\beta_r}
	\set{\bag{\lam y.\bag{\comb{I},w}\bag{y}}(\bag{z}\bag{w})}
	$ $\to_0 \emptyset$. 
	Note that $(\sigma_1)$ is used to unblock an otherwise stuck $\beta_r$-redex.
\item\label{ex:same4} $\bag{\comb{I}}(\bag{\lam x.\bag{\lam y.\bag{x}\bag{y} }}\bag{z}\bag{w})\to_{\beta_r}
	\set{
	\bag{\lam x.\bag{\lam y.\bag{x}\bag{y} }}\bag{z}\bag{w}
	}\to_{\beta_r}
	\set{\bag{\lam y.\bag{z}\bag{y} }\bag{w}}
	\to_{\beta_r}
	\set{\bag{z}\bag{w}}
	$. 
\item\label{ex:same5}	$\bag{\comb{I}}(\bag{\lam x.\bag{\lam y.\bag{x}\bag{y} }}\bag{z}\bag{w}) \to_{\sigma_3}
	\bag{\lam x.\bag{\comb{I}}\bag{\lam y.\bag{x}\bag{y} }}\bag{z}\bag{w}\to_{\sigma_1}
	\bag{\lam x.\bag{\comb{I}}\bag{\lam y.\bag{x}\bag{y} }\bag{w}}\bag{z}\to_{\beta_r}
	\set{\bag{\comb{I}}\bag{\lam y.\bag{z}\bag{y} }\bag{w}}
	\to_{\beta_r}
	\set{\bag{\lam y.\bag{z}\bag{y}}\bag{w}}
	\to_{\beta_r}
	\set{\bag{z}\bag{w}}
	$.
\end{enumerate}
Remark that \eqref{ex:same4} and \eqref{ex:same5} constitute two different reduction sequences originating from the same simple term.
\end{exa}

As shown in~\cite{CarraroG14}, this notion of reduction enjoys the following properties.

\begin{prop}\label{prop:confluence_Lamr} 
The reduction $\to_\R$ is confluent and strongly normalizing.
\end{prop}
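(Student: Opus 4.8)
Proposition~\ref{prop:confluence_Lamr} asserts that $\to_\R$ is confluent and strongly normalizing on $\sums{\Lam[r]}$.

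The plan is to establish strong normalization first, then confluence via Newman's Lemma. For strong normalization, I would assign to each resource term a numerical \emph{measure} that strictly decreases under every reduction rule, and then lift this to finite sets of resource terms. The key observation is that the resource calculus is \emph{resource-sensitive}: a $\beta_r$-redex $\bag{\lam x.t}\bag{v_1,\dots,v_n}$ either matches ($\degx[x]{t}=n$) and performs exactly $n$ non-duplicating substitutions, or mismatches and collapses to $\emptyset$. Unlike the ordinary $\lambda$-calculus, substitution here is \emph{linear}: each $v_i$ replaces exactly one occurrence of $x$, so no subterm is ever duplicated. Consequently the total size of each summand is strictly smaller after a $\beta_r$-step (we consume one abstraction, one bag, and leave the size otherwise bounded by a combination of the sizes of $t$ and the $v_i$'s), and the $0$-rule only shrinks things to $\emptyset$. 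First I would define a weight on simple terms and resource values by structural recursion (roughly, size with abstractions and applications weighted so that a $\beta_r$-step strictly decreases it), check that it strictly decreases on every single-term rule $(\beta_r),(0),(\sigma_1),(\sigma_3)$, and then define the measure of a set $\Set E\in\sums{\Lam[r]}$ as the multiset of the weights of its elements, ordered by the Dershowitz–Manna multiset order. Because $\to_\R$ either reduces one element $e$ to a finite set of strictly lighter elements (the linear-substitution and $\sigma$-rules do not increase weight) and then takes the set-union, the multiset measure strictly decreases at each $\to_\R$-step, which is well-founded; hence strong normalization.

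For the $\sigma_1$ and $\sigma_3$ rules I would check weight-decrease separately, since these are permutations that do not obviously shrink size; as in the ambient calculus (Proposition~\ref{prop:of_red}\eqref{prop:of_red1}, where $\sigma$ alone is strongly normalizing), one assigns a suitable positional weight that counts the ``depth'' of abstractions waiting to be permuted outward, so that each $\sigma$-step strictly lowers it while leaving the $\beta_r/0$-measure compatible. The cleanest route is a single lexicographic or polynomial interpretation dominating all four rules at once, so that a single well-founded measure handles everything uniformly.

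Given strong normalization, confluence follows from \emph{local} confluence by Newman's Lemma. I would verify local confluence by a critical-pair analysis: examine the ways two redexes in a set $\Set E$ can overlap. Redexes sitting in distinct summands, or in disjoint positions of one summand, commute trivially by the contextual rules of Figure~\ref{fig:CTXTR} and the multi-linearity conventions of Notation~\ref{notation:sumsofresterm} (all constructors distribute over $\cup$ and annihilate $\emptyset$). The genuine overlaps are: two $\beta_r$-redexes nested via a bag, a $\beta_r$- and a $\sigma$-redex adjacent (as in Example~\ref{ex:reductions}\eqref{ex:same4}--\eqref{ex:same5}, which already exhibits two confluent sequences from one simple term), and the two $\sigma$-rules with each other. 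Each overlapping pair must be shown to reconverge, crucially using that linear substitution is \emph{symmetric} in the sense that substituting several resource values and then reducing commutes with reducing first and substituting afterward — the set-theoretic union over permutations $\sigma\in\mathfrak{S}_n$ absorbs the order in which substitutions and inner reductions are interleaved. \textbf{The main obstacle} is precisely this interaction between linear substitution and reduction: one needs a substitution lemma stating that if $t\to_\R\Set T$ then $t\lsubst{x}{v_1,\dots,v_n}$ reduces (in the lifted sense) to $\Set{T}\lsubst{x}{v_1,\dots,v_n}$, and symmetrically that reducing inside one of the $v_i$ commutes with the substitution, all while respecting the degree-matching condition that produces $\emptyset$ on mismatch. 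Once that commutation lemma is in place, the critical pairs close routinely, local confluence holds, and Newman's Lemma delivers confluence. Since this is a known result from~\cite{CarraroG14}, I would ultimately cite it, but the argument above is the natural self-contained route.
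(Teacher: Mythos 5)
Your proposal is sound. Note that the paper does not actually prove this proposition: it cites~\cite{CarraroG14} for the full result and only adds a one-paragraph remark justifying strong normalization, and that remark is exactly the first half of your argument --- a $0$-step annihilates the term, a $\beta_r$-step strictly shrinks the size of every summand because linear substitution never duplicates and an abstraction plus a bag are consumed, and the size-preserving $\sigma$-rules are handled separately by their own termination measure (Proposition~\ref{prop:of_red}\eqref{prop:of_red1}), the whole thing being lifted to $\sums{\Lam[r]}$ by a multiset order. For confluence the paper offers nothing beyond the citation, and your completion via Newman's Lemma is the natural one given that strong normalization is already in hand; you correctly isolate the only nontrivial ingredient, namely the commutation of linear substitution with reduction (which is essentially the Substitution Lemma~\ref{lem:TeSubstLemma} transposed to the resource level, with the degree-mismatch-to-$\emptyset$ cases checked on both sides). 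So your route is a self-contained elaboration of what the paper delegates to~\cite{CarraroG14}, not a divergence from it.
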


Note that strong normalization is straightforward to prove ---
indeed, a $0$-reduction annihilates the whole term, $\sigma$-rules are strongly normalizing (\cf\ Proposition~\ref{prop:of_red}\eqref{prop:of_red1}) and contracting a $\beta_r$-redex in a resource term $e$ produces a set of resource terms whose size is strictly smaller than $e$ because no duplication is involved and a $\lam$-abstraction is erased.

As a consequence of Proposition~\ref{prop:confluence_Lamr}, the \emph{$\R$-normal form} of $\Set{E}\in\sums{\Lam[r]}$ always exists and is denoted by $\nf[\R]{\Set E}$, i.e.\ $\Set{E}\msto[\R]\nf[\R]{\Set E}\in\sums{\Lam[r]}$ and there is no $\Set E'$ such that $\nf[\R]{\Set E}\to_\R \Set E'$.
Simple terms in $\R$-nf are called ``resource approximants'' because their role is similar to the one played by finite approximants of B\"ohm trees, except that they approximate the normal form of the Taylor expansion.  They admit the following syntactic characterization.

\begin{defi}\label{def:normalized_resterms}
A \emph{resource approximant} $a\in\Lam[s]$ is a simple term generated by the following grammar (for $k,n\ge 0$, where $\bag{x^n}$ is the bag $\bag{x,\dots,x}$ having $n$ occurrences of $x$):
\[
	\begin{array}{lcl}
	a &\bnfeq& b\mid c\\
	b&\bnfeq&\bag{x^n}\mid \bag{\lam x.a_1,\dots,\lam x.a_n}\mid \bag{x}ba_1\cdots a_k\\	
	c&\bnfeq& \bag{\lam x.a}(\bag{y}ba_1\cdots a_k)\\
	\end{array}
\]
\end{defi}
It is easy to check that resource approximants are $\R$-normal forms.

\begin{exa} The following are examples of resource approximants:
\begin{enumerate}
\item\label{ex:resapp1} $\bag{\lam x.\bag{x},\lam x.\bag{x,x},\lam x.\bag{x,x,x}}$ and 
 $\bag{\lam x.\bag{x}\bag{x,x},\lam x.\bag{x}\bag{x,x,x}}$ belong to the Taylor expansion of some \lam-term (as we will see in Example~\ref{exa:TeXeMpLe}).
\item\label{ex:resapp3} $\bag{\lam x.\bag{x,x,x},\lam x.\bag{y,y,y}}$ does not, as will be shown in Proposition~\ref{prop:maximalCl_iff_TeM}.
\end{enumerate}
\end{exa}

\subsection{Characterizing the Taylor Expansion of a \lam-Term}\label{ssec:Te}
We recall the definition of the Taylor expansion of a \lam-term in the \CbV\ setting, following~\cite{Ehrhard12,CarraroG14}.
Such a Taylor expansion translates a \lam-term $M$ into an infinite set\footnote{%
This set can be thought of as the \emph{support} of the actual Taylor expansion, which is an infinite formal linear combination of simple terms taking coefficients in the semiring of non-negative rational numbers.
} of simple terms.
Subsequently, we characterize those sets of resource terms arising as a Taylor expansion of some $M\in\Lam$.

\begin{defi}\label{def:Taylor} 
The \emph{Taylor expansion} $\Te{M}\subseteq\Lam[s]$ of a \lam-term $M$ is an infinite set of simple terms defined by induction as follows:
\[
	\begin{array}{lcl}
	\Te{x}&=&\set{\bag{x^n} \st n\ge 0},\textrm{ where }[x^n] = [x,\dots,x]\textrm{ ($n$ times)},\\
	\Te{\lam x.N}&=&\set{\bag{\lam x.t_1,\dots,\lam x.t_n} \st n\ge 0, \forall i\le n,\ t_i\in\Te{N} },\\
	\Te{PQ}&=&\set{st\st s\in\Te{P},\ t\in\Te{Q}}.\\
	\end{array}
\]
\end{defi}

From the definition above, we get the following easy properties.
\begin{rem}\label{rem:emptymset}\ 
\begin{enumerate}
\item\label{rem:emptymset1} 
	$[]\in\Te{V}$ if and only if $V\in\Val$.
\item\label{rem:emptymset2} 
	Every occurrence of a $\beta_r\sigma$-redex in $t\in\Te{M}$ arises from some $\V$-redex in $M$.
\item\label{rem:emptymset3} 
	By exploiting Notation~\ref{notation:sumsofresterm}, we can rewrite the Taylor expansion of an application or an abstraction as follows:
\[
	\begin{array}{lcl}
	\Te{PQ} &=& \Te{P}\Te{Q},\\
	\Te{\lam x.N} &=& \bigcup_{n\in\nat}\set{\bag{\underbrace{\lam x.\Te{N},\dots,\lam x.\Te{N}}_{n \textrm{ times}}}}.\\
	\end{array}
\] 
\end{enumerate}
\end{rem}

\begin{exa}\label{exa:TeXeMpLe} We calculate the Taylor expansion of some \lam-terms.
\begin{enumerate}
\item $\Te{\comb{I}} = \set{\bag{\lam x.\bag{x^{n_1}},\dots,\lam x.\bag{x^{n_k}}}\st k\ge 0, \forall i\le k, n_i\ge 0 }$,
\item $\Te{\Delta} = \set{ \bag{\lam x.\bag{x^{n_1}}\bag{x^{m_1}},\dots,\lam x.\bag{x^{n_k}}\bag{x^{m_k}}} \st k\ge0,\forall i \le k, m_i,n_i\ge 0}$,
\item $\Te{\Delta\comb{I}} = \set {st\st s\in\Te{\Delta},t\in\Te{\comb{I}}}$,
\item $\Te{\Omega} = \set {st\st s,t\in\Te{\Delta}}$,
\item $\Te{\lambda z.yyz} = \set{\bag{\lambda z.\bag{y^{\ell_1}}\bag{y^{m_1}}\bag{z^{n_1}},\dots,\lambda z.\bag{y^{\ell_k}}\bag{y^{m_k}}\bag{z^{n_k}}} \st k\ge 0,\forall i\le k,\ell_i,m_i,n_i\ge 0}$,
\item $\Te{\lambda y. f(\lambda z.yyz)} = 
\set{
	\bag{\lambda y.\bag{f^{n_1}}t_1,\dots,
	\lambda y.\bag{f^{n_k}}t_k
	} \st k\ge 0,\forall i\le k,n_i\ge 0,t_i\in\Te{\lambda z.yyz}
}$,
\item $\Te{\comb{Z}} = \set {\bag{\lam f.s_1t_1,\dots,\lam f.s_kt_k}\st k\ge 0,\forall i \le k, s_i,t_i\in\Te{\lambda y. f(\lambda z.yyz)}}$.
\end{enumerate}
\end{exa}

These examples naturally brings to formulate the next remark and lemma.

\begin{rem}\label{rem:about0redexes}
An element $t$ belonging to the Taylor expansion of a \lam-term $M$ in $\V$-nf might not be in $\R$-nf, due to the possible presence of $0$-redexes.
For an example, consider $\bag{\lam x.\bag{x,x}\bag{x,x},\lam x.\bag{x}\bag{x,x,x}}\in\Te{\Delta}$.
Notice that, since the reduction does not modify the cardinality of a bag, a more refined definition of Taylor expansion eliminating all 0-redexes is possible by substituting the application case with the following:
\[
	\Te{VM_0\cdots M_k} = \set{\bag{v}t_0\cdots t_k \st \bag{v}\in\Te{V}, \forall i\,.\, (0\le i\le k)\ t_i \in\Te{M_i}}
\]
We prefer  to keep Ehrhard's original notion because it has a simpler inductive definition.
\end{rem}

The following statement concerning the Taylor expansion of \lam-terms in $\V$-nf does hold.

\begin{lem}\label{lem:TeMnfisnf}
For $M\in\Lam$, the following are equivalent:
\begin{enumerate}
\item\label{lem:TeMnfisnf1} $M$ is in $\V$-normal form,
\item\label{lem:TeMnfisnf2} every $t\in\Te{M}$ is in $\beta_r\sigma$-normal form.
\end{enumerate}
\end{lem}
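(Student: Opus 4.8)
The plan is to treat the two implications asymmetrically. The implication $(1)\Rightarrow(2)$ is essentially free: it is precisely Remark~\ref{rem:emptymset}\eqref{rem:emptymset2}. If $M$ is in $\V$-normal form it contains no $\beta_v$-, $\sigma_1$- nor $\sigma_3$-redex, and since every $\beta_r\sigma$-redex occurring in some $t\in\Te M$ is the trace of a $\V$-redex of $M$, no such $t$ can contain one; thus every element of $\Te M$ is in $\beta_r\sigma$-normal form.

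The real work is the converse $(2)\Rightarrow(1)$, which I would prove in contrapositive form --- \emph{if $M$ is not in $\V$-normal form, then some $t\in\Te M$ is not in $\beta_r\sigma$-normal form} --- by induction on $M$. The case $M=x$ is vacuous. For $M=\lam x.N$ the elements of $\Te{\lam x.N}$ are bags $\bag{\lam x.t_1,\dots,\lam x.t_n}$, whose only possible redexes lie inside the $t_i\in\Te N$; since $\lam x.N$ fails to be a $\V$-nf exactly when $N$ does, the induction hypothesis yields a non-normal $t_1\in\Te N$ and hence the non-normal singleton bag $\bag{\lam x.t_1}\in\Te{\lam x.N}$.

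The decisive case is the application $M=PQ$, for which $\Te{PQ}=\Te P\,\Te Q$. If $P$ or $Q$ already fails to be a $\V$-nf, the induction hypothesis supplies a non-normal $s\in\Te P$ (resp.\ $t\in\Te Q$), and pairing it with any element of the other expansion --- always possible since every Taylor expansion is nonempty, e.g.\ $\bag{}\in\Te V$ for values --- produces a non-normal $st$. Otherwise $P,Q$ are $\V$-nf's but $M$ itself is a redex at its root, and I would exhibit a matching resource redex: for a $\beta_v$-redex $(\lam x.P')Q$ with $Q\in\Val$, take $st=\bag{\lam x.p}\bag{}$ with $p\in\Te{P'}$ and $\bag{}\in\Te Q$ (Remark~\ref{rem:emptymset}\eqref{rem:emptymset1}), a $\beta_r$-redex; for a $\sigma_1$-redex with $P=(\lam x.P_1)P_2$, take $st=\bag{\lam x.p}s_1\,t$ with $s_1\in\Te{P_2}$ and $t\in\Te Q$, a $\sigma_1$-redex; for a $\sigma_3$-redex with $P\in\Val$ and $Q=(\lam x.Q_1)Q_2$, take $st=\bag v(\bag{\lam x.q}s')$ with $\bag v\in\Te P$ and $s'\in\Te{Q_2}$, a $\sigma_3$-redex. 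In each $\sigma$-case one first $\alpha$-renames the bound variable to be globally fresh, so that the free-variable side conditions hold on both sides at once, using that $\FV e\subseteq\FV N$ whenever $e\in\Te N$.

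The main obstacle, and the only step needing genuine care, is the shape-matching in the application case: one must verify that a top-level resource $\sigma_1$- or $\sigma_3$-redex can sit inside a term $s\,t\in\Te P\,\Te Q$ only in the ways exploited above. For instance a $\sigma_1$-redex $\bag{\lam x.u}s_1s_2$ requires $s$ to be the application $\bag{\lam x.u}s_1$, and the definition of $\Te P$ then forces $P$ itself to be an application with an abstraction in head position; dually for $\sigma_3$. Pinning down this correspondence between the root redexes of $PQ$ and those of its resource approximants --- together with the routine checks that internal redexes are absorbed by the induction hypothesis and that the side conditions transfer after renaming --- is all that the proof requires.
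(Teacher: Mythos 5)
Your direction (2~$\Rightarrow$~1) is essentially the paper's own argument: the same contrapositive, the same three witness redexes $[\lam x.s][]$, $[\lam x.t_1]s_1s_2$ and $[v]([\lam x.s]t')$ for the three kinds of root $\V$-redex, and the same absorption of internal redexes by contextual closure. You organize it as an induction on $M$ where the paper splits into ``redex at the root'' versus ``redex under a context'', but that is cosmetic; your extra care about $\alpha$-renaming so that the $\sigma$ side conditions transfer, and about why only these three root shapes can occur for an application of two $\V$-normal forms, is correct and if anything more explicit than the paper.

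Where you diverge, and where I would push back, is (1~$\Rightarrow$~2). You discharge it entirely by citing Remark~\ref{rem:emptymset}\eqref{rem:emptymset2}, but that remark is stated without proof and its content \emph{is} this implication; the paper does not lean on it here and instead proves (1~$\Rightarrow$~2) by induction over the grammar of $\V$-normal forms from Lemma~\ref{lem:char_V_normal form}. The point that genuinely needs checking --- and it is exactly the ``shape-matching'' issue you yourself flag for the converse direction --- is that no $\beta_r$- or $\sigma$-redex can appear \emph{at the junctions} of $t\in\Te M$ even when no immediate subterm of $M$ is a redex. Concretely: for $M=xHG_1\cdots G_k$ one must rule out a $\sigma_3$-redex of the form $[x]([\lam z.s']t')$, which requires knowing that $H$-terms cannot have the shape $(\lam z.P)Q$; and for $M=(\lam x.G)(yHG_1\cdots G_k)$ one must rule out both a $\beta_r$- and a $\sigma_3$-redex at the root, using that every element of $\Te{yHG_1\cdots G_k}$ has the form $[y^m]st_1\cdots t_k$, hence is neither a bag of values nor of the shape $[\lam z.s]s'$. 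Either prove the remark in this precise form or, better, run that induction explicitly; as written, the easy half of your proof is resting on an unproven statement whose justification is the very claim at issue.
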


\begin{proof} (\ref{lem:TeMnfisnf1} $\Rightarrow$ \ref{lem:TeMnfisnf2})
Using Lemma~\ref{lem:char_V_normal form}, we proceed by induction on the normal structure of $M$.

If $M = x$ then $t\in\Te{M}$ entails $t = [x,\dots,x]$ which is in $\V$-nf.

If $M = \lam x.G$ then $t\in\Te{M}$ implies that $t = [\lam x.t_1,\dots,\lam x.t_n]$ where $t_i\in\Te{G}$ for all $i\le n$. 
By the induction hypothesis each $t_i$ is in $\beta_r\sigma$-nf, hence, so is $t$.

If $M = xHG_1\cdots G_k$ then $t\in\Te{M}$ entails $t = \bag {x^n}st_1\cdots t_k$ for some $n\ge 0$, $s\in\Te{H}$ and $t_i\in\Te{G_i}$ ($1\le i \le k$). By induction hypothesis $s,t_1,\dots ,t_k$ are in $\beta_r\sigma$-nf, so $t$ is in $\beta_r$-nf.
Concerning $\sigma$-rules, $t$ could have a $\sigma_3$-redex in case $s = [\lam x.s']t'$ but this is impossible since $s\in\Te{H}$ and $H$ cannot have shape $(\lam x.P)Q$.

If $M = (\lam x.G)(yHG_1\cdots G_k)$ and $t\in\Te M$ then $t = [\lam x.s_1,\dots,\lam x.s_n]t'$ for some $n\ge 0$, $s_i\in\Te{G}$, $1\le i \le n$, and $t'\in\Te{yHG_1\cdots G_k}$.
By induction hypothesis, the resource terms $s_1,\dots,s_n$ and $t'$ are in $\beta_r\sigma$-nf. 
In principle, when $n=1$, the simple term $t$ might have the shape either of a $\beta_r$-redex or of a $\sigma_3$-redex. Both cases are impossible since $t'\in\Te{yHG_1\cdots G_k}$ entails $t' = [y^m]st_1\cdots t_k$ which is neither a resource value nor a simple term of shape $[\lam z.s]s'$.
We conclude that $t$ is in $\beta_r\sigma$-nf.

 (\ref{lem:TeMnfisnf2} $\Rightarrow$ \ref{lem:TeMnfisnf1}) We prove the contrapositive. 
 Assume that $M$ is not in $\V$-nf, then either $M$ itself is a $\beta_\V$- or $\sigma$-redex, or it contains one as a subterm. Let us analyze first the former case.
\begin{enumerate}
\item[$(\beta_\V)$] 
 If $M = (\lam x.N)V$ for $V\in\Val$ then, by Remark~\ref{rem:emptymset}\eqref{rem:emptymset1}, the $\beta_r$-redex $[\lam x.s][]$ belongs to $\Te{M}$ for every $s\in\Te{N}$.
\item[$(\sigma_1)$] 
  If $M = (\lam x.N)PQ$ then for all $s\in\Te N, t_1\in\Te P, t_2\in\Te Q$ we have $[\lam x.s]t_1t_2\in\Te{M}$ and this simple term is a $\sigma_1$-redex.
\item[$(\sigma_3)$] 
  If $M = V((\lam x.P)Q)$ for $V\in\Val$ then for all $[v]\in\Te{V},s\in\Te{P}$ and $t'\in\Te{Q}$ we have $[v]([\lam x.s]t')\in\Te{M}$ and this resource term is a $\sigma_2$-redex.
\end{enumerate}  
  Otherwise $M =C\hole{M'}$ where $C$ is a context and $M'$ is a $\V$-redex having one of the shapes above; in this case there is $t\in\Te{M}$ containing a $\beta_r\sigma$-redex $t'\in\Te{M'}$ as a subterm.
\end{proof}

The rest of the section is devoted to provide a characterization of all sets of simple terms that arise as the Taylor expansion of some \lam-term $M$. 
 
\begin{defi}\label{def:heightandstuff}
\begin{enumerate}
\item\label{def:heightandstuff1} The \emph{height of a non-empty set $\Set E\subseteq{\Lam[r]}$}, written $\hgt{\Set E}$, is the maximal height of its elements, if it exists, and in this case we say that $\Set E$ has \emph{finite} height.
Otherwise, we define $\hgt{\Set E} =\aleph_0$ and we say that $\Set E$ has \emph{infinite} height. 
\item\label{def:heightandstuff2} Define a \emph{coherence relation} $\coh\ \subseteq\Lam[r]\times\Lam[r]$ as the smallest relation satisfying:
\[
	\infer{x\coh x}{\phantom{x\coh x}}
	\qquad
	\infer{\lam x.s\coh \lam x.t}{s\coh t}
	\qquad
	\infer{\bag{v_1,\dots,v_k}\coh \bag{v_{k+1},\dots,v_n}}{v_i\coh v_j &(\forall i,j\le n)}
	\qquad
	\infer{s_1t_1\coh s_2t_2}{s_1\coh s_2&t_1\coh t_2}
\]
\item A subset $\Set{E}\subseteq{\Lam[r]}$ is a \emph{clique} whenever $e\coh e'$ holds for all $e,e'\in\Set{E}$. 
\item A clique $\Set E$ is \emph{maximal} if, for every $e\in\Lam[r]$, $\Set E \cup \{e\}$ is a clique entails $e\in\Set E$.
\end{enumerate}
\end{defi}

The coherence relation above is inspired by Ehrhard's work in the call-by-name setting~\cite{EhrhardR08}.
Note that $\coh$ is symmetric, but neither reflexive as $\bag{x,y}\not\coh\bag{x,y}$ nor transitive since $[x] \coh [] \coh [y]$ but $[x] \not\coh [y]$. 

\begin{exa} Notice that all sets in Example~\ref{exa:TeXeMpLe} are maximal cliques of finite height. 
For instance, $\hgt{\Te{\comb{I}}} = 3$ and by following the rules in Definition~\ref{def:heightandstuff}\eqref{def:heightandstuff2} we have $u\coh t$ for all $t\in\Te{\comb{I}}$ if and only if either $u=\bag{}$ or $u = \bag{x^n}$ for some $n\in\nat$ if and only if $u\in\Te{\comb{I}}$. 
Therefore $\Te{\comb{I}}$ is maximal.

The rest of the section is devoted to proving that these two properties actually characterize those sets that the Taylor expansions of \lam-terms (Proposition~\ref{prop:maximalCl_iff_TeM}).
\end{exa}

We may now characterize resource approximants (Definition~\ref{def:normalized_resterms}).

\begin{lem}\label{lem:resourceapprox} 
Let $t\in\Lam[s]$ be such that $t\coh t$. 
Then $t$ is in $\R$-nf iff $t$ is a resource approximant.
\end{lem}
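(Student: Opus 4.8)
Lemma~\ref{lem:resourceapprox} asserts that for a simple term $t$ with $t\coh t$ (self-coherent), being in $\R$-normal form is equivalent to being a resource approximant in the sense of Definition~\ref{def:normalized_resterms}.

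Let me think about this carefully.

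First, the self-coherence condition $t \coh t$. Why is this needed? Because $\coh$ is not reflexive — a bag like $[x,y]$ is not self-coherent. Looking at the coherence rules, a bag $[v_1,\dots,v_k]$ is self-coherent iff all $v_i \coh v_j$ (including $i=j$, so each $v_i$ self-coherent, and also cross-coherence). So a self-coherent bag has all its elements pairwise coherent. This means, for instance, a bag of variables must be all the *same* variable (since $x \coh y$ fails unless... wait, $[x] \coh []$, $[] \coh [y]$, but $[x] \not\coh [y]$). Actually for bags of variables: $[x,y]$ self-coherent requires $x \coh y$, but the coherence rule for variables is only $x \coh x$. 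So self-coherent bags of variables are $[x^n]$ for a single variable $x$. Good — this matches the grammar's $b \bnfeq [x^n]$.

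Similarly, a self-coherent bag of abstractions $[\lam x.s_1, \dots, \lam x.s_n]$ requires all $\lam x.s_i \coh \lam x.s_j$, i.e., $s_i \coh s_j$ for all $i,j$. And crucially, they must all abstract the *same* variable $x$ (since $\lam x.s \coh \lam y.t$ is not derivable for $x \neq y$ from the given rules — the rule is $\lam x.s \coh \lam x.t$). This matches the grammar: $b \bnfeq [\lam x.a_1,\dots,\lam x.a_n]$.

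So the self-coherence is exactly what forces the "uniform" structure (same variable in bags) that appears in the resource approximant grammar.

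**The approach.** This is an "iff" between "$\R$-nf" and "resource approximant", under the hypothesis $t \coh t$. The excerpt already states (just after Definition~\ref{def:normalized_resterms}): "It is easy to check that resource approximants are $\R$-normal forms." So one direction — resource approximant $\Rightarrow$ $\R$-nf — is essentially given (and doesn't even need self-coherence). The real content is the forward direction: a self-coherent $t$ in $\R$-nf is a resource approximant. This is structurally analogous to the forward direction of Lemma~\ref{lem:char_V_normal form} (characterizing $\V$-normal forms of $\lam$-terms).

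My plan: Prove $(\Leftarrow)$ first (resource approximant $\Rightarrow$ $\R$-nf) by a straightforward induction on the grammar of resource approximants, checking no $\beta_r$-, $0$-, or $\sigma$-redex can occur. This mirrors the $(\Leftarrow)$ direction of Lemma~\ref{lem:char_V_normal form}. For $(\Rightarrow)$, proceed by induction on the structure of the self-coherent simple term $t$ in $\R$-nf.

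First I would establish the normal-form shape. Every simple term $t$ is either a bag $[v_1,\dots,v_k]$ or an application $st'$. The plan is to write $t$ uniquely as $s' t'_1 \cdots t'_n$ where $s'$ is a bag (not an application) and $t'_1,\dots,t'_n$ are bags, with $n \geq 0$. Since $t$ is in $\R$-nf, all $t'_i$ and the components of the head bag are in $\R$-nf, and self-coherence propagates to all of them. Then I would split on the form of the head bag $s'$:

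First, if $t$ is itself a bag $[v_1,\dots,v_k]$ (i.e. $n=0$): by self-coherence all $v_i$ are pairwise coherent, so either they are all the same variable (giving $[x^k]$, a $b$-term) or all abstractions over the same variable $[\lam x.a_1,\dots,\lam x.a_k]$ with each $a_i$ in $\R$-nf and self-coherent (giving a $b$-term by induction). A bag cannot mix variables and abstractions while remaining self-coherent, since $x \not\coh \lam y.s$. Next, if $n > 0$, the head bag $s'$ must not be a $\beta_r$-, $0$-, or $\sigma_1$-redex-former. Here the key observations: the head bag cannot be a singleton abstraction $[\lam x.u]$ applied to something, for that would be a $\beta_r$-redex (if one argument) or, combined with $\sigma_1$, reducible; and it cannot have cardinality $\neq 1$ applied to an argument without being a $0$-redex. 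So when $n>0$ the head bag must be a singleton variable bag $[x]$, or... I must handle the $c$-case. Concretely, I would show: if the head is $[x]$, then $t = [x]\,t'_1\cdots t'_n = [x]\,b\,a_1\cdots a_k$, forcing (by $\R$-nf-ness, via the $\sigma_3$ constraint) the first argument $t'_1$ to be a $b$-term rather than a $c$-term, exactly as in the $\V$-nf argument; this yields an $H$-analogue $b$-term. The remaining subtlety is the $c$-case $[\lam x.a](\,[y]b a_1\cdots a_k)$: a single abstraction bag applied to an argument that is itself an application $[y]b\cdots$ — this is blocked from $\beta_r$ (argument not a bag of values in reducible position) and from $\sigma_1$ (only one argument) and from $\sigma_3$.

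**The main obstacle.** The delicate part is the $\sigma$-redex case analysis, precisely as in Lemma~\ref{lem:char_V_normal form}. I expect the hardest step to be ruling out the $\sigma_3$-redex: when $t = [x]\,t'_1\cdots$, I must argue $t'_1$ cannot have the shape $[\lam z.u]s'$ (which would make $[x]([\lam z.u]s')$ a $\sigma_3$-redex), forcing $t'_1$ to be a $b$-term; and symmetrically in the head-abstraction $c$-case, the argument must be headed by a variable-bag application. The bookkeeping linking the four coherence/grammar cases to the four normal-form clauses, while simultaneously tracking that self-coherence descends into every subterm (so the induction hypothesis applies), is where the care is needed — but each individual check is routine once the unique decomposition $t = s'\,t'_1\cdots t'_n$ is fixed and one recalls that the $(\beta_r)$, $(0)$, $(\sigma_1)$, $(\sigma_3)$ redex shapes are exactly the ones excluded by the grammar of Definition~\ref{def:normalized_resterms}.
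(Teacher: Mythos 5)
Your proposal is correct and follows essentially the same route as the paper: the paper's proof simply observes that self-coherence forces uniform shapes inside bags, defers the $\beta_r$/$\sigma$-redex analysis to the argument of Lemma~\ref{lem:char_V_normal form}, and notes that the bags in head/argument position of $\bag{x}ba_1\cdots a_k$ and $\bag{\lam x.a}(\bag{y}ba_1\cdots a_k)$ must be singletons to avoid $0$-redexes --- exactly the three ingredients you identify and elaborate.
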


\begin{proof} Notice that $t\coh t$ guarantees that all terms in each bag occurring in $t$ have similar shape.
The proof of the absence of $\beta_r$- and $\sigma$- redexes, is analogous to the one of Lemma~\ref{lem:char_V_normal form}. 
The bags occurring in $\bag{x}ba_1\cdots a_k$ and $\bag{\lam x.a}(\bag{y}ba_1\cdots a_k)$ must be singleton multisets, for otherwise we would have some $0$-redexes.
\end{proof}

This lemma follows easily from Definition~\ref{def:heightandstuff}\eqref{def:heightandstuff1} and  Remark~\ref{rem:emptymset}\eqref{rem:emptymset3}.

\begin{lem}\label{lem:about_hgt} For $N,P,Q\in\Lambda$, we have:
\begin{enumerate}
\item\label{lem:about_hgt1} $\hgt{\Te{\lam x.N}} = \hgt{\Te{N}}+2$.
\item\label{lem:about_hgt2} $\hgt{\Te{PQ}} = \hgt{\Te{P}\cup\Te{Q}}+1$,
\end{enumerate}
\end{lem}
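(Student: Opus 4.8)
The plan is to derive both identities by unfolding the definition of the syntactic height together with Definition~\ref{def:Taylor}, relying on the single algebraic fact underlying the whole computation: each term constructor of $\lamr$ raises the height by exactly $1$ above the maximum of the heights of its immediate subterms, and this $\max$ commutes with the supremum taken over a non-empty set of terms. Concretely, for any family $\Set{E}\subseteq\Lam[r]$ and any constructor, the height of the set of compound terms is one more than the height of the set of constituents; the only points where care is needed are the empty-bag constructor (when a multiplicity is $0$) and the observation that $\Te{N}$, $\Te{P}$, $\Te{Q}$ are always non-empty so that the relevant maxima are attained (or, if a height happens to be $\aleph_0$, the equations still hold in $\nat\cup\set{\aleph_0}$ with the convention $\aleph_0+n=\aleph_0$).

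For item~\eqref{lem:about_hgt1}, I would start from the rewriting $\Te{\lam x.N}=\bigcup_{n\in\nat}\bag{\lam x.\Te{N},\dots,\lam x.\Te{N}}$ of Remark~\ref{rem:emptymset}\eqref{rem:emptymset3}. A typical element with $n\ge 1$ has the form $\bag{\lam x.t_1,\dots,\lam x.t_n}$ with each $t_i\in\Te{N}$, and the defining clauses for $\hgt{\cdot}$ give $\hgt{\bag{\lam x.t_1,\dots,\lam x.t_n}}=\max_{i}\hgt{\lam x.t_i}+1=\max_{i}(\hgt{t_i}+1)+1=\max_{i}\hgt{t_i}+2$. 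Maximising over $n\ge 1$ and over the choices $t_i\in\Te{N}$ (take $n=1$ and $t_1$ of maximal height in $\Te N$) yields exactly $\hgt{\Te{N}}+2$. The sole remaining element is the empty bag obtained for $n=0$, whose height $\hgt{\bag{}}$ is minimal; since $\Te{N}\neq\emptyset$ forces $\hgt{\Te{N}}\ge 0$, we have $\hgt{\Te{N}}+2\ge 2>\hgt{\bag{}}$, so this element is dominated and the overall maximum is $\hgt{\Te{N}}+2$, as required.

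For item~\eqref{lem:about_hgt2}, I would use $\Te{PQ}=\Te{P}\Te{Q}$ from Remark~\ref{rem:emptymset}\eqref{rem:emptymset3}. An arbitrary element is of the form $st$ with $s\in\Te{P}$ and $t\in\Te{Q}$, and $\hgt{st}=\max\set{\hgt{s},\hgt{t}}+1$. Taking the supremum over all such $s,t$ (both sets being non-empty, so no vacuous cases arise) and using that $\max$ distributes over the supremum, I get $\hgt{\Te{PQ}}=\max\set{\hgt{\Te{P}},\hgt{\Te{Q}}}+1$, which is precisely $\hgt{\Te{P}\cup\Te{Q}}+1$ by Definition~\ref{def:heightandstuff}\eqref{def:heightandstuff1}. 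Here there is no empty-bag subtlety at all, so the computation is immediate.

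The only genuine obstacle is the boundary behaviour at the empty bag in item~\eqref{lem:about_hgt1}: one must check that the $n=0$ summand never dictates the height, which is exactly the inequality $\hgt{\bag{}}\le\hgt{\Te{N}}+2$ guaranteed by $\Te{N}\neq\emptyset$. Everything else is the routine exchange of a constructor's $+1$ with the set-level supremum, which I would justify once as the general principle stated above and then apply in each case.
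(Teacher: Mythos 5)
Your proposal is correct and follows essentially the same route as the paper's proof: unfold Definition~\ref{def:Taylor} (via Remark~\ref{rem:emptymset}\eqref{rem:emptymset3}) and the defining clauses of $\hgt{\cdot}$, and push each constructor's $+1$ through the set-level supremum, with item~\eqref{lem:about_hgt2} being the analogous but simpler computation that the paper omits. If anything, you are slightly more careful than the paper, which silently glosses over the $n=0$ (empty bag) summand that you explicitly check is dominated.
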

\begin{proof}
\eqref{lem:about_hgt1} Indeed, we have:
\[
	\begin{array}{lcl}
	\hgt{\Te{\lam x.N}}&=&\max\set{\hgt{\bag{\lam x.t_1,\dots,\lam x.t_n}} \st n\ge 0, \forall i\le n,\ t_i\in\Te{N} },\\
	&=&\max\set{\max{\set{\hgt{\lam x.t_1},\dots,\hgt{\lam x.t_n}}}+1 \st n\ge 0, \forall i\le n,\ t_i\in\Te{N} },\\
	&=&\max\set{\max{\set{\hgt{t_1},\dots,\hgt{t_n}}}+2 \st n\ge 0, \forall i\le n,\ t_i\in\Te{N} },\\	
	&=&\max\set{\hgt{t}+2 \st t\in\Te{N} } = \hgt{\Te{N}} + 2.\\		
	\end{array}
\]
\eqref{lem:about_hgt2} This case is analogous but simpler, and we omit it.
\end{proof}

The next proposition gives a characterization of those sets of simple terms corresponding to the Taylor expansion of some \lam-terms and constitutes the main result of the section.

\begin{prop}\label{prop:maximalCl_iff_TeM} 
For $\Set E\subseteq\Lam[s]$, the following are equivalent:
\begin{enumerate}
\item\label{prop:maximalCl_iff_TeM1} $\Set E$ is a maximal clique having finite height,
\item\label{prop:maximalCl_iff_TeM2} There exists $M\in\Lam$ such that $\Set E = \Te M$.
\end{enumerate}
\end{prop}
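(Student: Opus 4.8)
The plan is to prove the two implications separately. The implication $(\ref{prop:maximalCl_iff_TeM2}) \Rightarrow (\ref{prop:maximalCl_iff_TeM1})$ will go by induction on the structure of $M$, while the converse $(\ref{prop:maximalCl_iff_TeM1}) \Rightarrow (\ref{prop:maximalCl_iff_TeM2})$ will go by induction on the finite height $\hgt{\Set E}$. A key preliminary observation, read off from the four rules defining $\coh$ in Definition~\ref{def:heightandstuff}\eqref{def:heightandstuff2}, is that a bag can only be coherent with a bag, and an application $s_1t_1$ only with an application $s_2t_2$; hence in any nonempty clique $\Set E \subseteq \Lam[s]$ either all elements are bags or all are applications. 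Moreover $\Set E = \emptyset$ is never maximal, since one can always adjoin the self-coherent empty bag $\bag{}$, so every maximal clique is nonempty.

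For $(\ref{prop:maximalCl_iff_TeM2}) \Rightarrow (\ref{prop:maximalCl_iff_TeM1})$, I would establish the three properties of $\Te M$ by a simultaneous induction on $M$. Finiteness of $\hgt{\Te M}$ is immediate from Lemma~\ref{lem:about_hgt} together with the base case $\hgt{\Te x}=1$. That $\Te M$ is a clique follows directly from the inductive shape of the Taylor expansion: two elements of $\Te x$ are bags of the single variable $x$; two elements of $\Te{\lam x.N}$ are bags of abstractions whose bodies lie in $\Te N$ and are coherent by the induction hypothesis; two elements of $\Te{PQ}$ are applications whose components lie in $\Te P$ and $\Te Q$. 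Maximality is the only point requiring care: I would prove by induction on $M$ that if $t \coh t$ and $t \coh s$ for every $s \in \Te M$, then $t \in \Te M$. Each case uses one or two well-chosen witnesses from the (always nonempty) set $\Te M$ to pin down the outermost shape of $t$ --- for instance coherence with $\bag{x} \in \Te x$ forces $t = \bag{x^k}$ --- and then applies the induction hypothesis to the components.

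For $(\ref{prop:maximalCl_iff_TeM1}) \Rightarrow (\ref{prop:maximalCl_iff_TeM2})$, let $\Set E$ be a maximal clique of finite height, so by the preliminary observation it is nonempty and uniform. If all its elements are applications, set $\Set E_1 = \set{s \st \exists t, st \in \Set E}$ and $\Set E_2 = \set{t \st \exists s, st \in \Set E}$; these are cliques of strictly smaller height, since $\hgt{st} = \max(\hgt s, \hgt t)+1$. Using maximality of $\Set E$ one shows both that $\Set E = \set{st \st s \in \Set E_1, t \in \Set E_2}$ and that $\Set E_1, \Set E_2$ are themselves maximal, so the induction hypothesis yields $\Set E_i = \Te{P_i}$ and hence $\Set E = \Te{P_1P_2}$. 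If instead all elements are bags, the resource values occurring in them form a clique: if one is a variable $x$ then they are all $x$, whence $\Set E \subseteq \set{\bag{x^n} \st n \ge 0}$ and maximality gives $\Set E = \Te x$; otherwise they are all abstractions $\lam x.s$ over a common $x$ (up to $\alpha$-conversion), and I collect their bodies into $\Set E' = \set{s \st \lam x.s \textrm{ occurs in some bag of } \Set E}$. By the definition of height $\hgt{\Set E} = \hgt{\Set E'}+2$ (mirroring Lemma~\ref{lem:about_hgt}\eqref{lem:about_hgt1}), so $\Set E'$ has strictly smaller height; as above, maximality of $\Set E$ transfers to $\Set E'$, and the induction hypothesis gives $\Set E' = \Te N$, whence $\Set E = \Te{\lam x.N}$.

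The main obstacle is this reconstruction direction, and within it the repeated appeals to maximality. Each inductive step needs two distinct arguments: that the extracted set ($\Set E_1$, $\Set E_2$, or $\Set E'$) is again a \emph{maximal} clique, so that the induction hypothesis applies, and that $\Set E$ actually equals the \emph{full} product or body-closure predicted by the Taylor expansion. Both are handled by taking a candidate element, building from it a witness in $\Set E$ that is coherent with all of $\Set E$, and invoking maximality to force membership. The delicate features to watch are that $\coh$ is not reflexive, so the self-coherence condition $t \coh t$ must be carried along everywhere, and that $\bag{}$ is coherent with every self-coherent bag --- a fact that simultaneously prevents $\set{\bag{}}$ from being maximal and guarantees $\bag{} \in \Set E$ in the bag case, keeping the height bookkeeping consistent.
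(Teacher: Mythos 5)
Your proposal is correct and follows essentially the same route as the paper: structural induction on $M$ for $(\ref{prop:maximalCl_iff_TeM2})\Rightarrow(\ref{prop:maximalCl_iff_TeM1})$, and induction on the finite height for $(\ref{prop:maximalCl_iff_TeM1})\Rightarrow(\ref{prop:maximalCl_iff_TeM2})$, with the same decomposition of a maximal clique into the projections $\Set E_1,\Set E_2$ in the application case and the set of abstraction bodies in the bag case, and the same appeals to maximality to recover the full product/closure. Your explicit remarks that maximality forces non-emptiness (via $\bag{}$) and that self-coherence must be tracked are sound refinements of points the paper leaves implicit.
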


\begin{proof}
(\ref{prop:maximalCl_iff_TeM1} $\Rightarrow$ \ref{prop:maximalCl_iff_TeM2})
As $\Set E$ maximal entails $\Set E\neq\emptyset$, we can proceed by induction on $h = \hgt{\Set E}$.

The case $h=0$ is vacuous because no simple term has height $0$.

If $h=1$ then $t\in\Set E$ implies $t =\bag{x_1,\dots,x_n}$ since variables are the only resource terms of height $0$. 
Now, $t\coh t$ holds since $\Set E$ is a clique so the $x_i$'s must be pairwise coherent with each other, but $x_i\coh x_j$ holds if and only if $x_i=x_j$ whence $t = \bag{x_i,\dots,x_i}$ for some index $i$.
From this, and the fact that $\Set E$ is maximal, we conclude $\Set E = \Te{x_i}$.

Assume $h > 1$ and split into cases depending on the form of $t\in\Set E$.
\begin{itemize}
\item Case $t = \bag{\lam x.s_1,\dots,\lam x.s_k}$.
Since $\hgt{\Set E} > 1$ we can assume wlog that $t\neq\bag{}$, namely $k>0$.
Moreover, since $\Set E$ is a clique, all $t'\in\Set E$ must have shape $t'=\bag{\lam x.s_{k+1},\dots,\lam x.s_n}$ for some $n$ with $s_i\coh s_j$ for all $i,j\le n$.
It follows that the set $\Set S = \set{s \st \bag{\lam x.s}\in\Set E}$ is a maximal clique, because $\Set E$ is maximal, and has height $h-2$ since $\hgt{\bag{\lam x.s}} = \hgt{s} + 2$. 
Moreover, $\Set E = \set{\bag{\lam x.s_1,\dots,\lam x.s_k} \st k\ge 0, \forall i\le k\ .\ s_i\in\Set S }$.
By induction hypothesis there exists $N\in\Lam$ such that $\Set S = \Te{N}$, so we get $\Set E = \Te {\lam x.N}$.

\item Otherwise, if $t = s_1s_2$ then all $t'\in\Set E$ must be of the form $t'=s'_1s'_2$ with $s_1\coh s'_1$ and $s_2\coh s'_2$. 
So, the set $\Set E$ can be written as $\Set E = \Set{S}_1\Set{S}_2$ where 
$\Set{S}_1= \set{ t\st ts_2\in\Set E}$ and $\Set{S}_2= \set{ t\st s_1t\in\Set E}$.
As $\Set E$ is a maximal clique, the sets $\Set S_1,\Set S_2$ are independent from the choice of $s_2,s_1$ (resp.),  and they are maximal cliques themselves. Moreover, $\hgt{\Set E} = \hgt{\Set S_1 \cup \Set S_2} + 1$, whence the heights of $\Set S_1,\Set S_2$  are strictly smaller than $h$.
By the induction hypothesis, there exists $P,Q\in\Lam$ such that $\Set S_1 = \Te{P}$ and $\Set S_2 = \Te{Q}$, from which it follows $\Set E = \Te{PQ}$.
\end{itemize}

(\ref{prop:maximalCl_iff_TeM2} $\Rightarrow$ \ref{prop:maximalCl_iff_TeM1}) 
We proceed by induction on the structure of $M$.

If $M=x$ then $t,t'\in\Te{M}$ entails $t = [x^k]$ and $t'= \bag{x^n}$ for some $k,n\ge 0$, whence $\Te{x} $ is a clique of height 1.
It is moreover maximal because it contains $\bag{x^i}$ for all $i\ge 0$.

If $M= \lam x.N$ then $t,t'\in\Te{M}$ entails $t = [\lam x.t_1,\dots,\lam x.t_k]$ and $t' = [\lam x.t_{k+1},\dots,\lam x.t_n]$ with $t_i\in\Te{N}$ for all $i\le n$. 
By induction hypothesis $\Te{N}$ is a maximal clique of finite height $h\in\nat$, in particular $t_i\coh t_j$ for all $i,j\le n$ which entails $t\coh t'$. 
The maximality of $\Te{M}$ follows from that of $\Te{N}$ and, by Lemma~\ref{lem:about_hgt}\eqref{lem:about_hgt1}, $\hgt{\Te{M}}$ has finite height $h+2$.

If $M= PQ$ then $t,t'\in\Te{M}$ entails $t = s_1t_1$ and $t'= s_2t_2$ for $s_1,s_2\in\Te{P}$ and $t_1,t_2\in\Te{Q}$. By induction hypothesis, $s_1\coh s_2$ and $t_1\coh t_2$ hold and thus $t\coh t'$.
Also in this case, the maximality of $\Te{M}$ follows from the same property of $\Te{P},\Te{Q}$.
Finally, by induction hypothesis, $\hgt{\Te{P}} = h_1$ and $\hgt{\Te{Q}}= h_2$ for $h_1,h_2\in\nat$ then $\hgt{\Te{M}} = \max\set{h_1,h_2} +1$ by Lemma~\ref{lem:about_hgt}\eqref{lem:about_hgt2}, and this concludes the proof.
\end{proof}

\section{Computing the Normal Form of the Taylor expansion, and Beyond}

The Taylor expansion, as defined in Section~\ref{ssec:Te}, is a static operation translating a \lam-term into an infinite set of simple terms. 
However, we have seen in Proposition~\ref{prop:confluence_Lamr} that the reduction $\to_\R$ is confluent and strongly normalizing.
Whence, it is possible to define the normal form of an arbitrary set of resource terms as follows.

\begin{defi}
The \emph{$\R$-normal form} is extended element-wise to any subset $\Set E\subseteq{\Lam[r]}$ by setting $\NF{\Set E} = \bigcup_{e\in \Set E}\ \nf[\R]{e}$. 
\end{defi}
In particular, $\NF{\Lam[s]}$ (resp.\ $\NF{\Lam[r]}$, $\NF{\Val[r]}$) represents the set of all simple terms (resp.\ resource terms, resource values) in $\R$-nf generated by the grammar in Definition~\ref{def:normalized_resterms}.
Moreover, $\NF{\Te{M}}$ is a well-defined subset of $\NF{\Lam[s]}$ for every $M\in\Lam$ (it can possibly be the empty set, thought).

\begin{exa}\label{ex:quatpointdeux} 
We calculate the $\R$-normal form of the Taylor expansions from Example~\ref{exa:TeXeMpLe}:
\begin{enumerate}
\item\label{ex:quatpointdeux1} $\NF{\Te{\comb{I}}} = \Te{\comb{I}} = \set{\bag{\lam x.\bag{x^{n_1}},\dots,\lam x.\bag{x^{n_k}}}\st k\ge 0, \forall i\le k, n_i\ge 0 }$,
\item\label{ex:quatpointdeux2} $\NF{\Te{\Delta}} = \set{ \bag{\lam x.\bag{x}\bag{x^{m_1}},\dots,\lam x.\bag{x}\bag{x^{m_k}}} \st k\ge0,\forall i \le k, m_i\ge 0}$, 
\item\label{ex:quatpointdeux3} $\NF{\Te{\Delta\comb{I}}} = \NF{\Te{\comb{I}}}$,
\item $\NF{\Te{\Omega}} = \emptyset$, from this it follows:
\item $\NF{\Te{\lam x.\Omega}} = \set{ [] }$, moreover, for $\comb{A} = (\lam z.(\lam y.y)(zz))(xx)$, we obtain:
\item $\NF{\Te{\comb{A}}} = \{\bag{\lam z.\bag{\bag{\lam y.\bag{y^{\ell_1}}}(\bag{z}\bag{z^{m_1}})}{(\bag{x}\bag{x^{n_1}}) },\dots,\lam z.\bag{\bag{\lam y.\bag{y^{\ell_k}}}(\bag{z}\bag{z^{m_k}})}{(\bag{x}\bag{x^{n_k}}) }} \st{}\hspace{75pt}k\ge0,\forall i\le k, \ell_i,m_i,n_i\ge 0\}$.
\end{enumerate}
\end{exa}

On the one hand, it is not difficult to calculate the normal forms of the Taylor expansions of $\comb{I},\Delta$ and $\comb{A}$. (As shown in Lemma~\ref{lem:TeMnfisnf}, it is enough to perform some $0$-reductions.) Similarly, it is not difficult to check that $\NF{\Te{\Omega}}$ is empty, once realized that no term $t\in\Te{\Omega}$ can survive through the reduction.
On the other hand, it is more complicated to compute the normal forms of $\Te{\comb{Z}}$, and hence $\Te{\comb{ZB}}$, without having a result connecting such normal forms with the $\V$-reductions of the corresponding \lam-terms.
The rest of the section is devoted to study such a relationship.
We start with some technical lemmas.

\begin{lem}[Substitution Lemma]\label{lem:TeSubstLemma} 
Let $M\in\Lam$, $V\in\Val$ and $x\in\Var$.
Then we have:
\[
	\Te{M\subst{x}{V}} = \bigcup_{t\in\Te{M}}\bigcup_{[v_1,\dots,v_n]\in\Te{V}
	} t\lsubst{x}{v_1,\dots,v_n}.
\]
\end{lem}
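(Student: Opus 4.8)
\noindent The plan is to argue by structural induction on $M$, after isolating the one combinatorial fact about values that makes both sides fit together. That fact is a \emph{closure property} of the Taylor expansion of a value: for every $V\in\Val$, a bag $\bag{v_1,\dots,v_n}$ lies in $\Te V$ if and only if each singleton $\bag{v_i}$ lies in $\Te V$. This is immediate by inspecting Definition~\ref{def:Taylor} on the two shapes of a value: if $V=x$ then membership amounts to $v_i=x$, and if $V=\lam y.N$ then it amounts to $v_i=\lam y.t_i$ with $t_i\in\Te N$. Consequently $\Te V$ is closed under taking sub-bags and under concatenation, which is precisely the bridge needed to reconcile the \emph{single} bag $\bag{\vec v}\in\Te V$ appearing on the right-hand side with the \emph{independent} copies of $\Te V$ that the induction hypothesis produces inside the immediate subterms.

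For the base cases, if $M=x$ then $M\subst xV=V$; for $t=\bag{x^n}\in\Te x$ the linear substitution $t\lsubst{x}{v_1,\dots,v_m}$ is empty unless $\degx t=m$, i.e.\ $n=m$, in which case all permutations collapse to the same multiset and it equals $\set{\bag{v_1,\dots,v_m}}$. Letting $\bag{v_1,\dots,v_m}$ range over $\Te V$ thus recovers exactly $\Te V=\Te{M\subst xV}$. If $M=y\neq x$ then $\degx t=0$ for every $t\in\Te y$, so the only contributing bag is the empty one, which belongs to $\Te V$ by Remark~\ref{rem:emptymset}\eqref{rem:emptymset1} as $V\in\Val$; the substitution then acts as the identity and returns $\Te y=\Te{M\subst xV}$.

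For the inductive step, take $M=PQ$ and write $t=s\,t'$ with $s\in\Te P$ and $t'\in\Te Q$. Since linear substitution is multilinear, substituting a single bag $\bag{\vec v}$ into $s\,t'$ distributes as the union, over all ways of splitting $\bag{\vec v}$ into sub-bags $\bag{\vec v_1},\bag{\vec v_2}$ with $|\vec v_1|=\degx s$ and $|\vec v_2|=\degx{t'}$, of the products $(s\lsubst{x}{\vec v_1})(t'\lsubst{x}{\vec v_2})$. Comparing with $\Te{P\subst xV}\,\Te{Q\subst xV}$ given by the induction hypothesis, the splittings of one bag correspond, via the closure property, to independent choices of bags in $\Te V$ for $s$ and $t'$, the degree constraints matching because $\degx s+\degx{t'}=\degx t$. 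The case $M=\lam y.N$ (with $y\notin\FV V$) is analogous: a $t=\bag{\lam y.t_1,\dots,\lam y.t_k}$ with $t_i\in\Te N$ distributes $\bag{\vec v}$ over its $k$ components as a union over partitions into sub-bags of sizes $\degx{t_i}$; the induction hypothesis rewrites each $t_i\lsubst{x}{\cdot}$ as a slice of $\Te{N\subst xV}$, and the closure property again matches the split single bag against the per-component bags, yielding $\Te{\lam y.(N\subst xV)}$. Both inclusions follow symmetrically from this matching.

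I expect the crux to be the inductive cases, specifically making rigorous the claim that substituting one bag into a compound resource term equals the union over all partitions of that bag among the immediate subterms, with the correct degree bookkeeping and with all permutations collapsing at the level of supports. This ``distribution of linear substitution over the constructors'' is routine but delicate, being the only point where one must reason with multisets-and-permutations rather than with plain sets; once it is coupled with the closure property of $\Te V$, the equality of the two sides is forced.
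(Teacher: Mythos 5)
Your proof is correct and follows the same route as the paper, which simply states ``straightforward induction on the structure of $M$'' and omits the details. You have correctly identified the two points that make the induction go through --- the closure of $\Te{V}$ under sub-bags and concatenation (which is exactly where the hypothesis $V\in\Val$ is used, e.g.\ via Remark~\ref{rem:emptymset}\eqref{rem:emptymset1} in the case $M=y\neq x$) and the distribution of linear substitution over the constructors --- so your write-up is a faithful elaboration of the paper's argument.
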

\begin{proof} Straightforward induction on the structure of $M$.
%
%
%
%
\end{proof}

\begin{lem}\label{lem:NFTEM_eq_NFTEN} 
Let $M,N\in\Lam$ be such that $M\to_\V N$. Then:
\begin{enumerate}
\item\label{lem:NFTEM_eq_NFTEN1}  for all $ t\in\Te{M}$, there exists $\Set T\subseteq\Te{N}$ such that $t\msto[\R] \Set T$,
\item\label{lem:NFTEM_eq_NFTEN2} for all $t'\in\Te{N}$ such that $t'\not\to_0 \emptyset$, there exist $t\in\Te{M}$ and $\Set T\in\sums {\Lam[s]}$ satisfying $t\msto[\R] \set {t'}\cup\Set T$. Moreover such a $t$ is unique.
\end{enumerate}
\end{lem}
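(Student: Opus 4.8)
The plan is to prove both items by induction on the contextual structure of the reduction $M\to_\V N$, the base cases being the three rules $(\beta_v)$, $(\sigma_1)$, $(\sigma_3)$ applied at the root. Since $\Te{-}$ is compositional (Definition~\ref{def:Taylor} and Remark~\ref{rem:emptymset}\eqref{rem:emptymset3}) and $\to_\R$ is contextual (Figure~\ref{fig:CTXTR}), the inductive cases (under an abstraction, or on either side of an application) follow by pushing the hypothesis through the corresponding constructor; the only point requiring care is that when the redex lies under a \lam-abstraction it is replicated by each element of the surrounding bag, so that the single $\V$-step is matched by a genuinely multistep resource reduction contracting all copies independently --- this is exactly why the statement uses $\msto[\R]$ and why the multilinear bag notation of Notation~\ref{notation:sumsofresterm} (with $\bag{\dots,\emptyset,\dots}=\emptyset$) makes the bookkeeping automatic.

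For item \eqref{lem:NFTEM_eq_NFTEN1} I would analyse the three root redexes. In the $(\beta_v)$ case $M=(\lam x.P)V$, a term $t\in\Te M$ reads $t=\bag{\lam x.s_1,\dots,\lam x.s_k}\,b$ with $s_i\in\Te P$, $b\in\Te V$; if $k\neq 1$ it is a $0$-redex, so $t\to_0\emptyset\subseteq\Te N$, while if $k=1$ then $t\to_{\beta_r}s_1\lsubst{x}{b}$, and the Substitution Lemma (Lemma~\ref{lem:TeSubstLemma}) identifies this set as a subset of $\Te{P\subst{x}{V}}=\Te N$. The cases $(\sigma_1)$ and $(\sigma_3)$ are similar: whenever the function bag --- and, for $(\sigma_3)$, also the outer value bag --- is not a singleton the term collapses to $\emptyset$ by a $0$-step, and otherwise one resource $(\sigma_1)$- or $(\sigma_3)$-step sends $t$ into $\Te N$, the side conditions $x\notin\FV{-}$ being inherited from those of the \lam-level rule.

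For item \eqref{lem:NFTEM_eq_NFTEN2} the hypothesis $t'\not\to_0\emptyset$ means precisely that $t'$ contains no $0$-redex, and this is what forces the relevant bags to be singletons when we invert the contraction. In the $(\sigma_1)$ case $t'=\bag{\lam x.u_1,\dots,\lam x.u_k}\,b_Q\in\Te{(\lam x.PR)Q}$ would itself be a $0$-redex unless $k=1$; then $u_1=s_1 b_R$ with $s_1\in\Te P$, $b_R\in\Te R$, and the candidate ancestor is $t=(\bag{\lam x.s_1}\,b_Q)\,b_R\in\Te M$ with $t\to_\R\set{t'}$. The $(\sigma_3)$ case is identical, the absence of $0$-redexes in $t'$ forcing both the abstraction bag and the value bag $b_V$ (which occurs applied inside $t'$) to be singletons. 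In the $(\beta_v)$ case the Substitution Lemma yields $s\in\Te P$, $b\in\Te V$ with $t'\in s\lsubst{x}{b}$, and we take $t=\bag{\lam x.s}\,b\in\Te M$, so $t\to_{\beta_r}s\lsubst{x}{b}=\set{t'}\cup\Set T$; in each case $\Set T$ is the finite remainder of the reduct.

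I expect the uniqueness clause to be the main obstacle, and only in the $(\beta_v)$ case, since for the $\sigma$-rules the contraction map is patently injective --- $s_1,b_Q,b_R$ (resp.\ the value, $s_1,b_Q$) can be read straight back off $t'$. For $(\beta_v)$ the reduct $s\lsubst{x}{b}$ is not a singleton, so a priori $t'$ might decompose as $s\lsubst{x}{b}$ in several ways. The argument I would give is that the decomposition is pinned down by the fixed positions of the free occurrences of $x$ in $P$: because $\Te{-}$ mirrors the syntax tree of its argument, every $s\in\Te P$ carries a bag $\bag{x^j}$ exactly at the positions where $P$ has the free variable $x$, and after substitution these become the sub-bags of $V$-approximants sitting at those same positions in $t'$. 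Knowing $P$, and hence those positions, one recovers $s$ by putting back $\bag{x^j}$ of the matching size and recovers $b$ as the multiset union of the extracted resource values; both are thus uniquely determined by $t'$, so $t=\bag{\lam x.s}\,b$ is unique. Equivalently, one checks that the reduct-sets attached to distinct elements of $\Te M$ are pairwise disjoint, whence $t'\in\Set T_{t_1}\cap\Set T_{t_2}$ forces $t_1=t_2$; the inductive step then propagates uniqueness through the constructors by a multiset argument.
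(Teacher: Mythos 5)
Your proposal is correct and follows essentially the same route as the paper: induction on the derivation of $M\to_\V N$, with the $(\beta_v)$ case handled by the Substitution Lemma (Lemma~\ref{lem:TeSubstLemma}), the $\sigma$-cases by distinguishing whether the relevant bags are singletons (a $0$-step to $\emptyset$ otherwise, a single resource $\sigma$-step if so), and the contextual cases by the induction hypothesis. The only difference is that you spell out the uniqueness argument for $(\beta_v)$ --- recovering $s$ and $b$ from the positions of the free occurrences of $x$ in the body --- which the paper dismisses as ``easy to check''; your elaboration is sound.
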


\begin{proof} We check that both \eqref{lem:NFTEM_eq_NFTEN1} and \eqref{lem:NFTEM_eq_NFTEN2} hold by induction on a derivation of $M \to_\V N$, splitting into cases depending on the kind of redex is reduced.
\begin{description}
\item[$(\beta_v)$] If $M = (\lam x.Q)V$ and $N = Q\subst{x}V$ then items \eqref{lem:NFTEM_eq_NFTEN1} and \eqref{lem:NFTEM_eq_NFTEN2}  follow by Lemma~\ref{lem:TeSubstLemma}.
\item[$(\sigma_1)$] If $M = (\lam x.M')PQ$ and $N = (\lam x.M'Q)P$ then
\[
	\begin{array}{lcl}
	\Te{M} &=& \set{ \bag{\lam x.t_1,\dots,\lam x.t_n}s_1s_2 \st n\ge 0,t_i\in\Te{M'}, s_1\in\Te{P},s_2\in\Te{Q} },\\
	\Te{N} &=&\set{\bag{\lam x.t'_1s'_1,\dots,\lam x.t'_ns'_n}s \st n\ge0, t'_i\in\Te{M'},s'_i\in\Te{Q},s\in\Te{P}}.\\
	\end{array}
\]
For $n\neq 1$, we have $\bag{\lam x.t_1,\dots,\lam x.t_n}s_1s_2\to_0\emptyset\subseteq \Te{N}$.
For $n =1 $, we get $\bag{\lam x.t_1}s_1s_2\to_{\sigma_1}  \bag{\lam x.t_1s_2}s_1$ for $t_1\in\Te{M'}$, $s_1\in\Te{P}$ and $s_2\in\Te{Q}$, whence $\bag{\lam x.t_1s_2}s_1 \in \Te{N}$ and \eqref{lem:NFTEM_eq_NFTEN1} holds.
Concerning \eqref{lem:NFTEM_eq_NFTEN2}, note that $\bag{\lam x.t'_1s'_1,\dots,\lam x.t'_ns'_n}s\not\to_0\emptyset$ entails $n=1$.
Moreover, $\Te{M}\ni\bag{\lam x.t'_1}ss'_1\to_{\sigma_1}\bag{\lam x.t'_1s'_1}s$ since $t'_1\in\Te{M'},s'_1\in\Te{Q},s\in\Te{P}$.

\item[$(\sigma_3)$] If $M = V((\lam x.P)Q)$ for $V\in\Val$ and $N = (\lam x.VP)Q$ then
\[
	\begin{array}{ll}
	\Te{M} = \set{ \bag{v_1,\dots,v_n}(\bag{\lam x.s_1,\dots,\lam x.s_m}s) \st& n\ge 0,\bag{v_1,\dots,v_n}\in\Te{V},\\
	& i\le m,\ s_i\in\Te{P}, s\in\Te{Q}},\\
	\end{array}
\]
\[
	\begin{array}{ll}
	\Te{N} =\set{\bag{\lam x.\bag{v_{11},\dots,v_{1k_1}}s_1,\dots,\lam x.\bag{v_{n1},\dots,v_{nk_n}}s_n}s \st& n\ge0,\ i \le n,\\ 
	&s_i\in\Te{P},s\in\Te{Q},\\
	&\bag{v_{i1},\dots,v_{ik_i}}\in\Te{V}},\\
	\end{array}
\]
For $m\neq 1$ $n\neq 1$, we have $\bag{v_1,\dots,v_n}(\bag{\lam x.s_1,\dots,\lam x.s_m}s)\to_0 \emptyset\subseteq \Te{N}$.
For $m=n=1$, we get $\bag{v_1}(\bag{\lam x.s_1}s)\to_{\sigma_3}\bag{\lam x.\bag{v_1}s_1}s\in\Te{N}$, so \eqref{lem:NFTEM_eq_NFTEN1} holds.
Similarly, we have that $\bag{\lam x.\bag{v_{11},\dots,v_{1k_1}}s_1,\dots,\lam x.\bag{v_{n1},\dots,v_{nk_n}}s_n}s\to_0\emptyset$ whenever $n\neq 1$ or $k_i\neq 1$. For $n=k_1 =1$, we get 
$\Te{M}\ni\bag{v_{11}}(\bag{\lam x.s_1}s)\to_{\sigma_3}\bag{\lam x.\bag{v_{11}}s_1}s$ which proves \eqref{lem:NFTEM_eq_NFTEN2}.
\end{description}
In the cases above it is easy to check that $t$ is actually unique.
The contextual cases follow straightforwardly from the induction hypothesis.
\end{proof}

As a consequence, we obtain the analogue of Proposition~\ref{prop:M=VNimpBTM=BTN} for Taylor expansions.

\begin{cor}\label{cor:NFTEM_eq_NFTEN} 
For $M,N\in\Lam$, $M =_\V N$ entails $\NF{\Te{M}} = \NF{\Te{N}}$.
\end{cor}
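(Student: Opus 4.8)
The plan is to follow exactly the pattern of the proof of Proposition~\ref{prop:M=VNimpBTM=BTN}. First I would use confluence of $\to_\V$ (Proposition~\ref{prop:of_red}\eqref{prop:of_red2}) to reduce the conversion to a reduction: $M =_\V N$ yields a common reduct $P$ with $M \msto[\V] P$ and $N \msto[\V] P$, so it suffices to establish that $M \msto[\V] N$ implies $\NF{\Te{M}} = \NF{\Te{N}}$. A straightforward induction on the length of the reduction then reduces this further to the single-step statement: if $M \to_\V N$ then $\NF{\Te{M}} = \NF{\Te{N}}$. This single step is where Lemma~\ref{lem:NFTEM_eq_NFTEN} does all the work; the only extra ingredient is confluence and strong normalization of $\to_\R$ (Proposition~\ref{prop:confluence_Lamr}), which guarantees that whenever $t \msto[\R] \Set{T}$ one has $\nf[\R]{t} = \NF{\Set{T}} = \bigcup_{t'' \in \Set{T}} \nf[\R]{t''}$.

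For the inclusion $\NF{\Te{M}} \subseteq \NF{\Te{N}}$, I would take any resource approximant $a \in \NF{\Te{M}}$, so that $a \in \nf[\R]{t}$ for some $t \in \Te{M}$. Lemma~\ref{lem:NFTEM_eq_NFTEN}\eqref{lem:NFTEM_eq_NFTEN1} provides $\Set{T} \subseteq \Te{N}$ with $t \msto[\R] \Set{T}$, and then $\nf[\R]{t} = \NF{\Set{T}}$ by the remark above, so $a \in \nf[\R]{t''}$ for some $t'' \in \Set{T} \subseteq \Te{N}$, i.e.\ $a \in \NF{\Te{N}}$. For the reverse inclusion I would take $a \in \NF{\Te{N}}$, so $a \in \nf[\R]{t'}$ for some $t' \in \Te{N}$. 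Here one must first observe that $\nf[\R]{t'} \neq \emptyset$ forces $t' \not\to_0 \emptyset$ (a top-level $0$-redex would make $\emptyset$ the unique normal form by confluence), so that Lemma~\ref{lem:NFTEM_eq_NFTEN}\eqref{lem:NFTEM_eq_NFTEN2} applies and yields $t \in \Te{M}$ and a finite set $\Set{T} \in \sums{\Lam[s]}$ with $t \msto[\R] \set{t'} \cup \Set{T}$. Then $\nf[\R]{t} = \nf[\R]{t'} \cup \NF{\Set{T}} \supseteq \nf[\R]{t'} \ni a$, hence $a \in \NF{\Te{M}}$.

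The main obstacle I anticipate is purely the bookkeeping around the $0$-reductions: part~\eqref{lem:NFTEM_eq_NFTEN2} of the lemma is only stated for $t'$ that are not $0$-annihilated, so in the $\supseteq$ direction one has to argue explicitly that every $t'$ actually contributing an element to $\NF{\Te{N}}$ survives reduction, and to keep careful track of the fact that the $\R$-normal form of a finite set distributes over unions, i.e.\ $\nf[\R]{\set{t'} \cup \Set{T}} = \nf[\R]{t'} \cup \nf[\R]{\Set{T}}$. Once the single-step equality is secured this way, the induction on reduction length and the final appeal to confluence of $\to_\V$ close the proof, exhibiting the corollary as the exact Taylor-expansion analogue of Proposition~\ref{prop:M=VNimpBTM=BTN}.
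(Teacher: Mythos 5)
Your proposal is correct and follows essentially the same route as the paper's own proof: reduce to a single $\to_\V$ step via confluence of $\to_\V$, then use Lemma~\ref{lem:NFTEM_eq_NFTEN}\eqref{lem:NFTEM_eq_NFTEN1} for one inclusion and Lemma~\ref{lem:NFTEM_eq_NFTEN}\eqref{lem:NFTEM_eq_NFTEN2} for the other, closing with confluence and strong normalization of $\to_\R$. Your explicit check that $\nf[\R]{t'}\neq\emptyset$ forces $t'\not\to_0\emptyset$ is a detail the paper leaves implicit, and it is a worthwhile clarification rather than a deviation.
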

\begin{proof} It is enough to prove $\NF{\Te{M}} = \NF{\Te{N}}$ for $M$ and $N$ such that $M\to_\V N$, indeed the general result follows by confluence of $\V$-reduction. We show the two inclusions.

$(\subseteq)$ Consider $t\in\NF{\Te{M}}$, then there exists $t_0\in\Te{M}$ and $\Set T\in\sums{\Lam[s]}$ such that $t_0\msto[\R] \set t\cup \Set T$. 
Since $\lamr$ is strongly normalizing (Proposition~\ref{prop:confluence_Lamr}), we assume wlog $\Set T$ in $\R$-nf.
By Lemma~\ref{lem:NFTEM_eq_NFTEN}\eqref{lem:NFTEM_eq_NFTEN1}, we have $t_0\msto[\R]\Set T_0\subseteq{\Te{N}}$ so by confluence of $\to_r$ we get $\Set T_0\msto[\R] \set t\cup\Set T$ which entails $t\in\NF{\Te{N}}$ because $t$ is in $\R$-nf.

$(\supseteq)$ If $t\in\NF{\Te{N}}$ then there are $s\in\Te{N}$ and $\Set T\in\sums{\Lam[s]}$ such that $s\msto[\R] \set t\cup \Set T$. 
By Lemma~\ref{lem:NFTEM_eq_NFTEN}\eqref{lem:NFTEM_eq_NFTEN2}, there exists $s_0\in\Te{M}$ and $\Set S\in\sums {\Lam[s]}$ satisfying $s_0\msto[\R] \set s\cup\Set S$.
Composing the two reductions we get $s_0\msto[\R] \set t\cup\Set S\cup\Set T$, thus $t\in\NF{\Te{M}}$ as well.
\end{proof}

We now prove a Context Lemma for Taylor expansions in the spirit of \cite[Cor.~14.3.20]{Bare} (namely, the Context Lemma for \CbN{} B\"ohm trees).
For the sake of simplicity, in the next lemma we consider head contexts but the same reasoning works for arbitrary contexts.

\begin{lem}[Context Lemma for Taylor expansions]\label{lem:Te_ctxt} 
Let $M,N\in\Lam$. If $\NF{\Te{M}} = \NF{\Te{N}}$ then, for all head contexts $C\hole-$, we have $\NF{\Te{C\hole M}} = \NF{\Te{C\hole N}}$.
\end{lem}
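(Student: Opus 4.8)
The plan is to reduce the statement to the fact that the operation $M\mapsto\NF{\Te{M}}$ is \emph{compositional} with respect to the two constructors out of which every context is built --- abstraction and application --- and then to conclude by a routine induction on the shape of the context. Concretely, I would prove the slightly stronger claim that, for every context $C\hole-$, the set $\NF{\Te{C\hole M}}$ is a function of $\NF{\Te M}$ alone. The lemma is then immediate: if $\NF{\Te{M}}=\NF{\Te{N}}$ then $\NF{\Te{C\hole M}}=\NF{\Te{C\hole N}}$, and specialising $C\hole-$ to a head context gives exactly the assertion (so the same reasoning covers arbitrary contexts, as already announced).

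First I would establish the \textbf{application case}, which is where the real work lies. Fixing $P,Q\in\Lam$, I claim
\[
	\NF{\Te{PQ}} = \NF{\NF{\Te{P}}\,\NF{\Te{Q}}},
\]
where juxtaposition of sets of simple terms is understood as in Notation~\ref{notation:sumsofresterm}. Since $\Te{PQ}=\Te{P}\Te{Q}$ by Remark~\ref{rem:emptymset}\eqref{rem:emptymset3}, it suffices to analyse $\nf[\R]{st}$ for a single product $st$ with $s\in\Te{P}$ and $t\in\Te{Q}$. Using the contextual rules of Figure~\ref{fig:CTXTR} to reduce independently inside the two factors gives $st\msto[\R]\nf[\R]{s}\,\nf[\R]{t}$, so by confluence and strong normalisation of $\to_\R$ (Proposition~\ref{prop:confluence_Lamr}) we obtain $\nf[\R]{st}=\NF{\nf[\R]{s}\,\nf[\R]{t}}$. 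Taking the union over all $s\in\Te{P}$ and $t\in\Te{Q}$, and noting that $\bigcup_{s\in\Te{P}}\nf[\R]{s}=\NF{\Te{P}}$ and symmetrically for $Q$, the two inclusions yield the displayed identity. The \emph{main obstacle} is precisely this commutation of normalisation with the function/argument interface: one cannot simply normalise $s$ and $t$ separately and stop, because the product $\nf[\R]{s}\,\nf[\R]{t}$ may still carry a $\beta_r$-, $\sigma$- or $0$-redex at its root; it is confluence that guarantees that completing the reduction produces the same result irrespective of the order.

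The \textbf{abstraction case} is easier, since no redex is created at the interface. An element of $\Te{\lam x.P}$ is a bag $\bag{\lam x.t_1,\dots,\lam x.t_n}$ with each $t_i\in\Te{P}$, and such a bag, not being applied to anything, cannot head a $0$-redex; reduction therefore happens strictly inside the bodies, so $\nf[\R]{\bag{\lam x.t_1,\dots,\lam x.t_n}}=\bag{\lam x.\nf[\R]{t_1},\dots,\lam x.\nf[\R]{t_n}}$, with the whole bag collapsing to $\emptyset$ as soon as some $\nf[\R]{t_i}=\emptyset$ by multilinearity (Notation~\ref{notation:sumsofresterm}). Taking unions over $n$ and over the $t_i\in\Te{P}$, one gets $\NF{\Te{\lam x.P}}=\bigcup_{n}\set{\bag{\lam x.a_1,\dots,\lam x.a_n}\st a_1,\dots,a_n\in\NF{\Te{P}}}$, a set manifestly determined by $\NF{\Te{P}}$.

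Finally I would conclude by induction on the structure of $C\hole-$. The base case $C\hole-=\hole-$ is trivial, as $\NF{\Te{C\hole M}}=\NF{\Te{M}}$. For $C\hole-=C'\hole-\,M'$ or $C\hole-=M'C'\hole-$ the application identity gives $\NF{\Te{C\hole M}}=\NF{\NF{\Te{C'\hole M}}\,\NF{\Te{M'}}}$ (or the symmetric expression), which by the induction hypothesis depends only on $\NF{\Te{M}}$; for $C\hole-=\lam x.C'\hole-$ the abstraction case applies in the same way. Hence $\NF{\Te{C\hole M}}$ is a function of $\NF{\Te{M}}$ for every context, and the statement for head contexts follows at once.
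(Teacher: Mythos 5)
Your proposal is correct, but it takes a genuinely different route from the paper. The paper's proof is a direct ``element-chasing'' argument specific to head contexts $(\lam x_1\dots x_n.\hole-)V_1\cdots V_k$: it picks $t\in\NF{\Te{C\hole M}}$, observes that any surviving antecedent $t_0\in\Te{C\hole M}$ has the rigid shape $\bag{\lam x_1.\bag{\cdots\bag{\lam x_n.s}\cdots}}\bag{v_{11},\dots}\cdots$ with $s\in\Te{M}$ and the bag sizes matching the degrees of the $x_i$, factorizes the reduction through $\nf[\R]{s}$ using confluence and strong normalization, swaps $\nf[\R]{s}\in\sums{\NF{\Te{M}}}=\sums{\NF{\Te{N}}}$ for antecedents in $\Te{N}$, and reassembles the reduction on the $N$ side. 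You instead prove the stronger structural fact that $\NF{\Te{C\hole M}}$ is, for each fixed $C\hole-$, a function of $\NF{\Te{M}}$ alone, via the two compositionality identities $\NF{\Te{PQ}}=\NF{\NF{\Te{P}}\,\NF{\Te{Q}}}$ and the analogous description of $\NF{\Te{\lam x.P}}$, followed by induction on $C\hole-$. Both arguments rest on the same engine (Proposition~\ref{prop:confluence_Lamr} and the contextual rules of Figure~\ref{fig:CTXTR}), and your union-swapping step in the application case is sound because the witnesses $s\in\Te{P}$ and $t\in\Te{Q}$ for a pair $(a,b)\in\NF{\Te{P}}\times\NF{\Te{Q}}$ can be chosen independently (and likewise for the $t_i$ in a bag under an abstraction). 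What your approach buys is uniformity: it delivers the lemma for \emph{arbitrary} contexts with no extra work, whereas the paper only asserts in passing that its head-context argument generalizes; what the paper's approach buys is that it never needs the application identity in full generality --- it only ever normalizes the single subterm sitting in the hole, which keeps the bookkeeping local to the specific shape of head contexts actually needed for Theorem~\ref{thm:happy_ending}.
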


\begin{proof} 
Consider $C\hole- = (\lam x_1\dots x_n.\hole-)V_1\cdots V_k$ for $n,k\ge 0$.
Let us take $t\in\NF{\Te{C\hole M}}$ and prove that $t$ belongs to $\NF{\Te{C\hole N}}$, the other inclusion being symmetrical.
Then there exists $t_0\in\Te{C\hole M}$ and $\Set T\in\sums{\NF{\Lam[s]}}$ such that $t_0\msto[\R] \set t \cup \Set T$.
By definition of $C\hole-$ and $\Te-$, $t_0$ must have the following shape:
\[
	t_0 = \bag{\lam x_1.\bag{\cdots\bag{\lam x_n.s}\cdots}}\bag{v_{11},\dots,v_{1n_1}}\cdots\bag{v_{k1},\dots,v_{kn_k}}
\]
where $s\in\Te{M}$, $\bag{v_{i1},\dots,v_{in_i}}\in\Te{V_i}$ where $1\le i\le k$ and $n_i = \deg_{x_i}(s)$ for otherwise $t_0\to_0 \emptyset$, which is impossible.
By confluence and strong normalization of $\to_\R$ (Proposition~\ref{prop:confluence_Lamr}), the reduction $t_0\msto[\R] \set t \cup \Set T$ factorizes as $t_0 \msto[\R] \Set T_0 \msto[\R] \set t \cup \Set T$ where 
\[
	\Set T_0 = \bag{\lam x_1.\bag{\cdots\bag{\lam x_n.\nf[\R]s}\cdots}}\bag{v_{11},\dots,v_{1n_1}}\cdots\bag{v_{k1},\dots,v_{kn_k}}
\]
and $\nf[\R]s \in\sums{\NF{\Te{M}}}$.
By hypothesis $\nf[\R]s \in\sums{\NF{\Te{N}}}$, therefore there are $\Set S_1\in\sums{\Te{N}}$ such that
$\Set S_1 \msto[\R] \nf[\R]s\cup\Set S'$, for some $\Set S'$, and $\Set S_0\subseteq\Te{C\hole N}$ of shape
\[
	\Set S_0 = \bag{\lam x_1.\bag{\cdots\bag{\lam x_n.\Set S_1}\cdots}}\bag{v_{11},\dots,v_{1n_1}}\cdots\bag{v_{k1},\dots,v_{kn_k}}
\]
so we conclude, for some $\Set S''$, that $\Set S_0\msto[\R] \Set T_0 \cup \Set S''\msto[\R] \set t \cup \Set T\cup \nf[\R]{\Set S''}\subseteq\NF{\Te{C\hole N}}$.
\end{proof}

\subsection{Taylor expanding B\"ohm trees.}

The Taylor expansion can be extended to elements of $\Lamb$ by adding $\Te{\bot} = \set{\bag{}}$ to the rules of Definition~\ref{def:Taylor}.
However, the resulting translation of an approximant $A$ produces a set of resource terms that are not necessarily in $\R$-normal form because of the presence of $(0)$-redexes (as already discussed in Remark~\ref{rem:about0redexes}).
Luckily, it is possible to slightly modify such a definition by performing an ``on the flight'' normalization and obtain directly the normalized Taylor expansion of a B\"ohm tree.

\begin{defi}\label{def:TE_BT}
\begin{enumerate}
\item
	Let $A\in\App$. The \emph{normalized Taylor expansion of $A$}, in symbols $\Ten{A}$, is defined by structural induction following the grammar of Definition~\ref{def:approx}\eqref{def:approx1}:
	\[
		\begin{array}{rcl}
		\Ten{x}&=&\set{\bag{x^n}\st n\ge 0},\\[1ex]
		\Ten{\lam x.A'}&=&\set{[\lam x.t_1,\dots,\lam x.t_n] \st n\ge 0, \forall i\le k\ .\ t_i\in\Ten{A'} },\\[1ex]
		\Ten{\bot}&=&\set{\bag{}},\\[1ex]
		\Ten{xBA_1\cdots A_k}&=&\set{[x]t_0\cdots t_n\st t_0\in\Ten{B}, \forall 1\le i\le k\ .\ t_i\in\Ten{A_i}},\\[1ex]
	\Ten{(\lam x.A')(yBA_1\cdots A_k)}&=&\set{\bag{\lam x.s}t\st s\in\Ten{A'},\ t\in\Ten{yBA_1\cdots A_k}}.\\
		\end{array}
	\]
\item
	The \emph{normalized Taylor expansion of $\BT{M}$}, written $\Ten{\BT{M}}$, is defined by setting:
\[
	\Ten{\BT{M}} = \bigcup_{A\in\AM} \Ten{A}
\]
\end{enumerate}
\end{defi}

\begin{exa}
\begin{enumerate}
\item Recall from Example~\ref{ex:deuxpointdeux}\eqref{ex:deuxpointdeux1} that $\Appof{\comb{I}} = \set{ \bot,\lam x.\bot,\lam x.x}$, therefore 
\[
	\Ten{\Appof{\comb{I}}} = 
	\set{\bag{}}\cup
	\set{\bag{(\lam x.\bag{})^k}\st k\ge 0}\cup
	\set{\bag{\lam x.\bag{x^{n_1}},\dots,\lam x.\bag{x^{n_k}}} \st 	k,n_1,\dots,n_k\ge 0}
\]
By Example~\ref{ex:quatpointdeux}\eqref{ex:quatpointdeux1} this is equal to $\NF{\Te{\comb{I}}}$.
\item Since $\Appof{\Omega} = \emptyset$ we have $\Ten{\Appof{\Omega}} = \emptyset = \NF{\Te{\Omega}}$.
\item Also, $\Appof{\Delta} = \{\bot, \lambda x.\bot, \lambda x.xx\}$, so that
\[
	\Ten{\Appof{\Delta}} = 
	\set{\bag{}}\cup
	\set{\bag{(\lam x.\bag{})^k}\st k\ge 0}\cup
	\set{\bag{\lam x.\bag{x}\bag{x^{n_1}},\dots,\lam x.\bag{x}\bag{x^{n_k}}} \st 	k,n_1,\dots,n_k\ge 0}
\]
By Example~\ref{ex:quatpointdeux}\eqref{ex:quatpointdeux2} this is equal to $\NF{\Te{\Delta}}$.
\item Finally, Examples~\ref{ex:deuxpointdeux}\eqref{ex:deuxpointdeux3}  and~\ref{ex:quatpointdeux}\eqref{ex:quatpointdeux3} and the above item (1) give us  $\Ten{\Appof{\Delta\comb I}} = \Ten{\Appof{\comb I}}= \NF{\Te{\comb I}}= \NF{\Te{\Delta\comb I}}$. 
\end{enumerate}
\end{exa}

The rest of the section is devoted to generalizing the above example, proving that the normal form of the Taylor expansion of any \lam-term $M$ is equal to the normalized Taylor expansion of the B\"ohm tree of $M$ (Theorem~\ref{thm:T_commutes_with_BT}).
On the one side, this link is extremely useful to compute $\NF{\Te{M}}$ because the B\"ohm trees have the advantage of hiding the explicit amounts of resources that can become verbose and difficult to handle.
On the other side, this allow to transfer results from the Taylor expansions to B\"ohm trees, Lemma~\ref{lem:ctxt} being a paradigmatic example.

\begin{lem}\label{lem:TE_tech} Let $M\in\Lam$.
\begin{enumerate}[beginpenalty=99,midpenalty=99]
\item\label{lem:TE_tech0} 
	If $t\in\Te{M}$ and $t\to_\R \set{t_1}\cup\Set T_1$, then there exists $N\in\Lam$ and $\Set T_2\in\sums{\Lam[s]}$, such that $M\to_\V N$ and $\set{t_1}\cup\Set T_1\msto[\R]\Set T_2\subseteq\Te{N}$.
\item\label{lem:TE_tech1} 
	If $t\in\NF{\Te{M}}$ then there exists $M'$ such that  $M\msto M'$ and $t\in\Te{M'}$.
\item\label{lem:TE_tech1half}  
	If $t,s\in\Te{M}$ then $\NF{t}\cap\NF{s}\neq \emptyset$ entails $t = s$.
\item\label{lem:TE_tech2} 
	If $t\in\Te{M}\cap\NF{\Lam[s]}$ then there exists $A\in\App$ such that $A\BTle M$ and $t\in\Ten{A}$.
\end{enumerate}
\end{lem}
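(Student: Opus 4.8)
The four items are largely independent; the plan is to prove them in the stated order, since \eqref{lem:TE_tech1} relies on \eqref{lem:TE_tech0} and both \eqref{lem:TE_tech1} and \eqref{lem:TE_tech1half} exploit the simulation packaged in Lemma~\ref{lem:NFTEM_eq_NFTEN}. Throughout I write $\ell(e)$ for the length of the longest $\to_\R$-reduction issued from $e$, which is finite by strong normalization (Proposition~\ref{prop:confluence_Lamr}). For \eqref{lem:TE_tech0}, the first observation is that, since $\set{t_1}\cup\Set{T}_1$ is non-empty, the contracted redex cannot be a $0$-redex: a $0$-redex occurring anywhere inside $t$ would annihilate the whole term by multilinearity. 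Hence it is a $\beta_r$-, $\sigma_1$- or $\sigma_3$-redex, and by Remark~\ref{rem:emptymset}\eqref{rem:emptymset2} it arises from a genuine $\V$-redex $\rho$ of $M$. I would take $N$ to be the contractum of $\rho$ in $M$, so $M\to_\V N$, and verify that the one-step reduct already lies in $\Te{N}$: in the $\beta_v$-case this is the Substitution Lemma (Lemma~\ref{lem:TeSubstLemma}), a degree mismatch being excluded because it would force the reduct to be empty; in the $\sigma$-cases it is exactly the local computation performed in the proof of Lemma~\ref{lem:NFTEM_eq_NFTEN}\eqref{lem:NFTEM_eq_NFTEN1}. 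Thus $\Set{T}_2=\set{t_1}\cup\Set{T}_1$ works, with no further reduction needed.

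For \eqref{lem:TE_tech1}, I would fix a witness $t_0\in\Te{M}$ with $t\in\NF{t_0}$ and induct on $\ell(t_0)$. If $t_0$ is normal then $t=t_0\in\Te{M}$ and $M'=M$ works. Otherwise $t_0\to_\R\Set{E}_1$ with $\Set{E}_1\neq\emptyset$ (else $\NF{t_0}=\emptyset$), and \eqref{lem:TE_tech0} yields $N$ with $M\to_\V N$ and $\Set{E}_1\msto[\R]\Set{T}_2\subseteq\Te{N}$. By confluence of $\to_\R$ the normal forms agree, $\NF{t_0}=\NF{\Set{T}_2}$, so $t\in\NF{s}$ for some $s\in\Set{T}_2\subseteq\Te{N}$ with $\ell(s)<\ell(t_0)$; the induction hypothesis applied to $N$ and $s$ gives $N'$ with $N\msto N'$ and $t\in\Te{N'}$, whence $M\to_\V N\msto N'$ concludes.

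For \eqref{lem:TE_tech1half}, I would induct on $\ell(t)+\ell(s)$. If $t,s$ are both normal then $t=u=s$ for the common element $u$. Otherwise, say $t$ is not normal; contracting a redex of $t$ corresponds, by Remark~\ref{rem:emptymset}\eqref{rem:emptymset2}, to a $\V$-redex $\rho$, and contracting $\rho$ gives $M\to_\V N$. Lemma~\ref{lem:NFTEM_eq_NFTEN}\eqref{lem:NFTEM_eq_NFTEN1} then provides $t\msto[\R]\Set{T}_t\subseteq\Te{N}$ and $s\msto[\R]\Set{T}_s\subseteq\Te{N}$, with $\ell(\Set{T}_t)<\ell(t)$ because the reduction of $t$ makes at least one step. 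By confluence the common element $u$ lies in $\NF{t'}\cap\NF{s'}$ for some $t'\in\Set{T}_t$ and $s'\in\Set{T}_s$, both in $\Te{N}$ and of strictly smaller total measure, so the induction hypothesis gives $t'=s'$. The decisive move is then the \emph{backward uniqueness} of Lemma~\ref{lem:NFTEM_eq_NFTEN}\eqref{lem:NFTEM_eq_NFTEN2}: as $u\in\NF{t'}$ is non-empty we have $t'\not\to_0\emptyset$, so $t'$ has a unique antecedent in $\Te{M}$; since both $t$ and $s$ reduce to a set containing $t'=s'$, we conclude $t=s$. I expect this appeal to uniqueness --- the resource-theoretic conservation property that forbids two distinct expansions from collapsing onto the same normal form --- to be the main obstacle of the whole lemma.

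For \eqref{lem:TE_tech2}, note first that $t$ belongs to the clique $\Te{M}$ (Proposition~\ref{prop:maximalCl_iff_TeM}), hence is coherent with itself, so by Lemma~\ref{lem:resourceapprox} $t$ is a resource approximant; I would then build $A$ by structural induction on $t$, matching its outermost constructor against the shape forced on $M$. An empty bag $\bag{}$ forces the corresponding subterm of $M$ to be a value (Remark~\ref{rem:emptymset}\eqref{rem:emptymset1}) and is translated by $\bot$; a singleton head bag $\bag{x}$ forces $M$ to have head variable $x$; a bag of abstractions forces $M=\lam x.N$; and the redex-like case $\bag{\lam x.a}(\bag{y}\cdots)$ forces $M=(\lam x.P)(yN_0\cdots N_k)$. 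In each case I recurse on the immediate subterms and reassemble the matching clause of Definition~\ref{def:TE_BT}, keeping the $b/c$ versus $B/C$ correspondence between the grammars of Definitions~\ref{def:normalized_resterms} and~\ref{def:approx}\eqref{def:approx1} so that $A$ is a genuine element of $\App$. The only delicate point is the abstraction case $t=\bag{\lam x.a_1,\dots,\lam x.a_n}$: the induction yields approximants $A_1,\dots,A_n\BTle N$ with $a_i\in\Ten{A_i}$, and these must be merged into a single $A^\ast\BTle N$ with every $a_i\in\Ten{A^\ast}$. I would take $A^\ast=A_1\sqcup\cdots\sqcup A_n$, which exists and is again an approximant below $N$ because all the $A_i$ are bounded by $N$, and then invoke the easy monotonicity $A_i\BTle A^\ast\Rightarrow\Ten{A_i}\subseteq\Ten{A^\ast}$; equivalently, one proves the statement for finite coherent sets of normal simple terms in one go, which makes this merging automatic.
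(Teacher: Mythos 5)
Your items \eqref{lem:TE_tech1}, \eqref{lem:TE_tech1half} and \eqref{lem:TE_tech2} are essentially correct and close to the paper's own argument: \eqref{lem:TE_tech1} is the same induction on reduction length, \eqref{lem:TE_tech2} is the same structural induction with the same $\sqcup$-merge in the abstraction case (and you are right that the monotonicity $A_i\BTle A^\ast\Rightarrow\Ten{A_i}\subseteq\Ten{A^\ast}$ is the point to check there), and \eqref{lem:TE_tech1half} replaces the paper's forward chain ``apply \eqref{lem:TE_tech1}, then iterate the backward uniqueness of Lemma~\ref{lem:NFTEM_eq_NFTEN}\eqref{lem:NFTEM_eq_NFTEN2} along $M\to_\V M_1\to_\V\cdots\to_\V M_k$'' by an induction on $\ell(t)+\ell(s)$ resting on the same uniqueness property; both work.

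There is, however, a genuine gap in your proof of item \eqref{lem:TE_tech0}: the claim that $\Set T_2=\set{t_1}\cup\Set T_1$ ``works, with no further reduction needed'' is false whenever $t$ contains \emph{several} images of the contracted $\V$-redex $\rho$, which happens as soon as $\rho$ lies under an abstraction of $M$ and the corresponding bag of $t$ has more than one element. Concretely, take $M=\lam x.\comb{I}x\to_{\beta_v}\lam x.x=N$ and $t=[\lam x.u,\lam x.u]\in\Te{M}$ with $u=[\lam y.[y]][x]$; contracting the $\beta_r$-redex in the first copy gives
\[
t\to_\R \set{[\lam x.[x],\ \lam x.[\lam y.[y]][x]]},
\]
and this simple term is \emph{not} in $\Te{\lam x.x}$, because its second bag element still carries the unreduced redex. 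So the one-step reduct lands in $\Te{N}$ only ``coordinate-wise in the reduced position''; the remaining copies of the image of $\rho$ must also be contracted before one reaches a subset of $\Te{N}$. This is precisely why the statement allows a further $\msto[\R]$, and the repair is the paper's argument: Lemma~\ref{lem:NFTEM_eq_NFTEN}\eqref{lem:NFTEM_eq_NFTEN1} gives $t\msto[\R]\Set T_2\subseteq\Te{N}$ by contracting \emph{all} images of $\rho$ in $t$, and since your single step contracts one of them, the reduction $t\to_\R\set{t_1}\cup\Set T_1$ can be completed (by confluence, or by finishing the development of the remaining images) to $\set{t_1}\cup\Set T_1\msto[\R]\Set T_2$. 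With that correction your item \eqref{lem:TE_tech0}, and hence the uses you make of it in \eqref{lem:TE_tech1}, goes through.
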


\begin{proof}\eqref{lem:TE_tech0}
Note that $t\to_\R\set{t_1}\cup\Set T_1$ by contracting an $\R$-redex arising from an occurrence of a $\V$-redex in $M$, so $M\to_\V N$ where $N$ is obtained by contracting such a redex occurrence.
By Lemma~\ref{lem:NFTEM_eq_NFTEN}\eqref{lem:NFTEM_eq_NFTEN1} and confluence of $\to_\R$, there exists $\Set T_2\subseteq \Te{N}$ such that $t\msto[\R] \set{t_1}\cup\Set T_1\msto[\R] \Set T_2$.

\eqref{lem:TE_tech1} Assume that $t\in\NF{\Te{M}}$, then there are $t_0\in\Te{M}$ and $\Set T\in\sums{\Lam[s]}$ such that $t_0\msto[\R] \set t\cup\Set T$.
Since $\to_\R$ is strongly normalizing, we can assume $\Set T\subseteq\NF{\Lam[s]}$ and choose such a reduction to have maximal length $n$. 
We proceed by induction on $n$ to show that the \lam-term $M'$ exists.
If $n=0$ then $t_0$ is in $\R$-nf so just take $t_0=t$, $\Set T = \emptyset$ and $M = M'$.
Otherwise $n>0$ and $t_0\to_\R \set{t_1}\cup \Set T_1\msto[\R] \set{t}\cup\Set T$ where the second reduction is strictly shorter. 
By \eqref{lem:TE_tech0} and confluence there exists $N$ such that $M\to_\V N$ and $\set{t_1}\cup\Set T_1\msto[\R]\Set T_2\msto[\R] \set{t}\cup \Set T$ for some $\Set T_2\subseteq\Te{N}$.
So, there are $t_2\in\Set T_2$ and $\Set T'\in\sums{\Lam[s]}$ such that $t_2\msto[\R] \set{t}\cup\Set T'\subseteq\NF{\Te{M}}$ so we conclude by applying the induction hypothesis to this reduction  shorter than $n$.

\eqref{lem:TE_tech1half} Assume $t_0\in\NF{t}\subseteq\NF{\Te{M}}$. By \eqref{lem:TE_tech1}  there is a reduction $M \to_\V M_1 \to_\V\cdots \to_\V M_k$ such that $t_0\in\Te{M_k}$. 
By an iterated application of Lemma~\ref{lem:NFTEM_eq_NFTEN}\eqref{lem:NFTEM_eq_NFTEN2}, we get that $t$ is the unique element in $\Te{M}$ generating $t_0$. Therefore, $t_0\in\NF{s}$ entails $s = t$.

\eqref{lem:TE_tech2} By structural induction on the normal structure of $t$ (characterized in Lemma~\ref{lem:resourceapprox}: notice that $t\coh t$ by Proposition~\ref{prop:maximalCl_iff_TeM}).

If $t = \bag{}$ then $M\in\Val$ and there are two subcases: either $M=x$, or $M=\lam x.M'$ so we simply take $A = \bot$.
Similarly, if $t = \bag{x,\dots, x}$ ($n>0$ occurrences) then $M = A = x.$

If $t = \bag{\lam x.a_1,\dots,\lam x.a_n}$ with $n>0$ then $M = \lam x.M'$ and $a_i\in\Te{M'}$ for $i\le n$. By induction hypothesis, there are approximants $A_i\BTle M'$  such that $a_i\in\Ten{A_i}$.
Then we set $A = \lam x. A'$ for $A' = A_1\sqcup\cdots \sqcup A_n$ which exists because the $A_i$'s are pairwise compatible.

If $t = \bag{x}ba_1\cdots a_k$ then $M = xM_0\cdots M_k$ and $b\in\Te{M_0}$ and $a_j\in\Te{M_i}$ for $1\le j \le k$. 
By induction hypothesis, there are $A_0,\dots,A_k$ such that $A_i\BTle M_i$ for all $i$ ($0\le i\le k$), $b\in\Ten{A_0}$ and 
$a_j\in\Ten{A_j}$ for $1\le j \le k$. Moreover $b\in\Ten{A_0}$ entails that $A_0$ is a $B$-term from the grammar in Lemma~\ref{lem:char_V_normal form}, therefore we may take $A = xA_0\cdots A_k\in\App$.

Finally, if $t = \bag{\lam x.a}(\bag{y}ba_1\cdots a_k)$ then $M = (\lam x.M')(yM_0\cdots M_k)$ with $a\in\Te{M'}$ and $\bag{y}ba_1\cdots a_k\in\Te{yM_0\cdots M_k}$.
Reasoning as in the previous case, we get $yA_0\cdots A_k\in\App$ such that $\bag{y}ba_1\cdots a_k\in\Ten{yA_0\cdots A_k}$. 
Moreover, by induction hypothesis, there is $A'\BTle M'$ such that $a\in\Ten{A'}$.
We conclude by taking $A = (\lam x.A')(yA_0\cdots A_k)$.
\end{proof}

\begin{lem}\label{lem:mix} 
Let $M\in\Lam$ and $A\in\App$.
\begin{enumerate}
\item\label{lem:MleA_imp_TenAsseqNFTEM}
If $A\BTle M$ then $\Ten{A}\subseteq \NF{\Te{M}}$.
\item\label{lem:ManRuop14} 
If $\Ten{A}\subseteq\Ten{\BT{M}}$ then $A\in\Appof{M}$.
\end{enumerate}
\end{lem}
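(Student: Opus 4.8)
The plan is to treat the two items separately: item~\eqref{lem:MleA_imp_TenAsseqNFTEM} by a direct structural induction on the approximant $A$, and item~\eqref{lem:ManRuop14} by isolating a single ``shape-revealing'' element of $\Ten{A}$ and invoking the downward-closure of $\AM$.

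For item~\eqref{lem:MleA_imp_TenAsseqNFTEM} I would induct on the grammar of $A$ from Definition~\ref{def:approx}\eqref{def:approx1}, showing that every $t\in\Ten{A}$ lies in $\NF{\Te{M}}$. Since each element of $\Ten{A}$ is a resource approximant, it is already in $\R$-nf, so it suffices, for each such $t$, to exhibit a preimage $e\in\Te{M}$ with $t\in\nf[\R]{e}$. In the base case $A=\bot$ we have $\Ten{\bot}=\set{\bag{}}$ and $\bot\BTle M$ forces $M\in\Val$, so $\bag{}\in\Te{M}$ by Remark~\ref{rem:emptymset}\eqref{rem:emptymset1}, and $\bag{}$ is its own normal form; in the case $A=x$ we have $M=x$ and $\Ten{x}=\Te{x}$, all already normal. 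In each inductive case ($A=\lam x.A'$, $A=xBA_1\cdots A_k$, $A=(\lam x.A')(yBA_1\cdots A_k)$) the relation $A\BTle M$ forces $M$ to share the same top shape, with the immediate sub-approximants dominated by the corresponding subterms of $M$; applying the induction hypothesis to each sub-approximant produces preimages in the Taylor expansions of those subterms, and reassembling them inside the common constructor yields the desired $e\in\Te{M}$. The only bookkeeping is that $\to_\R$ acts contextually and multilinearly inside bags (Figure~\ref{fig:CTXTR} and Notation~\ref{notation:sumsofresterm}), so reducing each chosen preimage in place produces a reduct whose set of normal forms contains the reassembled $t$. I expect this step to be routine.

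For item~\eqref{lem:ManRuop14} the crux is a preliminary claim: for every $A\in\App$ there is a distinguished element $t_A\in\Ten{A}$ such that, for all $A'\in\App$, $t_A\in\Ten{A'}$ implies $A\BTle A'$. I would construct $t_A$ by structural induction, always taking multiplicity one at each variable- and abstraction-node: $t_\bot=\bag{}$, $t_x=\bag{x}$, $t_{\lam x.A'}=\bag{\lam x.t_{A'}}$, $t_{xBA_1\cdots A_k}=\bag{x}t_Bt_{A_1}\cdots t_{A_k}$, and likewise in the $C$-case. The verification is that membership $t_A\in\Ten{A'}$ pins down the top constructor of $A'$: a value-approximant expands only to bags whereas an application-approximant expands only to applications, and within applications the head bag is a variable bag exactly when the head is a variable and an abstraction bag exactly when the head is an abstraction, so the head symbol and arity of $A'$ must coincide with those of $A$; the induction hypothesis applied to the immediate sub-approximants then yields $A\BTle A'$ by context-closure of $\BTle$. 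In the degenerate case $A=\bot$, the element $\bag{}$ merely forces $A'$ to be a value-approximant, and $\bot\BTle A'$ holds for all such $A'$.

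Granting the claim, item~\eqref{lem:ManRuop14} is immediate: assuming $\Ten{A}\subseteq\Ten{\BT{M}}=\bigcup_{A'\in\AM}\Ten{A'}$, the element $t_A$ belongs to some $\Ten{A'}$ with $A'\in\AM$, whence $A\BTle A'$; since $\AM$ is then nonempty it is an ideal, hence downward closed by Proposition~\ref{prop:AMisIdeal}, so $A\in\AM=\Appof{M}$. I expect the main obstacle to be the forcing argument in the preliminary claim --- specifically, checking that a single minimal-multiplicity witness already determines the entire finite shape of $A'$ and correctly distinguishes variable-headed from abstraction-headed redex-like approximants through their head bags.
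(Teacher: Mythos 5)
Both parts of your proposal are correct. For item~\eqref{lem:MleA_imp_TenAsseqNFTEM} you follow essentially the paper's route: a structural induction on $A$ showing that every element of $\Ten{A}$ is an $\R$-normal form already lying in $\Te{M}$ --- in fact the paper simply takes your ``preimage'' $e$ to be $t$ itself, so no reduction step is needed at all.

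For item~\eqref{lem:ManRuop14} you take a genuinely different route. The paper argues by structural induction on $A$ directly from the inclusion $\Ten{A}\subseteq\Ten{\BT{M}}$: at each constructor it extracts a reduct of $M$ with the matching top shape, applies the induction hypothesis to the immediate subterms, and reassembles the resulting approximants via Lemma~\ref{lem:AM_preserved_by_toV}. You instead isolate a single minimal-multiplicity witness $t_A\in\Ten{A}$ (all bags singletons, except $\bag{}$ under $\bot$), prove by induction on $A$ that $t_A\in\Ten{A'}$ forces $A\BTle A'$, and conclude from $t_A\in\Ten{\BT{M}}$ together with the downward closure of the ideal $\Appof{M}$ (Proposition~\ref{prop:AMisIdeal}, applicable since $t_A$ witnesses $\Appof{M}\neq\emptyset$). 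Your forcing claim does hold: a singleton bag discriminates variable bags from abstraction bags and from $\bag{}$, and the left spine of an application determines the arity $k$ and whether the head bag is $\bag{x}$ or $\bag{\lam x.s}$, which is exactly the $B$-versus-$C$ distinction in the grammar of $\App$. What your approach buys is that the induction lives entirely at the level of approximants and the syntax of $\Ten{-}$, with no reference to reduction of $M$; in particular it sidesteps the step, left implicit in the paper, of producing a single reduct $M'$ of $M$ that simultaneously witnesses all the elements of $\Ten{A'}$ needed to invoke the induction hypothesis. The price is the auxiliary shape-forcing claim, whose case analysis you have correctly identified as the only real work.
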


\begin{proof} \eqref{lem:MleA_imp_TenAsseqNFTEM} If $A= \bot$ then $M\in\Val$ and $\Ten{\bot} = \set{[]} \subseteq\Te{M} \cap \NF{\Lam[r]}$. 

Otherwise, it follows by induction on $A$ exploiting the fact that all simple terms in $\Ten{A}$ belong to $\Te{M}$ and are already in $\R$-nf.

\eqref{lem:ManRuop14} We proceed by structural induction on $A$, the case $A =\bot$ being trivial.
\begin{itemize}
\item If $A = x$ then $\Ten{x} = \set{[x^n] \st n\ge 0}\subseteq\Ten{\BT M}$ entails $M \msto x$ and we are done.
\item If $A = \lam x.A'$ then $\Ten{\lam x.A'} = \set{[\lam x.t_1,\dots,\lam x.t_n] \st n\ge 0, \forall i\le k\ .\ t_i\in\Ten{A'} }$.
So, $\Ten{\lam x.A'}\subseteq\Ten{\BT{M}}$ implies that $M\msto \lam x.M'$ for some $M'$ such that $\Ten{A'}\subseteq\Ten{\BT{M'}}$.
By induction hypothesis, we get $A'\in\Appof {M'}$ and $\lam x.A'\in \Appof {\lam x.M'}$.
By Lemma~\ref{lem:AM_preserved_by_toV} we obtain $\lam x.A'\in \Appof {M}$ as desired.
\item If $A = \bot$, then $\Ten{A} = \set{\bag{}}\subseteq\Ten{\BT M}$ entails $M\msto V$ for some value $V$, therefore we get $\bot\in\Appof {V}$ and we conclude by Lemma~\ref{lem:AM_preserved_by_toV}.
\item If $A = xBA'_1\cdots A'_k$ then $\Ten{A}=\set{[x]t_0\cdots t_n\st t_0\in\Ten{B}, \forall 1\le i\le k\ .\ t_i\in\Ten{A'_i}}$.
In this case, we must have $M\msto xM_0\cdots M_k$ with $\Ten{B}\subseteq\Ten{\BT{M_0}}$ and $\Ten{A'_i}\subseteq\Ten{\BT{M_i}}$ for $1\le i \le k$.
By induction hypothesis $B \in  \mathcal{A}(M_0)$ and $A'_i \in  \mathcal{A}(M_i)$ $\forall i \in  \{1,\dots,k\}$, thus $xBA'_1\cdots A'_k\in\Appof{xM_0\cdots M_k} = \AM$ by Lemma~\ref{lem:AM_preserved_by_toV}.

\item If $A = (\lam x.A')(yBA'_1\cdots A'_k)$, then $\Ten{A}=\set{\bag{\lam x.s}t\st s\in\Ten{A'},\ t\in\Ten{yBA'_1\cdots A'_k}}$.
In this case we get $M\msto (\lam x.M')(yM_0\cdots M_k)$ with $\Ten{A'}\subseteq\Ten{\BT{M'}}$ and $\Ten{yBA'_1\cdots A'_k}\subseteq\Ten{\BT{yM_0\cdots M_k}}$. 
By applying the induction hypothesis, we obtain $A\in\Appof{(\lam x.M')(yM_0\cdots M_k)}$ and once again we conclude by Lemma~\ref{lem:AM_preserved_by_toV}.\qedhere
\end{itemize}
\end{proof}

The following constitutes the main result of the section, relating B\"ohm trees and Taylor expansion in the spirit of~\cite{EhrhardR06}.

\begin{thm}\label{thm:T_commutes_with_BT} 
For all $M\in\Lam$, we have $\Ten{\BT{M}} = \NF{\Te{M}}$.
\end{thm}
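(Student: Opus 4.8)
The plan is to prove the two set-theoretic inclusions separately; essentially all the difficulty has already been absorbed into Lemma~\ref{lem:mix} and Lemma~\ref{lem:TE_tech}, so the theorem itself is a short assembly of these results together with Corollary~\ref{cor:NFTEM_eq_NFTEN}.

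For the inclusion $\Ten{\BT{M}} \subseteq \NF{\Te{M}}$, I would unfold the definition $\Ten{\BT{M}} = \bigcup_{A \in \AM}\Ten{A}$ and fix some $t \in \Ten{A}$ for an approximant $A \in \AM$. By Definition~\ref{def:approx}\eqref{def:approx2} there is $N \in \Lam$ with $M \msto N$ and $A \BTle N$. Applying Lemma~\ref{lem:mix}\eqref{lem:MleA_imp_TenAsseqNFTEM} to $N$ gives $\Ten{A} \subseteq \NF{\Te{N}}$, so $t \in \NF{\Te{N}}$. Since $M \msto N$ entails $M =_\V N$, Corollary~\ref{cor:NFTEM_eq_NFTEN} yields $\NF{\Te{N}} = \NF{\Te{M}}$, whence $t \in \NF{\Te{M}}$.

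For the converse $\NF{\Te{M}} \subseteq \Ten{\BT{M}}$, fix $t \in \NF{\Te{M}}$. By Lemma~\ref{lem:TE_tech}\eqref{lem:TE_tech1} there is $M'$ with $M \msto M'$ and $t \in \Te{M'}$; as $t$ is in $\R$-normal form we in fact have $t \in \Te{M'} \cap \NF{\Lam[s]}$, so Lemma~\ref{lem:TE_tech}\eqref{lem:TE_tech2} produces $A \in \App$ with $A \BTle M'$ and $t \in \Ten{A}$. Because $M \msto M'$ and $A \BTle M'$, the pair $(A,M')$ witnesses $A \in \AM$ directly from Definition~\ref{def:approx}\eqref{def:approx2} (equivalently, via an iterated use of Lemma~\ref{lem:AM_preserved_by_toV} one has $\Appof{M'} = \AM$). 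Hence $t \in \Ten{A} \subseteq \Ten{\BT{M}}$, and the two inclusions give the equality.

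The substance of the argument lies not in the theorem but in keeping the two halves honest. The forward direction relies on the fact that the simple terms produced by $\Ten{A}$ for an approximant $A \BTle N$ are already in $\R$-normal form and sit inside $\NF{\Te{N}}$ (Lemma~\ref{lem:mix}\eqref{lem:MleA_imp_TenAsseqNFTEM}). The backward direction crucially exploits that $t$ is in $\R$-normal form in order to recover a genuine approximant through Lemma~\ref{lem:TE_tech}\eqref{lem:TE_tech2}; without this normality hypothesis an arbitrary $t \in \Te{M'}$ need not correspond to any single approximant of $\BT{M}$, so this is the step where the careful bookkeeping of the preceding lemmas pays off. I would also record as an immediate consequence that $\BT{M} = \BT{N}$ if and only if $\NF{\Te{M}} = \NF{\Te{N}}$, obtained by combining this equality with Remark~\ref{rem:Appof_iff_BT} and the injectivity of $\Ten-$ on approximants provided by Lemma~\ref{lem:mix}\eqref{lem:ManRuop14}.
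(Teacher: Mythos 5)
Your proof is correct and follows essentially the same route as the paper: the forward inclusion via Lemma~\ref{lem:mix}\eqref{lem:MleA_imp_TenAsseqNFTEM} combined with Corollary~\ref{cor:NFTEM_eq_NFTEN}, and the converse via Lemma~\ref{lem:TE_tech}\eqref{lem:TE_tech1} followed by Lemma~\ref{lem:TE_tech}\eqref{lem:TE_tech2} and the observation that $A\BTle M'$ with $M\msto M'$ witnesses $A\in\AM$. Your explicit remark that the $\R$-normality of $t$ is what licenses the appeal to Lemma~\ref{lem:TE_tech}\eqref{lem:TE_tech2} is a correct and worthwhile clarification of a hypothesis the paper uses silently.
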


\begin{proof} $(\subseteq)$ Take $t\in\Ten{\BT M}$, then there exists an approximant $A'\in\Appof{M}$ such that $t\in\Ten{A'}$.
As $A'\in\Appof{M}$, there is $M'\in\Lam$ such that $M\msto M'$ and $A'\BTle M'$.
We can therefore apply Lemma~\ref{lem:mix}\eqref{lem:MleA_imp_TenAsseqNFTEM} to conclude that $t\in\NF{\Te{M'}}$, which is equal to $\NF{\Te M}$ by Lemma~\ref{cor:NFTEM_eq_NFTEN}.

$(\supseteq)$ Assume $t\in\NF{\Te{M}}$. By Lemma~\ref{lem:TE_tech}\eqref{lem:TE_tech1} there exists $M'\in\Lam$ such that $M\msto M'$ and $t\in\Te{M'}$.
By Lemma~\ref{lem:TE_tech}\eqref{lem:TE_tech2}, there is $A\BTle M'$ such that $t\in\Ten{A}$.
By the conditions above we have $A\in\Appof{M}$, so we conclude that $t\in\Ten{\BT{M}}$.
\end{proof}

\subsection{Consequences of the main theorem}\label{subsect:cons}

The rest of the section is devoted to present some interesting consequences of Theorem~\ref{thm:T_commutes_with_BT}.

\begin{cor}\label{cor:BTs_iff_Ten}
For $M,N\in\Lambda$, the following are equivalent:
\begin{enumerate}
\item $\BT{M} = \BT{N}$,
\item $\NF{\Te{M}} = \NF{\Te{N}}$.
\end{enumerate}
\end{cor}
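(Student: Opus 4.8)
The plan is to derive the corollary directly from the main theorem together with two earlier results. Recall that Theorem~\ref{thm:T_commutes_with_BT} gives $\Ten{\BT M} = \NF{\Te M}$ for every $M$, that Remark~\ref{rem:Appof_iff_BT} states that $\AM = \Appof N$ holds if and only if $\BT M = \BT N$, and that Lemma~\ref{lem:mix}\eqref{lem:ManRuop14} says that $\Ten A \subseteq \Ten{\BT N}$ entails $A \in \Appof N$. The two implications then fall out by bookkeeping.

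For the implication $(1)\Rightarrow(2)$, I would assume $\BT M = \BT N$, which by Remark~\ref{rem:Appof_iff_BT} is equivalent to $\AM = \Appof N$. Since $\Ten{\BT M} = \bigcup_{A\in\AM}\Ten A$ depends only on the set $\AM$, this immediately gives $\Ten{\BT M} = \Ten{\BT N}$; applying Theorem~\ref{thm:T_commutes_with_BT} to both sides yields $\NF{\Te M} = \Ten{\BT M} = \Ten{\BT N} = \NF{\Te N}$.

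For the converse $(2)\Rightarrow(1)$, I would assume $\NF{\Te M} = \NF{\Te N}$, so that Theorem~\ref{thm:T_commutes_with_BT} gives $\Ten{\BT M} = \Ten{\BT N}$. It then suffices to show $\AM = \Appof N$ and invoke Remark~\ref{rem:Appof_iff_BT}. Taking any $A \in \AM$, by the very definition of $\Ten{\BT M}$ we have $\Ten A \subseteq \Ten{\BT M} = \Ten{\BT N}$, so Lemma~\ref{lem:mix}\eqref{lem:ManRuop14} gives $A \in \Appof N$. Hence $\AM \subseteq \Appof N$, and by the symmetric argument $\Appof N \subseteq \AM$, whence $\AM = \Appof N$ and therefore $\BT M = \BT N$.

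There is essentially no remaining obstacle: the corollary is a consequence of the established machinery, and the only point needing attention is the use of Lemma~\ref{lem:mix}\eqref{lem:ManRuop14} to translate the set-theoretic equality of normalized Taylor expansions back into equality of the approximant sets. All the genuine work—namely that an approximant lies in $\AM$ precisely when its normalized Taylor expansion is included in $\Ten{\BT M}$—has already been discharged there, so the present statement merely packages Theorem~\ref{thm:T_commutes_with_BT}, Lemma~\ref{lem:mix}, and Remark~\ref{rem:Appof_iff_BT}.
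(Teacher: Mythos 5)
Your proof is correct and follows essentially the same route as the paper's: the forward direction is an immediate application of Theorem~\ref{thm:T_commutes_with_BT}, and the converse uses that theorem together with Lemma~\ref{lem:mix}\eqref{lem:ManRuop14} to recover $\Appof{M} = \Appof{N}$ and then Remark~\ref{rem:Appof_iff_BT}. The only cosmetic difference is that you also route the forward direction through the equality of approximant sets, which the paper skips as unnecessary.
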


\begin{proof}
$(1\Rightarrow 2)$ If $M,N$ have the same B\"ohm tree, we can apply Theorem~\ref{thm:T_commutes_with_BT} to get
\[
	\NF{\Te{M}} = \Ten{\BT{M}}= \Ten{\BT{N}}= \NF{\Te{N}}.
\]

$(1\Leftarrow 2)$ We assume $\NF{\Te{M}} = \NF{\Te{N}}$ and start showing $\Appof{M} \subseteq \Appof{N}$.
Take any $A\in\Appof{M}$, by definition we have $\Ten{A}\subseteq\Ten{\BT{M}}$, so Lemma~\ref{lem:mix}\eqref{lem:ManRuop14} entails $A\in\BT{N}$.
The converse inclusion being symmetrical, we get $\AM = \Appof N$ which in its turn entails $\BT{M} = \BT{N}$ by Remark~\ref{rem:Appof_iff_BT}.
\end{proof}


Carraro and Guerrieri showed in~\cite{CarraroG14} that the relational model $\mathscr{U}$ of \CbV{} \lam-calculus and resource calculus introduced by Ehrhard in~\cite{Ehrhard12} satisfies the $\sigma$-rules, so it is actually a model of both $\lsv$ and $\lambda_r^\sigma$.
They also prove that $\mathscr{U}$ satisfies the Taylor expansion in the following technical sense (where $\Int-$ represents the interpretation function in $\mathscr{U}$):
\begin{equation}\label{eq:CarGue}
	\Int{M} = \bigcup_{t\in\Te{M}}\Int{t}
\end{equation}
As a consequence, we get that the theory of the model $\mathscr{U}$ is included in the theory equating all \lam-terms having the same B\"ohm trees.
We conjecture that the two theories coincide.

\begin{thm} For $M,N\in\Lam$, we have:
\[
	\BT{M} = \BT{N}\ \Rightarrow\ \Int{M}= \Int{N}.
\]
\end{thm}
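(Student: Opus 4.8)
The plan is to factor the interpretation $\Int{M}$ through the normal form of the Taylor expansion, and then invoke Corollary~\ref{cor:BTs_iff_Ten}. The starting point is Carraro and Guerrieri's equation~\eqref{eq:CarGue}, which expresses $\Int{M}$ as the union $\bigcup_{t\in\Te{M}}\Int{t}$ of the interpretations of the simple terms in the Taylor expansion. Since $\mathscr{U}$ is a model of the resource calculus $\lambda_r^\sigma$ (and not merely of $\lsv$), its interpretation is sound with respect to the resource reduction $\to_\R$; this is exactly the property I shall leverage.

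First I would record the soundness of $\mathscr{U}$ for $\to_\R$: whenever $t\to_\R\Set{T}$ one has $\Int{t} = \bigcup_{s\in\Set{T}}\Int{s}$, with the convention that $\Int{\emptyset}=\emptyset$ so that $(0)$-redexes are correctly annihilated. Using confluence and strong normalization of $\to_\R$ (Proposition~\ref{prop:confluence_Lamr}), an easy induction on the length of a reduction to normal form yields, for every simple term $t$,
\[
	\Int{t} = \bigcup_{s\in\nf[\R]{t}}\Int{s}.
\]
Substituting this into~\eqref{eq:CarGue} and using that $\NF{\Te{M}} = \bigcup_{t\in\Te{M}}\nf[\R]{t}$ by definition, I obtain the key identity
\[
	\Int{M} = \bigcup_{t\in\Te{M}}\Int{t} = \bigcup_{t\in\Te{M}}\ \bigcup_{s\in\nf[\R]{t}}\Int{s} = \bigcup_{s\in\NF{\Te{M}}}\Int{s}.
\]
Thus the interpretation of $M$ depends only on the normal form of its Taylor expansion.

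From here the conclusion is immediate. Assume $\BT{M} = \BT{N}$. By Corollary~\ref{cor:BTs_iff_Ten} this gives $\NF{\Te{M}} = \NF{\Te{N}}$, and plugging this equality into the identity above yields $\Int{M} = \bigcup_{s\in\NF{\Te{M}}}\Int{s} = \bigcup_{s\in\NF{\Te{N}}}\Int{s} = \Int{N}$, as desired. The main obstacle is the soundness step: one must verify that $\mathscr{U}$ validates all four resource reductions $(\beta_r),(0),(\sigma_1),(\sigma_3)$ as equalities of interpretations — in particular the treatment of the empty set for $(0)$-redexes and of the $\sigma$-rules — which is precisely the content of $\mathscr{U}$ being a model of $\lambda_r^\sigma$ established in~\cite{CarraroG14}. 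Everything else is bookkeeping with unions.
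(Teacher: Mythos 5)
Your proposal is correct and follows essentially the same route as the paper: both rest on equation~\eqref{eq:CarGue}, the soundness identity $\Int{t}=\bigcup_{s\in\nf[\R]{t}}\Int{s}$ coming from $\mathscr{U}$ being a model of $\lambda^\sigma_r$, and the bridge between B\"ohm trees and normalized Taylor expansions. The only cosmetic difference is that you invoke Corollary~\ref{cor:BTs_iff_Ten} to pass from $\BT{M}=\BT{N}$ to $\NF{\Te{M}}=\NF{\Te{N}}$, whereas the paper applies Theorem~\ref{thm:T_commutes_with_BT} directly in its chain of equalities; since that corollary is itself an immediate consequence of the theorem, the two arguments coincide.
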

\begin{proof}Indeed, we have the following chain of equalities:
\[
	\begin{array}{rll}
	\Int{M} =& \bigcup_{t\in\Te{M}}\Int{t},&\textrm{by \eqref{eq:CarGue},}\\
	=& \bigcup_{t\in\NF{\Te{M}}}\Int{t},&\textrm{as }\Int{t} = \bigcup_{s\in\nf[\R]{t}}\Int{s} ,\\
	=& \bigcup_{t\in\Ten{\BT{M}}}\Int{t},&\textrm{by Theorem~\ref{thm:T_commutes_with_BT}},\\
	=& \bigcup_{t\in\Ten{\BT{N}}}\Int{t},&\textrm{as }\BT{M}=\BT{N},\\	
	=& \bigcup_{t\in\NF{\Te{N}}}\Int{t},&\textrm{by Theorem~\ref{thm:T_commutes_with_BT}},\\
	=& \bigcup_{t\in{\Te{N}}}\Int{t},&\textrm{as }\Int{t} = \bigcup_{s\in\nf[\R]{t}}\Int{s} ,\\	
	=&\Int{N},&\textrm{by \eqref{eq:CarGue}.}\\
	\end{array}
\]
This concludes the proof.
\end{proof}

In the paper~\cite{CarraroG14}, the authors also prove that $\Int{M}\neq\emptyset$ exactly when $M$ is potentially valuable (Definition~\ref{def:pot_val}). From this result, we obtain easily the lemma below.

\begin{thm}\label{thm:MPViffBTMnotbot} 
For $M\in\Lam$, the following are equivalent:
\begin{enumerate}
\item $M$ is potentially valuable, 
\item $\BT{M}\neq\bot$.
\end{enumerate}
\end{thm}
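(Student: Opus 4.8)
The plan is to reduce the statement to the semantic characterisation of potential valuability of~\cite{CarraroG14} and then to transport it through the Taylor expansion by means of the results already established. Recall from~\cite{CarraroG14} that $\Int M\neq\emptyset$ if and only if $M$ is potentially valuable; it therefore suffices to prove that $\Int M\neq\emptyset$ exactly when $\AM\neq\emptyset$, the latter being precisely the condition $\BT{M}\neq\bot$ (since $\BT{M}=\Sup\AM$).

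First I would recall the chain of equalities established in the proof of the preceding theorem, namely
\[
	\Int M = \bigcup_{t\in\NF{\Te M}}\Int t = \bigcup_{t\in\Ten{\BT{M}}}\Int t.
\]
The crucial point is that in the model $\mathscr U$ every resource approximant $t\in\NF{\Lam[s]}$ has non-empty interpretation $\Int t\neq\emptyset$ (in particular $\Int{\bag{}}\neq\emptyset$). Hence the displayed union is non-empty precisely when its index set is, so that $\Int M\neq\emptyset$ if and only if $\NF{\Te M}\neq\emptyset$, which by Theorem~\ref{thm:T_commutes_with_BT} is the same as $\Ten{\BT{M}}\neq\emptyset$.

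It then remains to identify $\Ten{\BT{M}}\neq\emptyset$ with $\AM\neq\emptyset$. Since $\Ten{\BT{M}}=\bigcup_{A\in\AM}\Ten A$, this follows from a routine induction on the grammar of approximants showing $\Ten A\neq\emptyset$ for every $A\in\App$: indeed $\bag{}\in\Ten\bot$ and $\bag{}\in\Ten{\lam x.A'}$, while in the remaining cases a witness is assembled from witnesses of the immediate subterms, which exist by the induction hypothesis. Consequently $\Ten{\BT{M}}\neq\emptyset$ iff $\AM\neq\emptyset$, i.e.\ iff $\BT{M}\neq\bot$, and chaining the three equivalences yields the claim.

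I expect the main obstacle to be the step $\Int M\neq\emptyset\iff\NF{\Te M}\neq\emptyset$, as it rests on the specific feature of the relational model $\mathscr U$ that normal resource terms receive a non-empty interpretation, rather than on purely syntactic grounds. Should one prefer to bypass the model, the alternative is a direct syntactic argument: from $A\in\AM$ with $M\msto N$ and $A\BTle N$ one manufactures a value by substituting suitable abstraction-values for the free variables occurring at the heads of the stuck arguments of $N$, thereby unblocking each residual $\beta_v$-redex; conversely, potential valuability provides a reduct of a closing head context applied to $M$ that is a value, whose stable outer structure yields a non-$\bot$ approximant of $M$. This route is more delicate, as it must track how the $\sigma$-normal shape of a $\V$-normal form gets unblocked under substitution.
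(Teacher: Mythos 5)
Your argument is essentially the paper's own: both reduce potential valuability to $\Int{M}\neq\emptyset$ via \cite[Thm.~24]{CarraroG14}, push this through the Taylor expansion property of $\mathscr{U}$ and Theorem~\ref{thm:T_commutes_with_BT}, and use that resource approximants have non-empty interpretation, with your extra induction showing $\Ten{A}\neq\emptyset$ for every $A\in\App$ supplying the (correct) justification of the final step. The only caveat, which you inherit from the statement itself and which the paper's proof also glosses over, is the last identification: what the chain actually yields is the equivalence with $\Appof{M}\neq\emptyset$, i.e.\ $\BT{M}\neq\emptyset$, and this differs from the literal condition $\BT{M}\neq\bot$ on values such as $M=\lam x.\Omega$, where $\Appof{M}=\set{\bot}$ and $\BT{M}=\bot$ even though $M$ is trivially potentially valuable.
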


\begin{proof} It is easy to check that all resource approximants $t$ have non-empty interpretation in $\mathscr{U}$, i.e.\ $t\in\NF{\Lam[s]}$ entails $\Int{t}\neq\emptyset$. 
Therefore we have the following chain of equivalences:
\[
	\begin{array}{rll}
	M\textrm{ potentially valuable} \iff &\Int{M}\neq\emptyset,&\textrm{by \cite[Thm.~24]{CarraroG14},}\\
	\iff&\exists t\in\NF{\Te{M}}, \Int{t}\neq\emptyset&\textrm{by \eqref{eq:CarGue},}\\
	\iff&\exists s\in\Ten{\BT{M}}, \Int{s}\neq\emptyset&\textrm{by Theorem~\ref{thm:T_commutes_with_BT}},\\
	\iff&\exists A\in\Appof{M},A\neq\bot\\
	\end{array}
\]
This is equivalent to say that $\BT{M}\neq\bot$.
\end{proof}

After this short, but fruitful, semantical digression we conclude proving that all \lam-terms having the same B\"ohm tree are indistinguishable from an observational point of view.
As in the \CbN{} setting, also in \CbV{} this result follows from the Context Lemma for B\"ohm trees.
The classical proof of this lemma in \CbN{} is obtained by developing an interesting, but complicated, theory of syntactic continuity (see \cite[\S14.3]{Bare} and \cite[\S2.4]{AmadioC98}).
Here we bypass this problem completely, and obtain such a result as a corollary of the Context Lemma for Taylor expansions by applying Theorem~\ref{thm:T_commutes_with_BT}.

\begin{lem}[Context Lemma for B\"ohm trees]\label{lem:ctxt} Let $M,N\in\Lam$. If $\BT{M} = \BT{N}$ then, for all head contexts $C\hole-$, we have $\BT{C\hole M} = \BT{C\hole N}$.
\end{lem}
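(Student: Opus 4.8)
The plan is to avoid any direct manipulation of B\"ohm trees or approximants under contexts, and instead push the entire argument through the Taylor expansion, where the corresponding Context Lemma has already been established (Lemma~\ref{lem:Te_ctxt}). The key bridge is Corollary~\ref{cor:BTs_iff_Ten}, which tells us that equality of B\"ohm trees is equivalent to equality of the normal forms of the Taylor expansions. Thus the whole statement becomes a sandwich: translate the hypothesis into the Taylor setting, apply the Taylor Context Lemma, and translate back.

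Concretely, first I would use the hypothesis $\BT M = \BT N$ together with the implication $(1\Rightarrow 2)$ of Corollary~\ref{cor:BTs_iff_Ten} to obtain $\NF{\Te M} = \NF{\Te N}$. Then, since $C\hole-$ is a head context, Lemma~\ref{lem:Te_ctxt} applies verbatim and yields $\NF{\Te{C\hole M}} = \NF{\Te{C\hole N}}$. Finally, I would invoke the converse implication $(2\Rightarrow 1)$ of Corollary~\ref{cor:BTs_iff_Ten}, this time on the \lam-terms $C\hole M$ and $C\hole N$, to conclude $\BT{C\hole M} = \BT{C\hole N}$, as desired.

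There is essentially no obstacle remaining at this point: all the difficult work has already been carried out, namely the commutation Theorem~\ref{thm:T_commutes_with_BT} (relating $\Ten{\BT M}$ with $\NF{\Te M}$), which underlies Corollary~\ref{cor:BTs_iff_Ten}, and the Context Lemma for Taylor expansions (Lemma~\ref{lem:Te_ctxt}), whose proof exploits confluence and strong normalization of $\to_\R$ (Proposition~\ref{prop:confluence_Lamr}) to factor the reduction of $t_0\in\Te{C\hole M}$ through the redex-free reduct of the resource term filling the hole. The only point worth double-checking is that Lemma~\ref{lem:Te_ctxt} is stated for head contexts and the present statement also restricts to head contexts, so the two match exactly and no strengthening is needed. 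This is precisely the payoff announced in the surrounding discussion: rather than developing a theory of syntactic continuity as in the \CbN{} proofs of \cite{Bare,AmadioC98}, the \CbV{} Context Lemma for B\"ohm trees drops out as a one-line corollary of its Taylor analogue.
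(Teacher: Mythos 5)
Your proposal is correct and follows exactly the route taken in the paper: convert the hypothesis via Corollary~\ref{cor:BTs_iff_Ten}, apply the Context Lemma for Taylor expansions (Lemma~\ref{lem:Te_ctxt}), and convert back via the same corollary. Nothing is missing.
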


\begin{proof} It follows from the Context Lemma for Taylor expansions (Lemma~\ref{lem:Te_ctxt}) by applying Theorem~\ref{thm:T_commutes_with_BT} and Corollary~\ref{cor:BTs_iff_Ten}.
\end{proof}

As mentioned in the discussion before Lemma~\ref{lem:Te_ctxt}, both the statement and the proof generalize to arbitrary contexts. Thanks to Remark~\ref{rem:contextlemmaholds}, we only need head contexts in order to prove the following theorem stating that the B\"ohm tree model defined in this paper is adequate for Plotkin's \CbV{} \lam-calculus.

\begin{thm}\label{thm:happy_ending} 
Let $M,N\in\Lam$. If $\BT{M} = \BT{N}$ then $M\obseq N$.
\end{thm}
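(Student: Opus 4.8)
The plan is to reduce the statement, via the previously established machinery, to a single chain of equivalences. By Remark~\ref{rem:contextlemmaholds}, in order to prove $M\obseq N$ it suffices to quantify over \emph{head} contexts: concretely, it is enough to show that for every head context $C\hole-$ with $C\hole M, C\hole N\in\Lam[o]$, the term $C\hole M$ is valuable if and only if $C\hole N$ is valuable. So the first thing I would do is fix such a head context $C\hole-$ and aim to compare the valuability of the two closed instances.

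The second step is to transport the hypothesis $\BT{M}=\BT{N}$ through the context using the Context Lemma for B\"ohm trees (Lemma~\ref{lem:ctxt}). Since $\BT{M}=\BT{N}$, that lemma immediately yields $\BT{C\hole M}=\BT{C\hole N}$ for every head context $C\hole-$; this is the crucial place where the whole Taylor-expansion development pays off, as Lemma~\ref{lem:ctxt} was obtained precisely as a corollary of the Context Lemma for Taylor expansions (Lemma~\ref{lem:Te_ctxt}) via Theorem~\ref{thm:T_commutes_with_BT} and Corollary~\ref{cor:BTs_iff_Ten}.

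The third step is to connect valuability with the shape of the B\"ohm tree. Because $C\hole M$ is closed, being valuable and being potentially valuable coincide (as recalled just after Definition~\ref{def:pot_val}). By Theorem~\ref{thm:MPViffBTMnotbot}, $C\hole M$ is potentially valuable if and only if $\BT{C\hole M}\neq\bot$, and symmetrically for $C\hole N$. Combining this with the equality $\BT{C\hole M}=\BT{C\hole N}$ from the previous step gives the desired equivalence:
\[
	C\hole M\textrm{ valuable}\iff \BT{C\hole M}\neq\bot\iff \BT{C\hole N}\neq\bot\iff C\hole N\textrm{ valuable}.
\]
Since this holds for every closing head context $C\hole-$, Remark~\ref{rem:contextlemmaholds} lets us conclude $M\obseq N$.

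The proof is therefore essentially a chaining of results already in hand, and I do not expect a genuine mathematical obstacle here. The only point requiring care is the matching of the various notions of ``producing a value'': the observational equivalence $\obseq$ is phrased with $\to_{\beta_v}$, valuability is defined with $\msto[\beta_v]$, whereas Theorem~\ref{thm:MPViffBTMnotbot} speaks of \emph{potential} valuability. These are reconciled by the facts recorded after Definition~\ref{def:pot_val}, namely that $M$ is valuable iff $M\msto V$ for some value $V$ (so $\beta_v$- and $\V$-valuability agree) and that valuability and potential valuability coincide on closed terms. Once these identifications are made explicit, the argument is complete.
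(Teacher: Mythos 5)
Your proposal is correct and uses exactly the same ingredients as the paper's proof --- Remark~\ref{rem:contextlemmaholds}, the Context Lemma for B\"ohm trees (Lemma~\ref{lem:ctxt}), Theorem~\ref{thm:MPViffBTMnotbot}, and the coincidence of valuability with potential valuability on closed terms. The only difference is presentational: the paper argues by contradiction while you give the direct chain of equivalences.
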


\begin{proof} Assume, by the way of contradiction, that $\BT{M} = \BT{N}$ but $M\not\obseq N$.
Then, there exists a head context $C\hole-$ such that $C\hole M,C\hole N\in\Lambda^o$ and, say, $C\hole M$ is valuable while $C\hole N$ is not. 
Since they are closed \lam-terms, this is equivalent to say that $C\hole M$ is potentially valuable while $C\hole N$ is not.
By Theorem~\ref{thm:MPViffBTMnotbot}, $\BT{C\hole M} \neq\bot$ and $\BT{C\hole N} = \bot$.
As a consequence, we obtain $\BT{C\hole M}\neq \BT{C\hole N}$ thus contradicting the Context Lemma for B\"ohm trees (Lemma~\ref{lem:ctxt}).
\end{proof}

Notice that the converse implication does not hold --- for instance it is easy to check that $\Delta(yy)\obseq yy(yy)$ holds, but the two \lam-terms have different B\"ohm trees.


\section{Conclusions}

Inspired by the work of Ehrhard~\cite{Ehrhard12}, Carraro and Guerrieri~\cite{CarraroG14}, we proposed a notion of B\"ohm tree for Plotkin's call-by-value \lam-calculus $\lam_v$, having a strong mathematical background rooted in Linear Logic.
We proved that \CbV{} B\"ohm trees provide a syntactic model of $\lam_v$ which is adequate (in the sense expressed by Theorem~\ref{thm:happy_ending}) but not fully abstract --- there are operationally indistinguishable \lam-terms having different B\"ohm trees.
The situation looks similar in call-by-name where one needs to consider B\"ohm trees up to possibly infinite $\eta$-expansions to capture the \lam-theory $\mathcal{H}^*$ and obtain a fully abstract model~\cite[Cor.~19.2.10]{Bare}.
Developing a notion of extensionality for \CbV{} B\"ohm trees is certainly interesting, as it might help to describe the equational theory of some extensional denotational model, and a necessary step towards full abstraction. 
Contrary to what happens in call-by-name, this will not be enough to achieve full abstraction as shown by the counterexample $\Delta(yy)\obseq yy(yy)$ but $\BT{\Delta(yy)}\neq \BT{yy(yy)}$, where extensionality plays no role.
The second and third authors, together with Ronchi Della Rocca, recently introduced in~\cite{ManzonettoPR18} a new class of adequate models of $\lsv$ and showed that they validate not only $=_\V$ but also some $I$-reductions (in the sense of $\lam I$-calculus~\cite[Ch.~9]{Bare}) preserving the operational semantics of \lam-terms.
Finding a precise characterization of those $I$-redexes that can be safely contracted in the construction of a \CbV{} B\"ohm tree is a crucial open problem that can lead to full abstraction.


\section*{Acknowledgment}
  \noindent The authors wish to acknowledge fruitful discussions with Giulio Guerrieri, Luca Paolini and Simona Ronchi della Rocca.

\bibliographystyle{alpha}
\bibliography{include/biblio}
\end{document}